\documentclass[acmsmall,screen, nonacm]{acmart}
\AtBeginDocument{%
  }
\usepackage[utf8]{inputenc}
\usepackage[ruled]{algorithm}
\usepackage[noend]{algpseudocode}
\usepackage{amsmath}
\usepackage{amsfonts}
\usepackage{mathtools}
\usepackage[disable]{todonotes}
\usepackage[ligature,reserved]{semantic}
\usepackage{xspace}
\usepackage{csquotes}

\usepackage[switch]{lineno}

\usepackage{thmtools, thm-restate}

\usepackage{scalerel}
\usepackage{tikz}
\usepackage{adjustbox}

\usetikzlibrary{svg.path}

\pagestyle{plain}

\makeatletter

\makeatother

 \definecolor{orcidlogocol}{HTML}{A6CE39}
\tikzset{
    orcidlogo/.pic={
        \fill[orcidlogocol] svg{M256,128c0,70.7-57.3,128-128,128C57.3,256,0,198.7,0,128C0,57.3,57.3,0,128,0C198.7,0,256,57.3,256,128z};
        \fill[white] svg{M86.3,186.2H70.9V79.1h15.4v48.4V186.2z}
        svg{M108.9,79.1h41.6c39.6,0,57,28.3,57,53.6c0,27.5-21.5,53.6-56.8,53.6h-41.8V79.1z M124.3,172.4h24.5c34.9,0,42.9-26.5,42.9-39.7c0-21.5-13.7-39.7-43.7-39.7h-23.7V172.4z}
        svg{M88.7,56.8c0,5.5-4.5,10.1-10.1,10.1c-5.6,0-10.1-4.6-10.1-10.1c0-5.6,4.5-10.1,10.1-10.1C84.2,46.7,88.7,51.3,88.7,56.8z};
    }
}
\newcommand\orcidicon[1]{\href{https://orcid.org/#1}{\mbox{\scalerel*{
                \begin{tikzpicture}[yscale=-1,transform shape]
                \pic{orcidlogo};
                \end{tikzpicture}
            }{|}}}}

\algblockdefx[Upon]{Upon}{EndUpon}%
[1]{{\bf upon} (#1) {\bf do}}%
{{\bf end upon}}

\algblockdefx[When]{When}{EndWhen}%
[1]{{\bf when} (#1) {\bf do}}%
{{\bf end when}}

\algblockdefx[Receive]{Receive}{EndReceive}%
[1]{{\bf receive} (#1) {\bf do}}%
{{\bf end receive}}

\makeatletter
\ifthenelse{\equal{\ALG@noend}{t}}%
  {\algtext*{EndUpon}}
  {}%
\makeatother

\makeatletter
\ifthenelse{\equal{\ALG@noend}{t}}%
  {\algtext*{EndWhen}}
  {}%
\makeatother

\makeatletter
\ifthenelse{\equal{\ALG@noend}{t}}%
  {\algtext*{EndReceive}}
  {}%
\makeatother

\newcommand\martin[1]{\todo[backgroundcolor=green!23]{Martín: {#1}}}

\newcommand\marga[1]{\todo{Marga: {#1}}}
\newcommand\imarga[1]{\todo[inline]{Marga: {#1}}}

\newtheorem{property}{Property}
\newtheorem{corollary}{Corollary}
\newtheorem{lemma}{Lemma}
\newtheorem{proposition}{Proposition}
\newtheorem{theorem}{Theorem}

\usepackage{xspace}
\usepackage{url}
\usepackage{paralist}


\usepackage{environ}
\usepackage{wrapfig}

\usepackage{bibunits}
\usepackage{hyperref}

\newcommand{\hash}{\mathcal{H}}
\newcommand{\Mroot}{\mathsf{Mroot}}


\newcommand{\repeattheorem}[1]{%
  \begingroup
  \renewcommand{\thetheorem}{\ref{#1}}%
  \expandafter\expandafter\expandafter\theorem
  \csname reptheorem@#1\endcsname
  \endtheorem
  \endgroup
}

\NewEnviron{reptheorem}[1]{%
  \global\expandafter\xdef\csname reptheorem@#1\endcsname{%
    \unexpanded\expandafter{\BODY}%
  }%
  \expandafter\theorem\BODY\unskip\label{#1}\endtheorem
}

\NewEnviron{replemma}[1]{%
  \global\expandafter\xdef\csname replemma@#1\endcsname{%
    \unexpanded\expandafter{\BODY}%
  }%
  \expandafter\lemma\BODY\unskip\label{#1}\endlemma
}
\newcommand{\repeatatlemma}[1]{%
  \begingroup
  \renewcommand{\thelemma}{\ref{#1}}%
  \expandafter\expandafter\expandafter\lemma
  \csname replemma@#1\endcsname
  \endlemma
  \endgroup
}

\NewEnviron{repproposition}[1]{%
  \global\expandafter\xdef\csname repproposition@#1\endcsname{%
    \unexpanded\expandafter{\BODY}%
  }%
  \expandafter\proposition\BODY\unskip\label{#1}\endproposition
}
\newcommand{\repeatatproposition}[1]{%
  \begingroup
  \renewcommand{\theproposition}{\arabic{proposition}}%
  \expandafter\expandafter\expandafter\proposition
  \csname repprop@#1\endcsname
  \endproposition
  \endgroup
}

\newcommand{\PROP}[1]{\textbf{\textit{#1}}\xspace}
\newcommand{\PrAvailability}{\PROP{Availability}}
\newcommand{\PrTermination}{\PROP{Termination}}
\newcommand{\PrValidity}{\PROP{Validity}}

\newcommand{\PROPsub}[2]{\ensuremath{\bf \textit{\textbf{#1}}_{#2}}\xspace}
\newcommand{\PrIntegrityOne}{\PROPsub{Integrity}{1}}
\newcommand{\PrIntegrityTwo}{\PROPsub{Integrity}{2}}
\newcommand{\PrUniqueBatch}{\PROP{Unique Batch}}
\newcommand{\PrCertified}{\PROP{Certified}}
\newcommand{\PrLegality}{\PROP{Legality}}



\newcommand{\Ctt}{\texttt{C}\xspace}

\newcommand{\Rtt}{\texttt{R}\xspace}
\newcommand{\Att}{\texttt{A}\xspace}
\newcommand{\Btt}{\texttt{B}\xspace}
\newcommand{\Ptt}{\texttt{P}\xspace}

\newcommand{\KWD}[1]{\ensuremath{\mathit{#1}}\xspace}
\newcommand{\CC}[1]{\ensuremath{cc_{\KWD{#1}}}\xspace}
\newcommand{\SC}[1]{\ensuremath{sc_{\KWD{#1}}}\xspace}
\newcommand{\SR}[1]{\ensuremath{sr_{\KWD{#1}}}\xspace}
\newcommand{\CR}[1]{\ensuremath{cr_{\KWD{#1}}}\xspace}


\newcommand{\ADV}[1]{#1\xspace}
\newcommand{\advOne}{\ADV{BFT}}
\newcommand{\advTwo}{\ADV{DAC}}
\newcommand{\advThree}{\ADV{Arranger}}

\newcommand{\s}{\ensuremath{S}\xspace}

\reservestyle{\fpm}{\text}
\fpm{fpm[FP mechanism],fpms[FP mechanisms]}
\fpm{fp[FP],fps[FPs]}

\mathlig{<-}{\leftarrow}
\mathlig{==}{\equiv}
\mathlig{++}{\mathop{+\!\!+}}
\mathlig{>>}{\rangle}
\mathlig{~}{\sim}
\reservestyle{\variables}{\text}
\variables{epoch,current,history,theset[the\_set],proposal,valid}


\reservestyle{\setops}{\textsc}
\setops{add,get,Init,Add,Get,BAdd,EpochInc,Broadcast,Deliver,GetEpoch,Propose,Inform,SetDeliver,
  SuperSetDeliver, DoStuff[Start],Consensus}
\setops{translate, Translate}
\setops{SelectSubpath}
\reservestyle{\structs}{\text}
\structs{DPO,BAB,BRB,SBC,DSO}
\structs{ArrangerApi[Arranger], SequencerApi[Sequencer], DCsApi[DC
  members], DCApi[DC member]}

\reservestyle{\stmt}{\textbf}
\stmt{call,ack,drop,return,assert,wait, post, inform, trigger, logger}

\reservestyle{\mathfunc}{\mathsf}
\mathfunc{send,receive,sendbrb[send\_brb],sendsbc[send\_sbc],
  BLSAggregate[AggregateSignatures], sign, broadcast, deliver}

\reservestyle{\messages}{\texttt}
\messages{madd[add],mepochinc[epinc]}

\reservestyle{\api}{\texttt}
\api{apiBAdd,apiAdd,apiGet,apiEpochInc,apiTheSet[{the\_set}],apiHistory}

\reservestyle{\servers}{\texttt}
\servers{S,C,B,M,DAC}

\reservestyle{\schain}{\texttt}
\schain{history,epoch,theset[{the\_set}],add,get,epochinc[{epoch\_inc}],getepoch[{get\_epoch}],tobroadcast[{to\_broadcast}], knowledge, pending, sent, received,proposal,Valid}
\schain{add, translate, allTxs, pendingTxs, maxTime, maxSize,
  signReq, signResp, hashes, setchain, localsetchain[Setchain],
  invalidId, invalidHash, notContacted[left], signatures, getHashAt,
  SubpathChallenged, bisectSubpath,revealSibling, SubpathBisected, init,
  initMultistepMembership, initIntegrity1, initIntegrity2, initDataAvailability, initDecompressAndHash,
  selectSubpath, selectPath, getHashInMiddlePath,
  decompressAndHash,postCompressed, compressedBatch, decompressAndHash} 

\reservestyle{\vars}{\textit}
\vars{prop,propset,h,tx,id,sigs,hash, batchId[id], mt, root,
levels, leaves, top, bottom, middle, pathlength[path\_length],
bottomlevel[bottom\_level], middlelevel[middle\_level],
indexatlevel[index\_at\_level], siblingindex[sibling\_index],
expectedmiddlehash[expected\_middle\_hash], true, false,
bitatpos[bitAtPosition], bt, data, initState, finalState, midState, data}

\reservestyle{\event}{\texttt}
\event{myturn[my\_turn], newepoch[new\_epoch],added,
  timetopost[time\_to\_post],tobatch[to\_batch],
  nextBatch[next\_batch\_id]}

\newcommand{\bottomlevel}{\textit{bottom\_level}}
\newcommand{\pathlength}{\textit{path\_length}}

\newcommand{\inlinefrugal}[1]{{\lstinline[language=Frugal,mathescape]{#1}}}

\hyphenation{pre-sents}


\usepackage{listings, xcolor}

\definecolor{verylightgray}{rgb}{.97,.97,.97}

\lstdefinelanguage{Solidity}{
	keywords=[1]{anonymous, assembly, assert, balance, break, call, callcode, case, catch, class, constant, continue, constructor, contract, debugger, default, delegatecall, delete, do, else, emit, event, experimental, export, external, false, finally, for, function, gas, if, implements, import, in, indexed, instanceof, interface, internal, is, length, library, log0, log1, log2, log3, log4, memory, modifier, new, payable, pragma, private, protected, public, pure, push, require, return, returns, revert, selfdestruct, send, solidity, storage, struct, suicide, super, switch, then, this, throw, transfer, true, try, typeof, using, value, view, while, with, addmod, ecrecover, keccak256, mulmod, ripemd160, sha256, sha3}, 
	keywordstyle=[1]\color{blue}\bfseries,
	keywords=[2]{set,address, bool, byte, bytes, bytes1, bytes2,
	bytes3, bytes4, bytes5, bytes6, bytes7, bytes8, bytes9,
	bytes10, bytes11, bytes12, bytes13, bytes14, bytes15, bytes16,
	bytes17, bytes18, bytes19, bytes20, bytes21, bytes22, bytes23,
	bytes24, bytes25, bytes26, bytes27, bytes28, bytes29, bytes30,
	bytes31, bytes32, enum, int, int8, int16, int24, int32, int40,
	int48, int56, int64, int72, int80, int88, int96, int104,
	int112, int120, int128, int136, int144, int152, int160,
	int168, int176, int184, int192, int200, int208, int216,
	int224, int232, int240, int248, int256, mapping, string, elem,
	uint, uint8, uint16, uint24, uint32, uint40, uint48, uint56,
	uint64, uint72, uint80, uint88, uint96, uint104, uint112,
	uint120, uint128, uint136, uint144, uint152, uint160, uint168,
	uint176, uint184, uint192, uint200, uint208, uint216, uint224,
	uint232, uint240, uint248, uint256, var, void, ether, finney,
	szabo, wei, days, hours, minutes, seconds, weeks, years,
	t_hash, t_batch_tag, t_public_key, element},	
	keywordstyle=[2]\color{teal}\bfseries,
	keywords=[3]{block, blockhash, coinbase, difficulty, gaslimit, number, timestamp, msg, gas, sender, value, now, tx, gasprice, origin, add, epochinc, get, setminus, emptyset},	
	keywordstyle=[3]\color{violet}\bfseries,
	identifierstyle=\color{black},
	sensitive=false,
	comment=[l]{//},
	morecomment=[s]{/*}{*/},
	commentstyle=\color{gray}\ttfamily,
	stringstyle=\color{red}\ttfamily,
	morestring=[b]',
	morestring=[b]"
}

\lstset{
	language=Solidity,
	backgroundcolor=\color{verylightgray},
	extendedchars=true,
	basicstyle=\footnotesize\ttfamily,
	showstringspaces=false,
	showspaces=false,
	numbers=left,
	numberstyle=\footnotesize,
	numbersep=9pt,
	tabsize=2,
	breaklines=true,
	showtabs=false,
	captionpos=b
}


\usepackage{listings, xcolor}

\definecolor{verylightgray}{rgb}{.97,.97,.97}
\definecolor{darkorange}{rgb}{1, 0.549, 0}
\definecolor{darkred}{rgb}{0.545, 0, 0}
\definecolor{darkblue}{rgb}{0, 0, 0.545}

\lstdefinelanguage{Frugal}{
	keywords=[1]{on, player, assert, moves_to, if, then, else,
	playsGame, Game}, 
	keywordstyle=[1]\color{blue}\bfseries,
	keywords=[2]{
	nullPreference, optimality,effect,
	init, pos, effectDivide, effectSelect, minimality, arbitrage,
	trade,
        challengedSubpath, bisectedSubpath}, 
	keywordstyle=[2]\color{teal}\bfseries,
	keywords=[3]{current_player,
	next_player, pl},
        keywordstyle=[6]\color{darkred}\bfseries,
	keywords=[6]{challenger},
        keywordstyle=[7]\color{darkorange}\bfseries,
	keywords=[7]{defender, proposer},
	keywordstyle=[3]\color{violet}\bfseries,
        keywords=[4]{challengeNtValidity, challengeNullPreference,
        challengeOptimality, challengeNetworkTradeVector,
        challengeEffect, challengeValidTrade, challengeL2State,
        challengeAMMState, challengeUBelowBound, challengeUtility,
        challengeNullNetworkTrade, challengeNoBetterEnoughUtility,
        challengeUtilityDefender, challengerUtilityChallenger,d1, d2,
        provideChildren, selectRight, selectLeft, challengeParent,
	provChildren, selectL, selectR,chL2state, chEffect,
	chNtValidity, chU, chU', chUBound, chBetterU, chNTVector,
	chNullPrefence, chNull, chOpt, chValidityTrade,chAMMState,
	chParent, defendEffect, chProfit, chBetterP, chArbitrage,
	defendArbitrage, chSmaller, chMin, chTrade, defNull, chP,
        bisectSubpath, selectSubpath, revealSibling},
        keywordstyle=[4]\color{darkblue}\bfseries,        
        keywords=[5]{L2BlockGame, stepGame, ntValidityGame, 
        membershipProofGame, L2StateGame, effectGame, G,
        arbitrageGame, arbStepGame, multistepMembership},
        keywordstyle=[5]\color{magenta}\bfseries,
	identifierstyle=\color{black},
	sensitive=false,
	comment=[l]{//},
	morecomment=[s]{/*}{*/},
	commentstyle=\color{gray}\ttfamily,
	stringstyle=\color{red}\ttfamily,
	morestring=[b]',
	morestring=[b]"
}

\lstset{
	language=Frugal,
	backgroundcolor=\color{white},
	extendedchars=true,
	basicstyle=\footnotesize\ttfamily,
	showstringspaces=false,
	showspaces=false,
	numbers=left,
	numberstyle=\footnotesize,
	numbersep=9pt,
	tabsize=2,
	breaklines=true,
	showtabs=false,
	captionpos=b
}

\author{Margarita Capretto}
\orcid{0000-0003-2329-3769}
\affiliation{\institution{IMDEA Software Institute}\country{Spain}}
\affiliation{\institution{Universidad Politécnica de Madrid}\country{Spain}}
\email{margarita.capretto@imdea.org}
\author{Martín Ceresa}
\orcid{0000-0003-4691-5831}
\affiliation{\institution{Input Output}\country{Spain}}
\email{martin.ceresa@imdea.org}
\author{Antonio {Fernández Anta}}
\orcid{0000-0001-6501-2377}
\affiliation{\institution{IMDEA Software Institute}\country{Spain}}
\affiliation{\institution{IMDEA Networks Institute}\country{Spain}}
\email{antonio.fernandez@imdea.org}
\author{Pedro Moreno-Sanchez}
\orcid{0000-0003-2315-7839}
\affiliation{\institution{IMDEA Software Institute}\country{Spain}}
\affiliation{\institution{VISA Research}\country{USA}}
\affiliation{\institution{MPI-SP}\country{Germany}}
\email{pedro.moreno@imdea.org}
\author{César Sánchez}
\orcid{0000-0003-3927-4773}
\affiliation{\institution{IMDEA Software Institute}\country{Spain}}
\email{cesar.sanchez@imdea.org}


\newif\ifrevision
\revisionfalse

\usepackage[normalem]{ulem}
\newcommand{\removetextrevision}[1]{\ifrevision \textcolor{red}{\sout{#1}}\else{}\fi}
\newcommand{\addtextrevision}[1]{\ifrevision \textcolor{blue}{#1}\else #1\fi}

\begin{document}

\title{ \bf A Secure Sequencer and Data Availability Committee for
  Rollups (Extended Version)}

\author{}


\ifrevision

\onecolumn
\pagestyle{empty}  

\begin{center}
    {\Large\textbf{List of changes in the revision of our paper}}\\[1em]
\end{center}

\noindent We thank the reviewers for their thorough and valuable feedback. We have addressed all the required revisions listed by the shepherd. All changes in the revised manuscript are highlighted in \addtextrevision{BLUE colored font}.

\vspace{3mm}

\noindent In this document, we describe the changes made to address each of the required revisions.

\begin{table}[h]
\centering
\begin{tabular}{|p{0.45\textwidth}|p{0.3\textwidth}|p{0.2\textwidth}|}
\hline
\textbf{Reviewer Comment/Request} & \textbf{How We Addressed It} & \textbf{Location in Paper} \\
\hline
\hline
1) Please improve the introduction and emphasize the novelty,
  technical contribution, and provide a better overview of the work. &

\begin{asparaitem}
\item Added comparison between our fraud-proofs and previous
  fraud-proofs to emphasize novelty.
\item Explained how our system can be incorporated in current L2s.
\item Added clarification about corrupt replicas.
\item Added table comparing current L2s with our work in
  Appendix B. We kindly ask the reviewers to tell us if the table
  should be included in the Introduction. It was moved to an appendix
  for space purposes.
\end{asparaitem} &

\begin{asparaitem}
\item \hyperlink{fp-comparision:intro}{\textcolor{blue}{It is important to note }}
\item \hyperlink{implementation}{\textcolor{blue}{Therefore, current
      L2s can easily integrate our}}
\item \hyperlink{clarification}{\textcolor{blue}{Once the arranger}}
\item \hyperlink{corrupt:intro}{\textcolor{blue}{Essentially, corrupt
      replicas are}}
\item \hyperlink{contribution}{\textcolor{blue}{Novel fraud-proofs over}}
\item \hyperlink{app:comparison}{
\textcolor{blue}{Comparison of L2s on Ethereum With Our Work}}
 \end{asparaitem}\\
\hline  
2) Formalize the fraud-proof and complete the missing proof. &
\begin{asparaitem}
\item Completed missing proof about property Integrity 2 in the LEAN
  library.
\end{asparaitem}&
\begin{asparaitem}
\item \hyperlink{fp:lean}{
\textcolor{blue}{
  We define two notions of valid DA}}
\item \hyperlink{fp:leanDef}{\textcolor{blue}{History Valid DA}}
\item
  \hyperlink{fp:Lean:FormalDef}{\textcolor{blue}{\(\mathsf{globalValid}\)}}
\item \hyperlink{fp:lean:protocols}{\textcolor{blue}{two simple
      protocols,}}
\item \hyperlink{fp:lean:protocols-diff}{\textcolor{blue}{The main
      difference}}
\item \hyperlink{fp:lean:honest}{\textcolor{blue}{
      In the case of the honest}}
\item \hyperlink{fp:lean:theorem}{Only Valid \textcolor{red}{Local}
    DAs}
\item \hyperlink{fp:lean:th:formalization}{
\textcolor{blue}{
  given history \(H\):}}
\item \hyperlink{fp:lean:conclusion}{\textcolor{red}{the Lean4 library
    is still under development.}}
\end{asparaitem}
\\                                  
\hline                                              
3) Improve the cost analysis with concrete examples. &
\begin{asparaitem}
\item Added new variables to make clearer which parameters are
  controlled by the designers of L2 and which are not.
\item Added relations between new and old variables.
\item Added a new section that gives concrete values to variables
  related to incentives.
\end{asparaitem} &
\begin{asparaitem}
\item \hyperlink{concrete-values:new-vars}{\textcolor{blue}{$C_{y}$
      represents the cost of performing move \(y\)}}
\item \hyperlink{concrete-values:new-relations}{\textcolor{blue}{In the worst case,}}
\item \hyperlink{concrete-values:analysis}{\textcolor{blue}{Analysis
      with Concrete Values}}
\end{asparaitem} \\
  \hline 4) Reviewer A:
  The ideas seem fairly standard and it is not clear if
  there is something novel. Most pressingly, why not directly apply
  the fraud proof techniques of Arbitrum? What are you trying to
  improve over that? Fraud proofs are meant to be used in the
  worst-case so I don't see the motivation for optimizing the fraud
  proof part further. &
\begin{asparaitem}
  \item Explained better current fraud-proofs and how they compare with our fraud-proofs.
\end{asparaitem} &
  \begin{asparaitem}
  \item \hyperlink{fp-comparision:intro}{\textcolor{blue}{It is
        important to note }} 
  \item \hyperlink{fp-comparision:intro-context}{
      \textcolor{blue}{The game consists on bisecting the}}
  \item \hyperlink{fp-comparision:intro}{
      \textcolor{blue}{It is important to note that,}}
  \item \hyperlink{fp:prelims}{\textcolor{blue}{The fraud-proof
        mechanism consists on }}
   \item \hyperlink{fp}{\textcolor{blue}{allows us to divide}}
\end{asparaitem} \\
\hline
\end{tabular}     
\end{table}

\begin{table}[h]
\centering
\begin{tabular}{|p{0.45\textwidth}|p{0.3\textwidth}|p{0.2\textwidth}|}
\hline
\textbf{Reviewer Comment/Request} & \textbf{How We Addressed It} & \textbf{Location in Paper} \\
\hline
\hline
  4) Reviewer B:
  I wonder about the -implications- of your work in
  practice. Can existing solutions implement your protocols easily? It
  could be of interest to add something to the main body or indeed
  appendix that briefly describes existing implementations of
  arrangers and their trust assumptions. For example, you mention
  (line 489) that '[m]ost existing L2s do not implement a centralized
  arranger' -- which ones do and do not, and to what extent? &
\begin{asparaitem}          
\item Added how current L2s can implement our protocol.
\item Added Appendix B with the level of decentralization of existing L2s.                        
\end{asparaitem} &          
\begin{asparaitem}
\item \hyperlink{implementation}{\textcolor{blue}{Therefore, current
      L2s can easily integrate our}}
\item \hyperlink{app:comparison}{
    \textcolor{blue}{Comparison of L2s on Ethereum With Our Work}}
\end{asparaitem}
  \\
\hline
  4) Reviewer B:
  In the abstract and on page 1, you talk about how consensus
protocols inherently suffer from limited throughput. I would be
  careful and more specific here, as there are solutions in certain
  network settings that scale to hundreds of thousands of transactions
  per second (exceeding VISA's load, for example), and in some
  settings one can imagine ZK rollups/SNARKs being slower. &
\begin{asparaitem}
\item Relaxed claims about consensus protocols inherently
  suffering from limited throughput.
\end{asparaitem} &
\begin{asparaitem}
\item \hyperlink{consensus-limitations:abstract}{\textcolor{blue}{partly
      due to the throughput}}
  \item
    \hyperlink{consensus-limitations:intro}{\textcolor{red}{inherent
      to}}
\end{asparaitem}
  \\
  \hline
  4) Reviewer B:
  In general it seems like the fraud-proof mechanisms described could
  be formalized a bit more, for example with some pseudocode and the
  notion of rounds, just to clarify any ambiguity (potentially for an
  appendix). &
 \begin{asparaitem}
 \item Added appendices C and D with figures describing the
   fraud-proofs, and pseudocode for the smart-contracts that arbitrate
   the fraud-proof games and pseudocode for the honest player
   strategy.
\end{asparaitem}&
\begin{asparaitem}
\item \hyperlink{app:pseudocodes}{\textcolor{blue}{Fraud-Proofs:
      Pseudocodes}}
\item \hyperlink{app:figures}{\textcolor{blue}{Fraud-Proofs: Figures}}
\end{asparaitem}  \\
\hline
  4) Reviewer B:
  Related work that could be worth mentioning includes accountability
  in consensus, and MPC with identifiable abort, which deal with
  detecting failures/malicious behavior. &
\begin{asparaitem}
  \item Added comparison in related work section. 
\end{asparaitem}&
 \hyperlink{relatedwork}{
\textcolor{blue}{
  Other works that deal}}\\
 \hline
  4) Reviewer B:
  I am not an expert in game theory, so I cannot say whether your
  treatment of incentives is sufficiently formal. Is this standard? &
\begin{asparaitem}
\item Emphasized that our analysis is simple, and added reference on
  how it can be extended.
\end{asparaitem} &
\hyperlink{treatment-incentives}{\textcolor{blue}{We give here an
                   initial understanding of incentives}} \\
  \hline
  4) Reviewer B:
  Section 6.2: you can prove in zero knowledge that w is an encryption of b under key k and that F(k) = y for some function k more efficiently than using SNARKs for SHA-256 (which can be expensive for the prover) if you allow F to be algebraic and consider a specific encryption scheme. &
\begin{asparaitem}
\item Added a new discussion on using algebraic zero-knowledge contingent payments.
\end{asparaitem}
& 
  \hyperlink{com:test}{\textcolor{blue}{\Rtt creates a fresh}} \\
\hline
  4) Reviewer B:
  41: what is a 'modern blockchain'? &
\begin{asparaitem}                                     \item Replaced by 'smart contract-enabled blockchains' 
\end{asparaitem} &
\hyperlink{modernblockchains}{\textcolor{blue}{smart contract-enabled blockchains}}
\\
\hline
  4) Reviewer B:
  44-47: is this distinction well-known? &
\begin{asparaitem}
\item Added citation to Ethereum documentation.
\end{asparaitem}
& 
  \hyperlink{cite-ethblocksize}{\textcolor{blue}{17}}
  \\
  \hline
  \end{tabular}     
\end{table}

\begin{table}[h]
\centering
\begin{tabular}{|p{0.45\textwidth}|p{0.3\textwidth}|p{0.2\textwidth}|}
  \hline
  \textbf{Reviewer Comment/Request} & \textbf{How We Addressed It} & \textbf{Location in Paper} \\
  \hline
  \hline
  4) Reviewer B:
  402/403: How are identifiers chosen in practice? can an identifier
  just be the hash value itself? If not, how do parties agree on them?&
\begin{asparaitem}
\item Added clarification that hashes cannot be identifiers because they are used to
  order batch.
\item Explained how each arranger implementation agree on the identifier.
\end{asparaitem}
& 
\begin{asparaitem}
\item \hyperlink{identifiers}{
  \textcolor{blue}{An attempt }}
\item \hyperlink{identifiers:centralized}{\textcolor{blue}{assigning
      unique }}
\item \hyperlink{identifiers:semidecentralized}{\textcolor{blue}{assigns unique identifier}}
\item \hyperlink{identifiers:decentralized}{\textcolor{blue}{of the
  batch agreed}}
\end{asparaitem}
  \\
\hline  
  4) Reviewer B:
  434-435: Formally, validity seems a bit vague: what is a 'client', for example. &
\begin{asparaitem}
\item Replaced client by L2 user (defined in Section 2).
\end{asparaitem}
& 
  \hyperlink{validity}{\textcolor{blue}{an L2 user}}
  \\
\hline
  4) Reviewer B:
  437-438: 'a transaction request' I assume can't appear in a 'legal
  batch tag', but rather a batch itself, no? &
\begin{asparaitem}
\item Added 'a batch corresponding with'.
\end{asparaitem}
& 
  \hyperlink{integrityTwo}{\textcolor{blue}{a batch corresponding with}}
  \\
\hline
  4) Reviewer B:
  469-470: 'by computing H(b')' -- here you are computing a hash of b', but I thought the hash value was the -Merkle root-, not just the hash? &
\begin{asparaitem}
\item We defined and used \(\Mroot(b’)\) when referring to the Merkle root of a batch b’.
\end{asparaitem}&
\begin{asparaitem}
\item \hyperlink{merkleroot}{\textcolor{blue}{Given a batch \(b\) we consider}}
\item \hyperlink{mroot:use1}{\textcolor{blue}{$\<h>=\Mroot(b)$}}
\item \hyperlink{mroot:use2}{\textcolor{blue}{$\<h>=\Mroot(b')$}}
\end{asparaitem}  \\
  \hline
  4) Reviewer B:
  505-506: calling it 'more efficient' seems unqualified to me
 &
\begin{asparaitem}
\item  Replaced 'more efficient' with 'variant of Byzantine Consensus that provides
  high throughput'.
\end{asparaitem}
& 
  \hyperlink{SBC:throughput}{\textcolor{blue}{variant of Byzantine Consensus that provides
  high throughput}}
  \\
  \hline
  4) Reviewer B:
  515: 'As with Byzantine consensus, SBC assumes that fewer than
  one-third' -- this depends on several factors, like your network
  assumptions etc. See https://eprint.iacr.org/2018/754.pdf (page 11,
  Figure 1) for a great survey. &
  \begin{asparaitem}
  \item Removed 'As with Byzantine consensus'.
  \end{asparaitem} &
  \hyperlink{SBC:assumption}{\textcolor{red}{As with Byzantine consensus,}} \\            
  \hline
  4) Reviewer B:
  601-603: which batch tag is discarded? the second? both? &
\begin{asparaitem}                                                             
\item  Added a footnote clarifying which batch is discarded. 
\end{asparaitem} &
\hyperlink{discardedbatch}{\textcolor{blue}{In the second case,}}
\\
\hline
  4) Reviewer B:
  642-643 'with a predetermined time per game' -- how is this chosen
  in practice? is there any precedent that does this here in the
  blockchain space? &
\begin{asparaitem}
\item Added that this is already done in L2s.
\end{asparaitem}  &
\hyperlink{clock}{\textcolor{blue}{which is the time mechanism}}
\\
\hline
  4) Reviewer B:
  807-808: 'and A is rewarded for exposing the fraud', Section 5.3 --
  as far as I can tell this is the first mention in the main body of
  incentives, which is maybe out of place given the next section? &
                                                                    \begin{asparaitem}
                                                                      \item
                                                                        Removed. 
                                                                      \end{asparaitem}
                                                                   &
  \hyperlink{removed}{\textcolor{red}{and \Att is rewarded}}
\\
\hline
  4) Reviewer B:
  1062/1063: 'more rewarding for replicas' -- why? because it is cheaper?&
  \begin{asparaitem}
  \item Explained why is more rewarding for replicas.
  \end{asparaitem} &
  \hyperlink{reward}{\textcolor{blue}{Arranger replicas get}}
\\
  \hline
  4) Reviewer B:
  Editorial comments &
                       \begin{asparaitem}
                       \item Fixed.
                       \end{asparaitem}
                                                                   &
\\
\hline
4) Reviewer B: What is the point of the multi-step membership mechanism? &
\begin{asparaitem}
\item Added motivation for multi-step membership mechanism.
\end{asparaitem}&
\hyperlink{multistep}{\textcolor{blue}{The multi-step membership}}\\
\hline
  4) Reviewer C:
   The only thing I believe would
help with the paper's exposition is to formalize and provide in
pseudocode all the different FPs (even if on the Appendix), as well as
the flow of when the games are triggered, by whom and with what
required inputs.  This information currently exists in text but needs
to be recovered by the reader, instead it would be better to be
  provided in a clear, concise manner as pseudocode.&
\begin{asparaitem}
\item Added Appendices C and D with figures showing the flow of the
  games, and pseudocode for contracts arbitrating games and honest
  strategy for players.
\end{asparaitem}&
\begin{asparaitem}
\item \hyperlink{app:pseudocodes}{\textcolor{blue}{Fraud-Proofs:
      Pseudocodes}}
\item \hyperlink{app:figures}{\textcolor{blue}{Fraud-Proofs: Figures}}
\end{asparaitem}  \\                                              
\hline                                               \end{tabular}     
\end{table}

\begin{table}[h]
\centering
\begin{tabular}{|p{0.45\textwidth}|p{0.3\textwidth}|p{0.2\textwidth}|}
\hline
\textbf{Reviewer Comment/Request} & \textbf{How We Addressed It} & \textbf{Location in Paper} \\
\hline
  \hline
       
  4) Reviewer C:
  The only question I had from proofs is with respect to the
implications of Prop.5 with respect to incentives. Could you clarify
what the different $cc_x$ values would be in practice for different
batches?  Could Prop.5 imply that an adversary can construct an
illegal batch tag, costly enough that no honest agent A has sufficient
tokens to prove it is illegal? if not, how? This was unclear to me but
I allow the benefit of the doubt awaiting your response.
 &
\begin{asparaitem}
\item Explained why it is impossible to construct such illegal batch.
\end{asparaitem}
& 
  \hyperlink{implications-prop-five}{\textcolor{blue}{Since the required budget to discard illegal or unavailable}} \\
  \hline

  4) Reviewer D:
  The introduction feels very long and should be more concise. Also
  the comparison to current state-of-the-art should be better
  highlighted, either via a table or a Figure that summarizes the
  commonalities as well as the differences of the proposed solution to
  already existing solutions.
  Unclear overview of novel contributions as compared to existing
  works. What are the novel concrete contributions as to current
  state-of-the-art solutions that are already employed or have been
  proposed in literature?
                                  My greatest concern is that it is not clear what is the exact novelty
as compared to existing solution of L2 rollups that follow the
separation of sequencer and DAC. Perhaps adding a table that compares
the proposed work to existing works and in what elements the proposed
solution is similar/differs would further help in understanding what
is the exact novelty of the proposed paper.
&
                                               \begin{asparaitem}
\item Added comparison between our fraud-proofs and previous
  fraud-proofs to emphasize novelty.
\item Highlighted the novelty of our approach
\item Added table comparing current L2s with our work in
  Appendix B. We kindly ask the reviewers to tell us if the table
  should be included in the Introduction. It was moved to an appendix
  for space purposes.
\end{asparaitem} &
\begin{asparaitem}
\item \hyperlink{fp-comparision:intro}{\textcolor{blue}{It is important to note }}
\item \hyperlink{clarification}{\textcolor{blue}{Once the arranger}}
\item \hyperlink{contribution}{\textcolor{blue}{Novel fraud-proofs over}}
\item \hyperlink{app:comparison}{
\textcolor{blue}{Comparison of L2s on Ethereum With Our Work}}
 \end{asparaitem}\\
  \hline
  4) Reviewer D:
The cost analysis would benefit from a concrete example over a
  sequence of time perhaps to understand how costs could be covered
  and how much it would cost to actually incentivize sequencers and
  DACs while understanding the concrete effects and perhaps increments
  on transaction fees to end users.
  Evaluation is lacking concrete results to better understand the
  costs associated to the incentives and the frauds-proof mechanisms. What impact would the proposed payment-based incentives have to
  end-users and how could they be funded? &
  \begin{asparaitem}
\item Introduced new variables to clarify which parameters are
  controlled by the designers of L2 and which are not.
\item Defined relations between new and existing variables.
\item Added new section with concrete values for variables related
  with incentives and the price L2 users have to pay to use the system.
\end{asparaitem} &
\begin{asparaitem}
\item \hyperlink{concrete-values:new-vars}{\textcolor{blue}{$C_{y}$
      represents the cost of performing move \(y\)}}
\item \hyperlink{concrete-values:new-relations}{\textcolor{blue}{In the worst case,}}
\item \hyperlink{concrete-values:analysis}{\textcolor{blue}{Analysis
      with Concrete Values}}
\end{asparaitem}\\
  \hline
4) Reviewer D:  Adding more examples or
graphical illustrations on the overall protocol that has been proposed
and the individual fraud-proof mechanisms would greatly improve the
  readability of the paper.&
\begin{asparaitem}
\item Added Appendices C and D with figures showing the flow of the
  games, and pseudocode for contracts arbitrating games and honest
  strategy for players.
\end{asparaitem}&
\begin{asparaitem}
\item \hyperlink{app:pseudocodes}{\textcolor{blue}{Fraud-Proofs:
      Pseudocodes}}
\item \hyperlink{app:figures}{\textcolor{blue}{Fraud-Proofs: Figures}}
\end{asparaitem}  \\                                                              \hline
4) Reviewer D: The authors make the separation between corrupt and byzantine
behavior. However, to my understanding. Corrupt feels more like a
subset of byzantine, as in practice the behavior is
indistinguishable. Not sure what impact the difference has on security
  as evaluating byzantine behavior should be enough.&
\begin{asparaitem}
\item Added motivation for corrupt replicas and adversary that
  controls some replicas that are just corrupt.
\end{asparaitem}&                                                    
 \hyperlink{corrupt}{\textcolor{blue}{The previous two adversaries capture}  }                                    \\
  \hline
4) Reviewer D:  
Once a batch is considered incorrect/byzantine, it is not clear how
transactions are reverted, especially given that the proposed solution
follows an optimistic approach meaning that new L2 block can be
continuously build on top of blocks which have not yet been
proven. Given that observation, how does the rewinding work? Are all
transactions invalidated? Only a subset? What is malicious
transactions have then impact on benign transactions? Are benign ones
also unwind? How is this chain of dependencies created? And how does
it align with the proposed solution?&
\begin{asparaitem}
\item Added a note in the Introduction stating that when a batch is
  discarded, all L2 blocks executing its transactions are also
  discarded. This was already mentioned in Section 4. Benign
  transactions are also unwind, but the properties of correct arranger
  guarantee that they are eventually
  executed. (See end of Section 3.1: \hyperlink{correctArrangers}{Correct arrangers ...})
\item Clarified that the chain of dependencies is created by identifiers.
\end{asparaitem}&
 \begin{asparaitem}
 \item \hyperlink{discarded}{\textcolor{blue}{(along with all L2 blocks executing its transactions)}} 
 \item \hyperlink{identifiers}{\textcolor{blue}{An attempt }}
 \end{asparaitem}\\
  \hline
4)Reviewer D: While the authors do provide mechanization of some of the
fraud-proofs, these do not model time. This seems counterintuitive and
further arguments on why time can be safely ignored would be
appreciated.&
\begin{asparaitem}
\item Clarified why we do not model time.
\end{asparaitem}&    
 \hyperlink{fp:time}{\textcolor{blue}{We focus on the correctness}  } \\
 \hline
  
\end{tabular}
\caption{Summary of revisions addressing the required changes}
\label{tab:revisions}

\end{table}

\clearpage  

\twocolumn  

\newpage
\fi

\begin{abstract}
  %
  %
  Blockchains face a scalability limitation,\removetextrevision{ caused by the
  intrinsic throughput limitations of consensus protocols and the
  desire of decentralization.}\hypertarget{consensus-limitations:abstract}{}
  \addtextrevision{partly
    due to the throughput limitations of consensus protocols,
    especially when aiming to obtain a high degree of
    decentralization.}
  Layer 2 Rollups (L2s) are a faster alternative to conventional
  blockchains.
  L2s perform most computations offchain using
  \addtextrevision{minimally} blockchains (L1)
  \removetextrevision{minimally} under-the-hood to guarantee
  correctness.
  %
  A \emph{sequencer} is a service that receives offchain L2
  transaction requests, batches these transactions, and commits
  compressed or hashed batches to L1.
  Using hashing needs less L1 space\removetextrevision{, }\addtextrevision{---}which is beneficial for gas cost\removetextrevision{, }\addtextrevision{---}but requires a data availability committee (DAC) service \removetextrevision{in charge
  of translating}\addtextrevision{to translate} hashes into their corresponding batches of
  transaction requests.
  The behavior of sequencers and DACs influence the evolution of the
  L2 blockchain, presenting a potential security threat and delaying
  L2 adoption.

  We propose in this paper fraud-proof mechanisms, arbitrated by L1
  contracts, to detect and generate evidence of dishonest behavior of
  the sequencer and DAC.
  We study how these fraud-proofs limit the power of adversaries that
  control different number of sequencer and DACs members, and provide
  incentives for their honest behavior.
  We designed these fraud-proof mechanisms as two player games.
  \hypertarget{fp-comparison:abstract}{} \addtextrevision{Unlike the
    generic fraud-proofs in current L2s (designed to guarantee the
    correct execution of transactions), our fraud-proofs are over
    pre-determined algorithms that verify the properties that
    determine the correctness of the DAC.
    Arbitrating over concrete algorithms makes our fraud-proofs more
    efficient, easier to understand, and simpler to prove
    correct.}\marga{not referenced in table}
  We provide as an artifact a mechanization in LEAN4 of our
  \addtextrevision{fraud-proof} games, including (1) the verified strategies that honest
  players should play to win all games as well as (2) mechanisms to
  detect dishonest claims.
  \end{abstract}

\maketitle

\section{Introduction}
\label{sec:intro}
%
%
\emph{Distributed ledgers} (also known as \emph{blockchains}) were
first proposed by Nakamoto in 2009~\cite{nakamoto06bitcoin} in the
implementation of Bitcoin, as a method to eliminate trusted third
parties in electronic payment systems.
A current major obstacle for a faster widespread adoption of
blockchain technologies in some application areas is the limited
scalability of\removetextrevision{modern blockchains}
\hypertarget{modernblockchains}{}\addtextrevision{smart contract-enabled blockchains}.
This is due to (1) the limited throughput\removetextrevision{inherent
to} \hypertarget{consensus-limitations:intro}{}\addtextrevision{of} Byzantine
consensus
algorithms~\cite{Croman2016ScalingDecentralizedBlockchain,Tyagi@BlockchainScalabilitySol},
and (2) the limitation in the block size due to the desire of
decentralization.
\addtextrevision{Issue (1)}\removetextrevision{The former} limits the number of blocks per second, while the
decentralized validation limits the size of the
blocks~\addtextrevision{\cite{ethBlockSize}}.\hypertarget{cite-ethblocksize}{}
For example, Ethereum~\cite{wood2014ethereum}---one of the most
popular blockchains---is limited to less than 4 blocks per minute,
each containing less than two thousand transactions.

%
%
Layer 2 (L2) rollups provide a faster alternative to blockchains, like
Ethereum, while still offering the same interface in terms of smart
contract programming and user interaction.
%
%
L2 rollups seek to perform as much computation as possible offchain
with the minimal blockchain interaction---in terms of the number and
size of invocations---required to guarantee a correct and trusted
operation.
%
%
L2 rollups work in two phases.
In the first phase, users inject transaction requests communicating
with a service called a \emph{sequencer}, which orders the transaction
requests and packs them into batches.
After creating a batch, the sequencer compresses the batch and injects
the result into the underlying blockchain (L1).
Once the batch is posted to L1 the transaction order is determined.
In the second phase, the effects of executing transaction batches are
computed offchain by agents called State Transition Functions~(STFs).
STFs are independent parties that compute L2 blocks from batches and
post the resulting state (that is, the state of the L2 blockchain)
into L1.
There are two main categories of L2 rollups:
\begin{compactitem}
\item \emph{ZK-Rollups:} STFs post zero-knowledge proofs that
  encode the correctness of the result obtained after computing the
  transactions in the batch. These proofs are verified by the L1
  contract. Upon successful validation the new L2 block consolidates.
\item \emph{Optimistic Rollups:} STFs post L2 blocks which are
  optimistically assumed to be correct, delegating block validation on
  fraud-proof mechanisms.
\end{compactitem}
%
The most prominent Optimistic Rollups based on their market
share~\cite{l2beat} are Arbitrum One~\cite{ArbitrumNitro},
Base~\cite{base} and OP mainnet~\cite{optimism}.
Popular ZK-Rollups include Starknet~\cite{starknet} and zkSync
Era~\cite{zksyncera}.

Optimistic Rollups include an arbitration process to solve
\emph{disputes}.
STFs place a stake when they propose a new L2 block into L1.
Since L2 blocks could be incorrect (that is, \addtextrevision{they
  could maliciously encode incorrectly be outcome of executing the
  transactions}\removetextrevision{they could be erroneous or
  dishonest claims}), competing STFs are given a fixed interval of
time to challenge proposed blocks.
When an STF challenges a block it also places a stake.
The arbitration process is a game governed by an L1 smart contract,
which is played between competing STFs (the proposer and the
challenger).
\hypertarget{fp-comparision:intro-context}{}\addtextrevision{The game consists on bisecting the execution trace of
  transactions in the disputed L2 block until a dispute over a single
  instruction is reached, which can be directly verified in L1.
  In other words, the game arbitrates over the finite execution of an
  arbitrary program (the sequence of smart contract executed).}
The arbitration process ensures that a single honest participant can
always win the dispute, if it plays properly.
%
%
The losing party loses the stake and the winner receives a portion of
the loser's stake as a compensation.
Note that winning a game does not necessarily mean that the winner
player is right (as an agent can also play incorrectly or stop playing
and lose on purpose).
Therefore, L2 blocks in optimistic rollups consolidate when it has
stakes after all challenges end.
%
This guarantees that a single honest participant can enforce that the
L2 blockchain evolution only contains honest blocks.

%
To increase scalability even further, the sequencer in some modern L2
rollups posts hashes of batches---instead of compressed
batches---dramatically reducing the size of the L1 blockchain
interaction.
However, using hashes to encode batches requires an additional data
service---called \emph{data availability committee} (DAC)---to
translate hashes back into their corresponding batches.
%

ZK-Rollups that rely on DACs are known as \emph{Validiums}, which
include Sophon~\cite{sophon} and Lens~\cite{lens}.
%
Optimistic Rollups that use DACs are called \emph{Optimiums}, such as
Arbitrum Nova~\cite{ArbitrumNitro}.~\footnote{See~\cite{l2beat} for a
  complete list of L2 rollups on top of Ethereum.}

To simplify notation, in the rest of the paper we simply use L2 to
refer to either Optimistic Rollups or ZK-Rollups that post hashes and
have a DAC, (i.e.  we use L2 to refer to Optimiums and Validiums).
Following~\cite{capretto2025decentralized}, \emph{we use the term
  \textbf{arranger} to refer to the combined service formed by the sequencer
  and the data availability committee}.
Therefore, the arranger service receives transaction requests, creates
batches containing many requests, commits the hash of these batches to
L1, and is responsible for translating the hashes back into batches
upon request.
\hypertarget{clarification}{}\addtextrevision{Once the arranger has posted batches and translates
  their contents, the STFs can proceed to compute the effects of
  executing the transactions in the batch. That is, arrangers}\removetextrevision{Arrangers} are common to Optimiums and Validiums because they not
execute transactions, which is where these L2s differ.

Arrangers have the power to influence both liveness and safety of the
L2.
For example, arrangers can ignore transactions or users, fail to
submit hashes or provide incorrect data that does not correspond to
batches of valid transaction requests and collude with STFs to post
invalid L2 blocks that are indisputable.
Arrangers can also try to delay the L2 by not responding requested
translation.
In order to prevent censorship of transaction requests, some L2s
provide mechanisms to bypass the sequencer and add transaction
requests directly in L1.
However, to the best of our knowledge, there are no mechanism to
detect and prevent in L1 safety violations from arrangers.

\vspace{0.5em}
\noindent\textbf{The Problem.}
%
\emph{In this paper we attack the problem of reducing the power of
  arrangers over the evolution of L2s when trust assumptions are
  violated.}

\vspace{0.5em}
\noindent\textbf{Our Solution (overview).}
%
\emph{We provide a collection of fraud-proof mechanisms to detect
  violations of safety properties and to enforce the correct evolution
  of the L2, and incentives to promote correct behavior.}
\vspace{0.01em}

We adopt an open permissioned model~\cite{Crain2021RedBelly} where
permissionless L2 users can issue transaction requests to the
permissioned arranger servers (which we call \emph{replicas}).
This model can also be adapted to a permissionless setting with
committee sortition~\cite{gilad2017algorand} without significant
modifications.
We consider a refinement of the Byzantine failure
model~\cite{Lamport1982Byzantine}, in which there are two types of
faulty replicas: \emph{Byzantine} replicas and \emph{corrupt}
replicas.
Non faulty replicas are called \emph{honest}.
The difference between Byzantine and corrupt replicas is that
Byzantine replicas can behave arbitrarily at all points in time, while
corrupt replicas cannot interfere with honest replicas in the
agreement of batches.
Corrupt replicas can misbehave in other ways during the protocol, for
example they can sign invalid batches or refuse to translate hashes.
\hypertarget{corrupt:intro}{}\addtextrevision{Essentially, corrupt replicas are less powerful than Byzantine replicas.}
This distinction between Byzantine and corrupt replicas allows us to
study adversarial models with varying degrees of power and evaluate
their impact on the properties that our system guarantees.
In particular, we analyze three different adversarial models.
The simplest \removetextrevision{one of these adversaries,}\addtextrevision{adversary} does not violate the trust
assumptions of the arranger, and thus the arranger remains correct and
all safety and liveness properties hold.
In this case, only a fraud-proof mechanism to discard batches without
enough signatures is used.
At the other extreme, we consider an adversary which can control all
arranger replicas.
In theory, this compromised arranger does not offer any guarantees.
However, with our fraud-proofs any honest agent can guarantee that
only correct batches consolidate, and prove in L1 violations to safety
properties, exposing that the arranger is compromised.
Still, if the compromised arranger satisfies the safety properties but
violates the liveness property it cannot be detected or proved in L1.
Finally, we examine an adversary whose power sits in the middle of the
previous two.
This adversary \removetextrevision{control}\addtextrevision{controls} enough replicas to generate batches with the
required signatures, but at the same time honest replicas can also
agree on batches and have these batches signed with enough signatures,
\addtextrevision{because not all adversary controlled replicas are
  Byzantine, some are just corrupt}.\marga{not referenced in table}
In this case, even if all batches posted by the malicious replicas
controlled by the adversary are ``correct'' but different from batches
agreed by the honest replicas, honest replicas can expose that the
arranger is compromised.

First, we propose fraud-proof mechanisms, based on Refereed Delegation
of Computation~\cite{canetti2011practical,canetti2013refereed}, to
prove that a batch posted by the arranger is incorrect or unavailable.
If a batch or hash is proven incorrect, all replicas involved lose
their stake and the batch is discarded
\hypertarget{discarded}{}\addtextrevision{(along with all L2 blocks
  executing its transactions)}.
Additionally, our fraud-proof mechanisms can generate undeniable
proofs of fraud, which could be used as evidence for requiring to
replace faulty replicas.
More importantly, a single honest agent (with enough resources to pay
the L1 fee to play the moves of the game) can use our fraud-proof
mechanisms to enforce that incorrect or unavailable batches do not
consolidate, guaranteeing that all L2 blocks can be computed and
disputed, \emph{even if all arranger replicas are faulty.}
That is, our fraud-proof mechanisms can be used (1) to guarantee
safety properties of arrangers even when the trust assumption are
violated and also (2) as a deterrent for replicas from being faulty.
As consequence, we obtain a L2 solution where a single honest agent
can guarantee safety of the entire system, rather than just the
execution part (where the STFs compute the effects), as is the case in
current L2s.

We modeled our solution in LEAN4, and prove that a
single honest agent is enough to prevent faulty assertions.
LEAN4 is a proof-assistant aimed to close the gap between automated and
interacting theorem proving~\cite{Moura.2021.Lean4}.
The mechanization involves modeling assertions, the possible actions
agents can take, and mainly, the construction of fraud-proofs when
faulty assertions are detected.
\removetextrevision{Harnessing the computing power of LEAN4}
We can directly run our
verified strategies that honest agents can use to win all challenges
against faulty agents.

\hypertarget{fp-comparision:intro}{}
\addtextrevision{ It is important to note that, unlike fraud-proof
  mechanisms in current L2s, we propose fraud-proofs over
  pre-determined algorithms which verify specific properties.
  Fraud-proofs in current L2s are over the execution trace of arbitrary
  algorithms (as transactions can invoke arbitrary contracts during
  their execution) and require: (1) reasoning about the execution of an
  interpreter of smart contracts and (2) the ability to extract the
  state after each instruction, not just the final result, to provide
  these states in the trusted L1 arbitrator contract as required by the game.
  In contrast, our fraud-proofs exploit that we arbitrate over
  pre-determined algorithms, dividing the execution of these algorithm
  into well defined high-level blocks.
  This approach has the following advantages when compared to
  fraud-proofs for arbitrary algorithms:
  \begin{compactitem}
  \item Modularity and clarity: as each building block is
    well-defined, each move and resulting position in the game is
    easier to understand and formalize.
  \item Efficiency: our fraud-proofs can process blocks of
    instructions instead of having to reason at the instruction level,
    which reduces the total number of moves of the game.
  \item Formality: the simplicity of our fraud-proofs, compared to
    those currently used in L2s, enabled us to formalize our games and
    prove their correctness in a LEAN4 library of around 5000 lines of
    code.
    As far as we know, fraud-proofs used in Optimistic Rollups and
    Optimium have not been formalized yet and would likely be a much
    more complex endeavor.
\end{compactitem}
Although fraud-proofs are meant to be a deterrent, only to be executed
in the worst case scenario, an incorrect implementation may lead to
the existence of fraud-proofs that invalidate correct blocks or the
impossibility to generate fraud-proofs of incorrect blocks, rendering
the whole L2 scheme incorrect. To reason about fraud proofs clear and
modular approaches are needed.}

Second, we introduce \emph{incentives} for arranger replicas to
participate in the protocol, e.g. including payments for generating
and signing hashes, posting correct batches into L1, and performing
reverse translations.
These replicas place stakes when posting batch hashes and are rewarded
for batches that consolidate.

By combining incentives with fraud-proof mechanisms, we create a
motivation for rational agents to behave honestly.
To the best of our knowledge, this is the first work to introduce
incentives and fraud-proof mechanisms for arrangers (sequencer and
DACs) of L2s.
Our incentives and fraud-proof mechanisms are general and do not
depend on concrete implementation of the arranger \addtextrevision{or
  how the execution part is handled by the STFs}.
\hypertarget{implementation}{}
\addtextrevision{Therefore, current L2s can easily integrate our
  protocol.
  L2s just need to (1) replace the existing contract that receives
  batches from their arranger and (2) deploy new contracts to govern
  the fraud-proofs of our protocol.
  These new fraud-proofs focus on properties of the batches posted by
  the arranger, rather than transaction execution, ensuring that they
  do not conflict with existing fraud-proofs.
}
\imarga{Add Arbitrum as example?}

 
\paragraph{\textbf{Contributions.}}
In summary, the contributions of this paper are:
\begin{enumerate}[(1)]
  \hypertarget{contribution}{}
\item \addtextrevision{Novel fraud-proofs over pre-determined
    algorithms which verify specific arranger properties in
    Sections~\ref{sec:fraudproofs}};
\item Economic incentives including payments and fraud-proof
  mechanisms to detect protocol violations and punishments for
  faulty replicas in Sections~\ref{sec:fraudproofs} and~\ref{sec:incentives};
\item An analysis of three adversary models and their limitations and
  impacts on the evolution of the L2 blockchain in Section~\ref{sec:threat-models};
\item One artifact: a library in LEAN4 mechanizing fraud-proof
  mechanisms proving that honest players always win.
\end{enumerate}

\paragraph{\textbf{Structure.}}
The rest of the paper is organized as follows.
Section~\ref{sec:prelim} states our assumptions about L1s, presents
Optimistic Rollups and Optimiums, describes the computation model, and briefly presents Merkle trees.
Section~\ref{sec:api} describes the concept of arranger.
Section~\ref{sec:solution-overview} gives an overview of our proposal
to limit the power of arrangers using: (1) fraud-proof mechanisms,
explained in detail in Section~\ref{sec:fraudproofs}, and (2) economic
incentives, studied in Section~\ref{sec:incentives}.
Section~\ref{sec:threat-models} presents three threat models and
analyzes their impact in the evolution of L2s.
Section~\ref{sec:related-work} compares with related work.
Finally, Section~\ref{sec:conclusion} concludes.


\section{Definitions. Model of Computation}\label{sec:prelim}

We state now our assumptions about L1s, briefly present Optimistic
Rollups and Optimiums, describe the computation model and present an
overview of Merkle trees.

\subsection{Assumptions about L1}

We assume that the L1 ensures both liveness and safety.
Specifically, while the system tolerates temporary censorship of L1
transaction requests and reordering of L1 transactions, it guarantees
that every transaction submitted to L1 is eventually processed
correctly.
The L1 includes the following specific smart contracts:
\begin{itemize}
\item a \<logger> that arranger replicas use to post batches of
  transaction requests, and
\item a set of smart contracts that arbitrate fraud-proof mechanisms
  (see Section~\ref{sec:fraudproofs}).
\end{itemize}

\subsection{L2 Optimistic Rollups and Optimiums}

\begin{figure}[t]
  \centering
  \begin{tabular}{c}
    \includegraphics[scale=0.32]{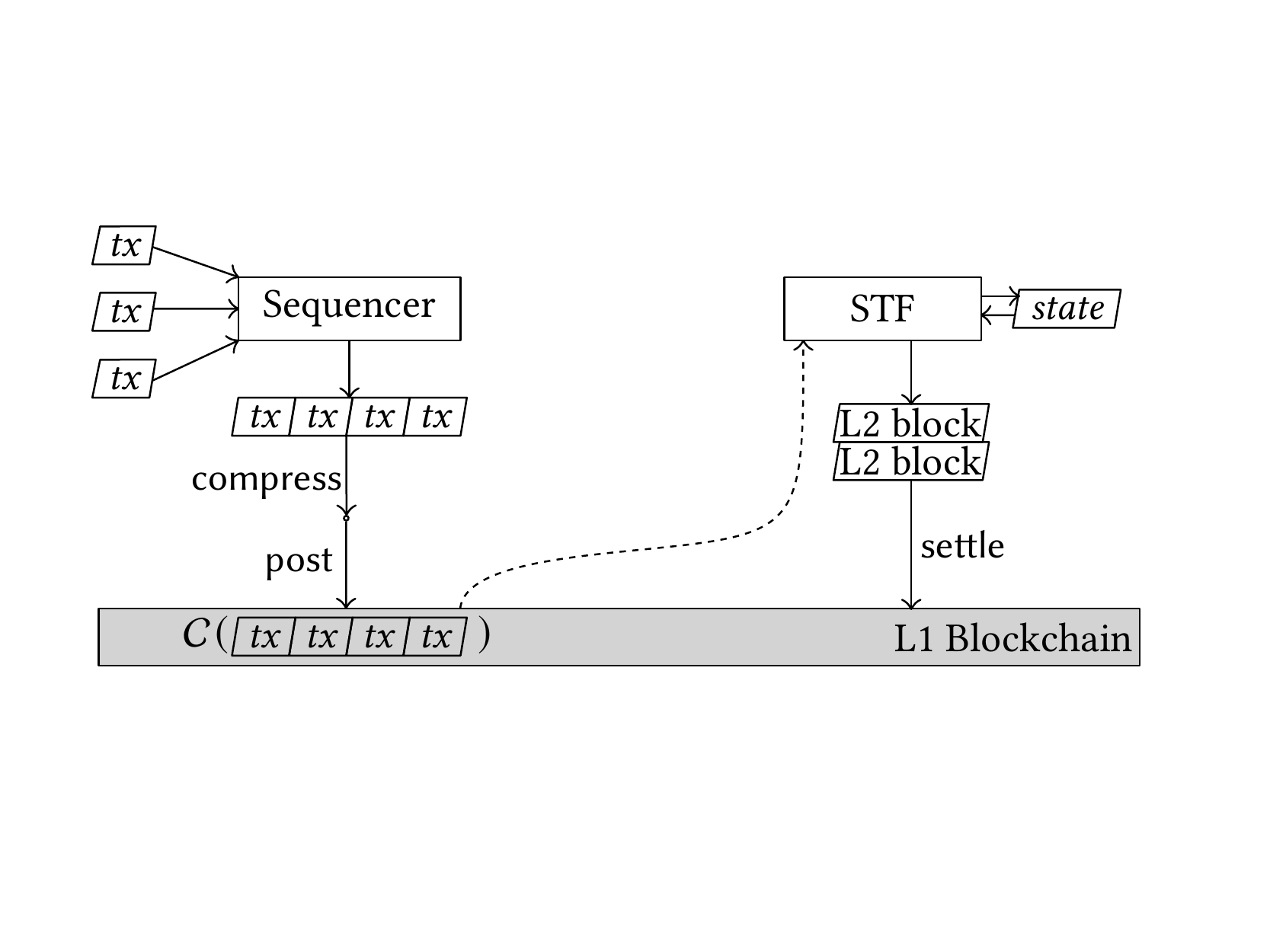}
  \end{tabular}
  \caption{Optimistic Rollups.}
    \label{fig:arbitrum-nitro}
  \end{figure}
\begin{figure}[t]
  \centering
  \begin{tabular}{c}
     \includegraphics[scale=0.32]{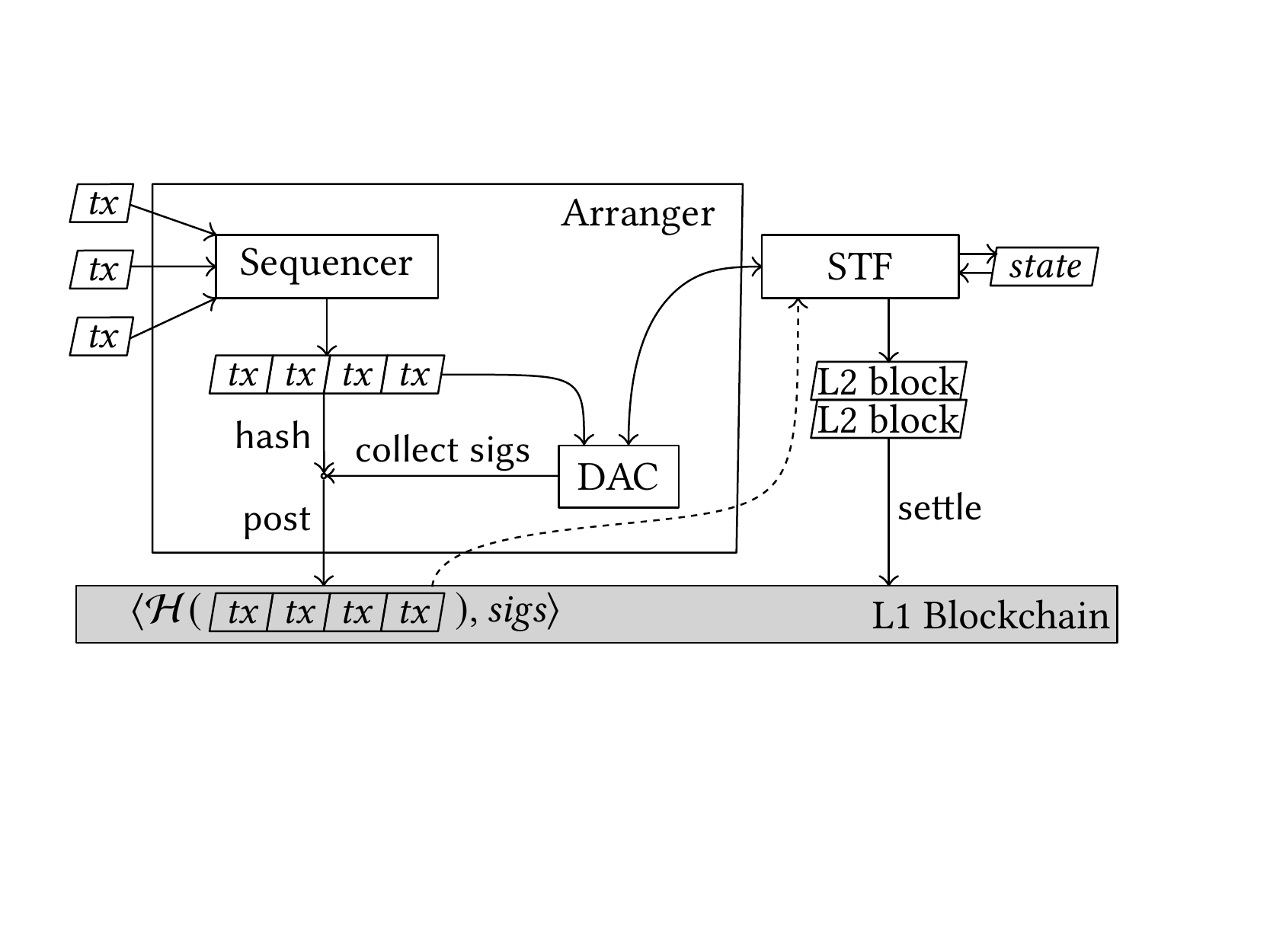}
  \end{tabular}
  \caption{Optimiums.}
  \label{fig:arbitrum-anytrust}
\end{figure}

L2 Optimistic Rollups split transaction sequencing
from transaction execution.
The consolidation of transaction effects is delayed to allow disputes
and arbitration.
L2 Optimistic Rollups are implemented as two components (see
Fig.~\ref{fig:arbitrum-nitro}):
a \emph{sequencer}, in charge of ordering transactions, and a
\emph{state transition function} (STF), responsible for executing
transactions (terms introduced by
Arbitrum~\cite{Kalodner2018Arbitrum}).
While currently the role of the sequencer is centralized, anyone can
in principle be an STF.\footnote{In Arbitrum One STFs are allow-listed~\cite{ArbitrumDecentralization}.}

The sequencer collects transaction requests from L2 users, packs them
into batches, and posts these batches as a \emph{single} invocation
into L1.
Once posted, compressed batches are immutable and visible to everyone.
Currently, sequencers are centralized processes and do not offer any
guarantee regarding transaction orders, whether transaction requests
can be discarded or that the data posted in L1 actually correspond to
a batch of transaction requests.
To prevent censorship from the sequencer, some L2s allow users to
perform a more expensive posting of transaction requests straight to
L1, which is visible and must be executed.

%
STFs are independent processes in charge of computing the effects of
batches of transaction requests, which is determined by the sequence
of transaction requests and the previous L2 state.
STFs propose new L2 states as L2 block \emph{assertions} into L1.
L2 blocks are optimistically assumed to be correct, but there is a
challenge period of typically a week.
STFs place stakes in L2 blocks, asserting that the L2 block is
correct, and can also place stakes to challenge L2 block assertions
posted by other STFs.
When STFs challenge assertions, a fraud-proof mechanism is played in
L1 between the involved STFs.
The mechanism guarantees that honest players win.
The losing party forfeits their stake, while the winner receives a
portion of the loser's stake as compensation.
L2 block assertions without stakes are removed, while surviving ones
consolidate and the L2 evolves.
\hypertarget{fp:prelims}{}\addtextrevision{These fraud-proof
  mechanisms are games that bisect the execution trace of the
  transaction in the disputed L2 block, until the challenge is reduced
  to a single instruction.
  At this point, players agree on the state before the instruction but
  disagree on the state after the instruction.
  All states in the middle of challenged subtraces and the final
  instruction must be posted in the L1 contract that arbitrates the
  fraud-proof.
  Therefore, these fraud-proof mechanisms require: (1) reasoning about
  the execution of an interpreter of smart contracts and (2) the
  ability to extract the state after each instruction, not just the
  final result, to provide them in the trusted L1 contract as
  necessary.}

Sequencers can post batches in L1 after compressing them, using a
reversible compression algorithm~\cite{Alakuijala18brotli}.
However, the L1 gas associated with posting compressed batches is
still significant.
To mitigate this cost, Optimiums post hashes instead, and add a
\emph{Data Availability Committee} (DAC) storing batches and providing
them upon request (see Fig.~\ref{fig:arbitrum-anytrust}).
In current Optimiums, the sequencer posts a hash of a batch---which is
much smaller than the compressed batch---along with evidence, i.e. signatures,
that the batch is available from at least one honest DAC member.
STFs then check that posted hashes have enough valid signatures and
request the corresponding batch to DAC members.
To ensure progress, some implementations provide a fallback
mechanism, where the sequencer posts the compressed batch into L1 if it
cannot collect enough signatures within a specified time frame.
In this work, following~\cite{capretto2025decentralized}, we use
\emph{arranger} to refer to the service that combines a sequencer and
a DAC.

\subsection{Model of Computation}

Our system comprises arranger replicas and arranger clients.
There are two types of clients:
\begin{itemize}
\item L2 users sending L2 transaction requests to arranger replicas,
  and
\item STFs requesting arranger replicas the translation of hashes
  (posted through the L1 \<logger> smart contract) into batches.
\end{itemize}

We consider a public-key infrastructure (PKI) that associates replica
and client identities with their public keys, and that is common to
all replicas and clients.
L2 users can create \emph{valid} transaction requests and arranger
replicas cannot impersonate clients.
Valid transaction requests are those that have been correctly signed,
so arranger replicas can locally check their validity using public-key
cryptography.

Arranger replicas can be classified as either \emph{honest}, meaning
they adhere to the arranger protocol, or \emph{faulty}.
Faulty replicas include \emph{Byzantine} replicas, which behave
arbitrarily~\cite{Lamport1982Byzantine}.
We analyze other kinds of faulty replicas in
Section~\ref{sec:threat-models}.

Arranger replicas use a known collision-resistant hash function
\(\hash{}\) to hash transaction requests and create Merkle trees from
batch of transactions requests.

\subsection{Merkle Trees}
Merkle trees~\cite{Merkle88} are a tree data structure, where each
leaf node is labeled with the hash of its content and each node that
is not a leaf is labeled with the hash of the concatenation of its
children hashes.

Merkle trees provide an efficient (logarithmic) membership
authenticated check, which we employ in our fraud-proof mechanisms
(see Section~\ref{sec:fraudproofs}).
Given the hash of the root \(r\) of a Merkle tree, also known as
\emph{Merkle root}, the proof that the content of the \(i\)-th leaf is
\(x\) consists of index \(i\), element \(x\), and the hashes of all
nodes that are neighbors of nodes in the path from the \(i\)-th leaf
to the root.
Verifying that the proof is correct involves reconstructing the hash
in all nodes in the path from the leaf to the root, which can be done
bottom-up with the data provided in the proof.
Assuming that the hash function used is collision resistant, the last
hash will match the Merkle root \(r\) if and only if \(x\) is in fact
the content of the \(i\)-th leaf.

\hypertarget{merkleroot}{} \addtextrevision{ Given a batch \(b\) we
  consider the Merkle tree that has as leaves the elements in \(b\)
  and uses \(\hash\) as hash function, and denote with \(\Mroot(b)\)
  its Merkle root.}


\section{Arranger}
\label{sec:api}

In this section we briefly explain the concept of arranger, introduced
in~\cite{capretto2025decentralized}, which seamlessly fits into the
model of existing Optimiums (see Fig.~\ref{fig:arbitrum-anytrust}).
Arrangers perform two main functions:
\begin{compactitem}
\item Sequencing: ordering and batching transaction requests, and then
  posting the corresponding hashes to L1.
\item Data Availability: ensuring the availability of data
  corresponding to the posted hashes, enabling the reconstruction of
  batches when needed. Arranger replicas offer an end point
  $\<translate>(\<id>,\<h>)$ to translate hashes.
\end{compactitem}

The arranger service receives transaction requests from L2 users.
When the arranger collects sufficient transaction requests (or a
timeout occurs) all honest arranger replicas agree on a new batch $b$
and assign an identifier $\<id>$ to \(b\).
Then, all honest arranger replicas compute hash $h$, the root of the
Merkle Tree~\cite{Merkle88} whose leaves are the transaction requests
in \(b\), and create a \emph{batch tag} $(\<id>,h)$.
Once $b$ has been agreed, $(\<id>,h)$ is computed locally.
Each honest replica then signs the new batch tag and a compressed
version of the batch and propagates its signatures to all other
replicas.
Once enough signatures of batch tag are collected, a combined
signature \(\sigma\) is generated (in our system, we use
BLS~\cite{Boneh2001Short}).
The resulting \emph{signed batch tag} $(\<id>,\<h>,\sigma)$ is then
posted to L1.
The second signature, of the compressed version of the batch, is used
in a fraud-proof game that guarantee that the batch is available (see
Section~\ref{sec:fraudproofs:data-availability}).
\hypertarget{identifiers}{}\addtextrevision{An attempt to post a
  batch to L1 becomes an L1 transaction request, which can be
  reordered.
  Therefore, we use identifiers in signed batch to order batches.
  In particular, two consecutive batches agreed by honest arranger
  replicas have consecutive identifiers.}

The \<logger> L1 smart contract accepts signed batch tags without
performing any validation check.
In this paper, we propose fraud-proof mechanisms---similar to
arbitration protocols in Optimistic Rollups---to discard incorrect or
unavailable signed batch tags (see Section~\ref{sec:fraudproofs}).

STFs monitor the \<logger> contract and, after locally validating the
signatures of signed batch tags, request the corresponding batch from
the arranger to compute the next L2 block.
If the signature is invalid, the batch cannot be retrieved or the
batch contains invalid transaction requests, the STFs can use a
fraud-proof mechanisms to discard the batch tag and penalize
misbehaving arranger replicas.

\subsection{Arranger Properties}
\label{sec:api-properties}
Before given the properties of \emph{correct} arrangers, we first
introduce some definitions about signed batch tags.
Signed batch tags are considered \emph{certified} when they have at
least \(\s{}\) signatures of arranger replicas, where \(\s\) is a
static system parameter known by the arranger replicas, the \<logger>
contract and the set of L1 smart contracts arbitrating fraud-proof
mechanisms.
A certified batch tag $(\<id>,\<h>,\sigma)$ is \emph{legal} if its
corresponding batch \(b\) satisfies the following properties:
\begin{compactitem}
\item \PrValidity: Every transaction request in $b$ is a valid
  transaction request added by \removetextrevision{a client}\hypertarget{validity}{}\addtextrevision{an L2 user}.\footnote{A transaction request is valid
    when it is properly formed and signed by the originating \removetextrevision{client}\addtextrevision{L2 user}.}
  \item \PrIntegrityOne: No transaction request appears twice in $b$.
  \item \PrIntegrityTwo: No transaction request in $b$ appears
    in\hypertarget{integrityTwo}{}\addtextrevision{ a batch corresponding with} a legal batch tag
    previously posted by the arranger.
  \end{compactitem}

All certified batch tags posted by correct arrangers must be legal.

\begin{property}[\textup{\PrLegality}]
  \label{pr:legality}
  Every certified batch tag posted by the arranger is a legal batch
  tag.
\end{property}

Arranger replicas can post multiple signed batch tags with the same
identifier and the \<logger> accepts them all.
A batch tag can be part of two signed batch tags if each tag is signed
by a different subset of arranger replicas.
However, to ensure deterministic evolution of L2, two certified batch
tags with the same identifier must correspond to the same batch, and
thus have the same hash.
Formally:

\begin{property}[\textup{\PrUniqueBatch}]
  \label{pr:uniqueBatch}
  Let $(\<id>, \<h>_1, \sigma_1)$ and $(\<id>, \<h>_2, \sigma_2)$ be
  two posted certified batch tags.
  Then, \(\<h>_1 = \<h>_2\).
\end{property}

To ensure censorship resistance and data availability, correct
arrangers must also satisfy the following properties:

\begin{property}[\textup{\PrTermination}]
  \label{pr:termination}
  All valid transaction requests added to honest replicas eventually
  appear in a posted legal batch tag.
\end{property}

\begin{property}[\textup{\PrAvailability}]
  \label{pr:availability}
   Every posted legal batch tag can be translated into
   its batch.
 \end{property}
 
\PrAvailability is expressed formally as follows. Let
 $(\<id>, \<h>, \sigma)$ be a legal batch tag posted by the arranger,
 s.t. \removetextrevision{$\<h>=\hash(b)$}\hypertarget{mroot:use1}{}\addtextrevision{$\<h>=\Mroot(b)$}.
 Then, some honest replica will return $b$ when requested
 $\<translate>(\<id>,\<h>)$.
 This prevents halting the L2 blockchain and having indisputable L2
 blocks by failing to provide batches of transaction requests from
 hashes.
 If a faulty replica returns a batch $b'$ with \(b' \neq b\), by the
 assumption of collision resistance, the root of the Merkle tree
 corresponding with \(b'\) cannot be \(\<h>\).
 Clients can locally verify this mismatch by computing
 \removetextrevision{$\<h>=\hash(b')$}\hypertarget{mroot:use2}{}\addtextrevision{\(\<h>=\Mroot(b')\)}.

\PrLegality, \PrUniqueBatch and \PrAvailability are safety properties
and \PrTermination is a liveness property.
Altogether, they characterize correct arrangers.
%
%
%
\hypertarget{correctArrangers}{}Correct arrangers offer censorship resistance because all valid
transaction requests added to honest arranger replicas are eventually
executed, guarantee that no transaction is executed twice and that
only valid transaction requests are executed.
For more details, see~\cite{capretto2025decentralized}.

\subsection{Arranger Implementations}\label{sec:arranger:implementations}

A straightforward implementation of an arranger, which we refer to as
\emph{centralized}, consists of a single replica that performs all
actions of arrangers: receiving transaction requests from L2 users,
packing them into batches, \hypertarget{identifiers:centralized}{}\addtextrevision{assigning unique identifiers}, posting
hashed batches in L1 and translating hashes back into batches of
transaction requests.
Clearly, the correctness of a centralized arranger depends entirely on
its single replica.

Most existing L2s do not implement a centralized arranger but instead
what we call a \emph{semi-decentralized} arranger.\addtextrevision{\footnote{For the
  current decentralization status of L2s running on top of Ethereum
  see Appendix~\ref{app:decentralizationL2Eth}}}
In this architecture, a single replica acts as the sequencer and the
remaining replicas implement a decentralized DAC.
The centralized sequencer collects and batches transaction requests,
\hypertarget{identifiers:semidecentralized}{} \addtextrevision{assigns
  unique identifiers to create batch tags}, communicates these batches
\addtextrevision{and batch tags} to all DAC replicas, collects their
signatures and posts signed batch tags to the \<logger> in L1.
DAC replicas then can provide the inverse resolution of hashes posted
by the sequencer.
To ensure correctness, implementations of a semi-decentralized
arranger assume that the sequencer is honest and the number of faulty
replicas in the DAC is less than the signature threshold \(\s{}\).
The concrete value of \(\s{}\) and its relation with the number of
total replicas \(n\) vary depending on the system configuration.

To completely remove the single point of \removetextrevision{of
  failure and trust the}\addtextrevision{trust and failure that is
  the centralized} sequencer\removetextrevision{still represents} in semi-decentralized
  arrangers, \cite{capretto2025decentralized} proposes a \emph{fully
    decentralized} arranger based on Set Byzantine Consensus
  (SBC)~\cite{Crain2021RedBelly}.
SBC is a\removetextrevision{more efficient (in term of throughput) variant of
  Byzantine Consensus}
\hypertarget{SBC:throughput}{}\addtextrevision{variant of Byzantine Consensus that provides
  high throughput}
where replicas agree on a set of elements, instead of a
single element.
In this implementation all replicas perform both the roles of
sequencer and DAC member, and each honest replica includes an honest
SBC replica as a building block.
L2 users can add transaction requests through any replica.
Set consensus guarantees that all honest replicas eventually agree on
the same batch and that elements added to honest replicas eventually
appear in a set agreed by honest replicas.
Then each replica can locally compute the
hash\hypertarget{identifiers:decentralized}{}\addtextrevision{ of the
  batch agreed by consensus, use the SBC instance as the batch
  identifier, and thus create a unique batch tag}, sign the batch tag, and later
translate hashes back into batches.
\hypertarget{SBC:assumption}{}\removetextrevision{As with Byzantine consensus,} SBC assumes that less than one-third of all replicas
are Byzantine.
This assumption is also inherited by the fully decentralized arranger
in~\cite{capretto2025decentralized}, which also requires that the
number of signatures required in certified batch tags is at least
\(\s{} > n/3\), guaranteeing that all certified batch tags are signed
by an at least one honest replica and ensuring the correctness of the
arranger.


\section{Solution Overview}\label{sec:solution-overview}

Arranger implementations assume that a portion of their replicas are
honest in order to ensure correctness.
However, they neither offer a mechanism to detect when this trust
assumption is violated, nor provide guarantees in scenarios where the
trust assumption does not hold.

This can have undesired consequences on the evolution of the L2.
For example, violations to property \PrAvailability can lead to all L2
tokens to be stolen.
Consider a fully-decentralized arranger that is correct under the
assumption that at most \(f < \s{}\) replicas are faulty, and an
adversary that breaks this trust assumption by controlling \(\s\)
arranger replicas.
In such case, the replicas controlled by the adversary can generate
certified batch tags that are known only by these replicas that the
adversary controls.
They can post a batch tag in the \<logger> contract and then refuse to
translate the hash into a batch of transaction requests, violating
property \PrAvailability.
As consequence, the adversary could control an STF that posts a new L2
block encoding a state where all L2 tokens are stolen.
Since other STFs do not know the corresponding transaction requests,
they are unable to compute the correct L2 block and properly challenge
the malicious block.
To address this problem, in this paper we propose fraud-proof
mechanisms (see Section~\ref{sec:fraudproofs}) that can be used to
reject batch tags that are illegal or unavailable.
These fraud-proofs can also be used to generate evidence that an
arranger does not satisfy the safety properties, discouraging faulty
behavior by replicas.
Furthermore, we provide economic incentives (see
Section~\ref{sec:incentives}) to motivate arranger replicas to remain
active.

Specifically, arrangers replicas post batch tags in L1 as
\emph{proposals}.
Similar to L2 blocks assertions, batch tag proposals are
optimistically accepted, but there can be challenges during a
challenging period.
During this period, arranger replicas and STFs can stake on a batch
tag claiming (1) that the batch tag is legal and unique, and (2) that
they can translate the hash.
Replicas can also put stakes to challenge other batch tags proposals.
There are different kinds of challenges, depending on the claim
disputed.
For each challenge, there is a fraud-proof game, arbitrated by an L1
contract, played between the replicas involved.
Staking agents that fail to defend their claim lose their stake,
either because their claim was false or because they did not
participate correctly in the fraud-proof game.
A batch tag is discarded (along with all L2 blocks executing its
transactions) when it has no stake.
Conversely, a batch tag consolidates when at least one of its staker
survives all challenges.

The different fraud-proof mechanisms (FP) for batch tags are:
\begin{compactenum}
\item Certifiability \<fpm>: disputes the legality of the batch tag,
  claiming that the signature in the batch tag does not contain at least
  \(\s{}\) valid arranger replica signatures.
\item Validity \<fpm>: disputes the legality of the batch tag,
  claiming that it contains an element that is not a valid transaction
  request.
\item Integrity \<fpm> 1: disputes the legality of the batch tag,
  claiming that it contains a duplicate element.
\item Integrity \<fpm> 2: disputes the legality of the batch tag,
  claiming that it contains an element that appears in a previous
  consolidated batch tag.
\item Uniqueness \<fpm>: disputes the uniqueness of the batch tag,
  claiming that there is another certified batch tag with the same
  identifier but different hash.
 \item Data availability \<fpm>: forces the batch to be revealed
  in L1.
\end{compactenum}
All these challenges, as well as the strategies for honest players,
are explained in detail in Section~\ref{sec:fraudproofs}.
The first five challenges guarantee that an staker making the correct
claim can win the challenge.
The data availability challenge ensures that either the
transaction requests corresponding to the challenged batch tag are
revealed or the batch tag is discarded (see
Proposition~\ref{prop:data-availability}).
From this, we derive the following key lemma:

\begin{lemma}\label{l:consolidate}
  A single honest agent can ensure that a batch tag consolidates if
  and only if it is legal, available and, at the end of its challenge
  period, there is no other certified batch tag posted in L1 with the
  same identifier but different hash.
\end{lemma}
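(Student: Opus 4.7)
The plan is to prove both directions of the iff by reducing to the guarantees of the individual fraud-proof mechanisms. The key observation is that the six fraud-proofs listed at the end of Section~\ref{sec:solution-overview} partition the ways in which a posted batch tag can fail to satisfy the three properties appearing in the lemma: certifiability together with Validity, Integrity~1 and Integrity~2 cover \PrLegality; Uniqueness covers \PrUniqueBatch restricted to the challenge window; and Data availability covers \PrAvailability at the L1 level.

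For the \emph{if} direction, I would fix a batch tag $(\<id>,\<h>,\sigma)$ that is legal, available, and is not in conflict with any other certified tag with identifier $\<id>$, and exhibit an explicit strategy for a single honest agent $A$. Agent $A$ stakes on the tag and, for every challenge that is opened by some other party during the challenge period, plays the honest strategy of the corresponding fraud-proof game. For Certifiability, Validity, Integrity~1 and Integrity~2 the truth of the negated claim is immediate from legality, so $A$ defends successfully by appealing to the correctness of those games (to be established in Section~\ref{sec:fraudproofs}). For Uniqueness, the hypothesis that no other certified tag with the same identifier but different hash exists lets $A$ win any such challenge. For Data availability, availability of the batch implies that $A$ can obtain $b$ from an honest replica via $\<translate>(\<id>,\<h>)$ and therefore reveal whatever the game requires. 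Since $A$ survives every challenge, at the end of the challenge period the tag still has a stake and hence consolidates.

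For the \emph{only if} direction, I would argue contrapositively: if any of the three conditions fails, then an adversary has a winning challenge and no honest strategy can force consolidation. Concretely, if the tag is illegal, the adversary opens the matching challenge among Certifiability, Validity, Integrity~1, or Integrity~2; by correctness of those games the honest stake is forfeited. If the tag is not unique at the end of the challenge period, the adversary opens the Uniqueness challenge against it. If the tag is unavailable, the Data availability game (by Proposition~\ref{prop:data-availability}, assumed earlier) forces either revelation of $b$ in L1---impossible since no honest party can produce a preimage of $\<h>$ matching a legal batch---or discarding of the tag. In each case consolidation fails regardless of $A$'s behaviour.

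The main obstacle I anticipate is the Data availability direction. In the \emph{if} part I need the honest agent to actually obtain the batch in order to play the revelation moves, which relies on \PrAvailability being witnessed by \emph{some} honest replica that $A$ can query, not by $A$ itself; I will appeal to the PKI and the definition of availability for this. In the \emph{only if} part I must be careful that ``unavailable'' is used in the sense relevant for the Data availability FP (no party, including the adversary, will reveal a preimage of the posted hash that is a legal batch), rather than merely that honest replicas lack the data; this should follow from collision resistance of $\hash$ together with the semantics of the game as stated in Proposition~\ref{prop:data-availability}.
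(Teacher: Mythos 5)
Your overall strategy is the same one the paper uses: Lemma~\ref{l:consolidate} is presented in Section~\ref{sec:solution-overview} as a direct consequence of the per-mechanism guarantees (the honest-strategy discussions in Section~\ref{sec:fraudproofs} together with Propositions~\ref{prop:data-availability}, \ref{prop:membership}, \ref{prop:legality} and~\ref{prop:uniqueness}), and your case split by fraud-proof type---with the ``if'' direction defended by the honest strategies and the ``only if'' direction argued contrapositively, routing availability through Proposition~\ref{prop:data-availability}---reproduces that derivation faithfully. The caveats you identify about the two readings of ``unavailable'' are exactly the right ones and are resolved the way the paper resolves them: the data availability game either makes the data public (so the tag is in fact available) or removes the tag.

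There is, however, one concrete gap in your ``if'' direction, in the data-availability case. To survive that challenge a staker must post to the L1 data availability contract the \emph{compressed} batch together with \(\s{}\) arranger-replica signatures, and the contract rejects the response unless these are the same signatures as in the challenged batch tag. Obtaining \(b\) from an honest replica via \(\<translate>(\<id>,\<h>)\), as you propose, gives the agent the batch but not those signatures, so an arbitrary honest agent cannot in general produce an acceptable response; \PrAvailability as defined only guarantees retrievability of \(b\), not of the signed compressed version. The paper closes this hole differently: honest arranger replicas sign a second object (the compressed batch) at creation time and only stake on a tag once they have collected at least \(\s{}\) such signatures, so the agent of the ``if'' direction must be read as an honest staking replica that followed the protocol, not as an outsider who reconstructs the data after the fact. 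You should either restrict the agent accordingly or strengthen your hypothesis of ``available'' to include possession of the certified compressed batch.
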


Observe that if an arranger violates one of its safety properties, it
means that the arranger either (1) posted a batch tag that is either
not legal or unavailable, or (2) posted two certified batch tags with
the same identifier but different hash.
By the previous lemma, a single honest agent can use fraud-proof
mechanisms to discard such batch tag.\hypertarget{discardedbatch}{}\addtextrevision{\footnote{In the
    second case, if neither batch tag has consolidated yet, then both
    are discarded. Otherwise, the last one posted is discarded, as it
    is the one being challenged.}  }
Moreover, by winning the challenge the honest agent proves that the
arranger is violating a safety property and, as consequence, force the
replacement of the arranger.

\begin{theorem}\label{cor:safety}
  A single honest agent can prove in L1 that an arranger violates a
  safety property.
\end{theorem}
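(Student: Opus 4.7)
The plan is to proceed by case analysis on which of the three safety properties (\PrLegality, \PrUniqueBatch, \PrAvailability) is violated by the arranger, and in each case exhibit a concrete fraud-proof mechanism that the single honest agent can invoke in L1 to expose the violation. Since Lemma~\ref{l:consolidate} already characterizes exactly when an honest agent can force or prevent consolidation of a batch tag, the proof of the theorem will largely be a reduction of ``observing a safety violation'' to ``triggering the appropriate challenge.''

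First I would suppose the arranger has violated some safety property and posted a witnessing certified batch tag (or pair of tags) to the \<logger>. In the \PrLegality case, the posted tag fails one of \PrValidity, \PrIntegrityOne, or \PrIntegrityTwo, so the honest agent stakes a challenge using the Validity, Integrity~1, or Integrity~2 FP respectively (and also the Certifiability FP if the signature is below threshold). In the \PrAvailability case, the posted tag is legal but no honest replica will return the corresponding batch; the honest agent then invokes the Data Availability FP, which by Proposition~\ref{prop:data-availability} either forces the batch to be revealed in L1 or discards the tag. In the \PrUniqueBatch case, the honest agent observes two certified tags $(\<id>,\<h>_1,\sigma_1)$ and $(\<id>,\<h>_2,\sigma_2)$ with $\<h>_1\neq\<h>_2$ on L1 and invokes the Uniqueness FP against the later-posted tag (per the footnote on which tag is discarded).

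Next I would apply Lemma~\ref{l:consolidate} to conclude that, in each case, the honest agent wins the fraud-proof game: the lemma's ``only if'' direction tells us precisely that an illegal, unavailable, or non-unique certified batch tag cannot be made to consolidate once a single honest agent challenges it, and the winning move of the game is arbitrated on L1 by the smart contracts described in Section~\ref{sec:fraudproofs}. A win of any of these six games is itself an on-chain event: the L1 arbitrator contract records the outcome, the loser forfeits its stake, and the existence of such a recorded verdict against an arranger replica is exactly the sought proof, on L1, that the arranger posted a tag inconsistent with correctness. Packaging these three cases together yields the theorem.

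The main obstacle I anticipate is not the logical structure, which is largely an application of Lemma~\ref{l:consolidate}, but rather pinning down the precise notion of ``proves in L1'' so that winning a fraud-proof game against a single misbehaving replica is genuinely evidence against the arranger as a whole. One has to argue that a successful challenge to a certified batch tag implicates the arranger (not merely one replica) because, by definition, certified tags carry at least \(\s{}\) arranger signatures and thus any illegal, duplicate, or unavailable certified tag constitutes arranger-level misbehavior. With this observation, the case analysis above suffices.
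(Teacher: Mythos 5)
Your proposal is correct and follows essentially the same route as the paper: the paper observes that any safety violation manifests as either an illegal/unavailable posted batch tag or two certified tags with the same identifier but different hashes, and then invokes Lemma~\ref{l:consolidate} (itself resting on Propositions~\ref{prop:data-availability}, \ref{prop:legality}, and \ref{prop:uniqueness}) to conclude that a single honest agent wins the corresponding fraud-proof game, with the on-chain verdict serving as the proof. Your closing remark that certified tags carry at least \(\s{}\) signatures, so a successful challenge implicates the arranger rather than a lone replica, is a useful explicit articulation of a point the paper leaves implicit.
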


Unfortunately, detecting that the arranger is violating property
\PrTermination is impossible as this is a liveness property that may
be satisfied later in the future.
To mitigate the risk of an arranger censoring transactions requests, L2
users can post their transaction requests directly in L1, although at
a higher cost.

Finally, the data availability challenge requires posting the batch
compressed, which can be expensive.
As an alternative, we provide in
Section~\ref{sec:incentives:translate} a cheaper protocol to translate
hashes into batches using zero-knowledge contingent payments.
Although this cheaper protocol is preferable for both arranger
replicas and STFs, it does not provide any guarantee when arranger
replicas remain silent and thus the data availability challenge is a
necessary fallback.


\section{Fraud-proofs Mechanisms}
\label{sec:fraudproofs}

We present now fraud-proof mechanisms (also called fraud-proof games,
or for conciseness \<fps>), governed by L1 smart contracts, which
enable a single honest agent (replicas and STFs)---with enough tokens
to participate---to prevent the consolidation of illegal or
unavailable batches, and to expose faulty behavior.
Our \<fps> do not depend on the concrete arranger implementation, the
number of faulty replicas or the correctness of the arranger.
  In Section~\ref{sec:threat-models}, we study how these mechanisms can limit the
  power of different adversaries.
We assume participating agents have public L1 accounts.

Our \<fpms> are similar to the ones used in optimistic rollups to
dispute L2 blocks, which are based on Refereed Delegation of
Computation (RDoC)~\cite{canetti2011practical, canetti2013refereed}.
The main difference is that our \<fps> are over concrete algorithms
that check the availability and legality of posted batch
tags, and not over the execution of arbitrary smart contracts
translated to WASM~\cite{wasm}.
Arbitrating over concrete algorithms\removetextrevision{results in
  more efficient and simpler to understand \<fpms>.}
\hypertarget{fp}{} \addtextrevision{allows us to divide the algorithms
  in well-defined high level blocks and arbitrate over these blocks
  instead of sequences of instructions that are not known upfront.
  This results into modular and clear \<fpms> that are amenable for
  formalization, which are also more efficient.
  For example, consider an algorithm that verifies a Merkle proof of a
  Merkle tree with 4096 (= $2^{12}$) transaction requests.
  The execution trace of a WASM program obtained by compiling a Rust
  program contains around $2^{21}$ instruction.
  Therefore, a bisection game over its execution trace involves 21
  moves per player, where each move requires accessing to
  intermediate states of the execution trace.
  On the other hand, the multi-step membership \<fp> that we propose
  (see Section~\ref{sec:fraudproofs:membership}) requires 4 moves per
  player (that only require knowing nodes in the Merkle tree), because
  this \<fpm> is logarithmic in the height of the Merkle tree.
  Furthermore, a one-step membership \<fp> (see
  Section~\ref{sec:fraudproofs:membership}) requires only 1 step,
  provided that the gas limit can check the Merkle proof in a single
  execution.
}

Our \<fps> are two-player games arbitrated by L1 smart contracts.
Each player takes turns with a predetermined \addtextrevision{total}
time per game, which only decreases in the player's turn (similar to
chess clocks, \hypertarget{clock}{}\addtextrevision{which is the time
  mechanism currently implemented in the FPs employed in L2s)\footnote{
    The general consensus among the Ethereum research community and L2
    projects about the duration of a game is two weeks (one week per
    player).}}.
Players must put a stake to participate.
The losing party loses its stake, while the winner can retrieve its
stake and receives a reward.\footnote{The pseudo
  code implementing the winning strategy for the honest players and
  their formalization in LEAN4~\cite{Moura.2021.Lean4} are available
  as an artifact.}

We describe our \<fpms> and provide an overview of the strategy for
honest players.
\addtextrevision{The pseudocode for contracts arbitrating the \<fps>
  and the strategy for honest players can be found
  Appendix~\ref{app:pseudocodes}. Figures illustrating the states of
  each \<fp>, the allowed moves in each state, and the data required
  for each move are in Appendix~\ref{app:figs}.} Informal proofs can
be found in Appendix~\ref{app:proofs}, and all
\addtextrevision{formal} proofs of the FP of batch tag legality are
formalized and proven correct in LEAN4.
\imarga{TODO: pseudocode strategies and comments in pseudocode of fp}

\subsection{Data Availability Fraud-Proof
  Mechanism}\label{sec:fraudproofs:data-availability}

Section~\ref{sec:incentives:translate} includes a fast and cheap
protocol, based on contingent payments, to obtain the batch
corresponding to the the hash of a batch tag.
However, arranger replicas can be slow or refuse to participate in
this protocol without any punishment.
The \emph{data availability \<fpm>} provides an alternative to force
agents with a stake in a batch tag \(t\) to publish the associated
batch of transaction requests \(b\).

Once an agent \Att initiates a data availability \<fpm> against batch
tag \(t\), any agent staking in \(t\) can respond by posting to a L1
contract (called \emph{data availability contract}) a compressed
version of \(b\) along with \(\s{}\) signatures from arranger
replicas.
The data availability contract rejects the data if its signature is
not valid or does not contain the same signatures as the batch tag,
but does not check that the posted data is the compressed version of
\(b\).
This last step is delegated to another \<fp>, called
\emph{decompress-and-hash \<fpm>}, that consists in bisecting the
execution trace of a program \Ptt that decompresses the data posted,
obtaining a list of elements, creates a Merkle tree with the list of
elements as leaves, and checks that the Merkle root is the hash in the
previously posted batch tag \(t\).
Agent \Att can initiate the decompress-and-hash \<fp> claiming that
program \Ptt does not finish successfully with the data provided by
the staking agents.
If \Att wins the decompress-and-hash challenge, all staking agents
lose their stake and the batch is discarded.
Otherwise, \Att loses its stake.
Decompress-and-hash is similar to the arbitration game using in
optimistic rollups for L2 blocks, but instead of arbitrating over the
execution of an arbitrary smart contract, the decompress-and-hash
\<fp> uses a fixed algorithm, whose execution trace is
shorter.\footnote{The execution trace of a WASM program obtained by
  compiling a Rust program that decompress a string representing a
  compressed batch of contains $4000$ transaction requests requires
  less than $2^{27}$ instructions, significantly less than usually
  expected in L2 blocks in practice.}
Since this \<fpm> extends RDoC~\cite{canetti2011practical,
  canetti2013refereed}, we can conclude the following result.

\begin{corollary}\label{corollary:rdoc}
  Agent \Att has a winning strategy for the decompress-and-hash \<fpm>
  if and only if the data provided by staking agents does not
  correspond to a compressed batch.
\end{corollary}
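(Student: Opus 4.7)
The plan is to reduce this corollary to the known soundness and completeness of Refereed Delegation of Computation (RDoC), which is explicitly cited as the basis for the decompress-and-hash game. Because the claim is a biconditional, I would structure the proof as two implications, with the same bisection-game argument powering both directions.

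First, I would fix notation: let $d$ be the data posted by the staking agents together with the $\s$ arranger signatures (already validated by the data availability contract), let $\Ptt$ be the deterministic program that decompresses $d$, builds a Merkle tree over the resulting list of elements, and checks that its Merkle root equals the hash $\<h>$ in the batch tag $t$, and let $\tau(d)$ be the unique execution trace of $\Ptt$ on input $d$. I would then recall that in the bisection game, \Att claims $\Ptt$ does not finish successfully on $d$ (i.e.\ $\tau(d)$ ends in a failure state), while the staking agents claim it does; the game narrows the disagreement to a single instruction whose pre-state and post-state can be verified directly on L1. This is exactly the RDoC setting instantiated with the fixed program $\Ptt$.

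For the ($\Leftarrow$) direction, assume the data does not correspond to a compressed batch. By the definition of $\Ptt$, this means that on input $d$ either decompression fails, or the reconstructed Merkle root differs from $\<h>$; in both cases $\tau(d)$ terminates in a failure state. \Att can then follow the honest RDoC bisection strategy, always reporting the true intermediate state of $\tau(d)$ at each bisection query. By completeness of RDoC, this strategy forces any opposing strategy into a single-step disagreement whose on-chain re-execution matches \Att's report, so \Att wins. For the ($\Rightarrow$) direction, I argue the contrapositive: if $d$ does correspond to a compressed batch, then $\tau(d)$ terminates successfully, and the staking agents can play the honest RDoC strategy, which by soundness of RDoC defeats every strategy of \Att.

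The main obstacle is not the game-theoretic argument itself, which is inherited from RDoC, but pinning down the equivalence between "$d$ does not correspond to a compressed batch" and "$\Ptt$ fails on $d$". In particular, one has to be explicit that $\Ptt$ captures all failure modes that make $d$ an illegitimate encoding of the batch committed to by $t$ (malformed compression, wrong number of leaves, or Merkle-root mismatch), and that, because $\hash$ is collision-resistant, a successful run of $\Ptt$ on $d$ witnesses that $d$ is genuinely a compressed batch whose Merkle root is $\<h>$. Once this correspondence is made precise, the corollary follows directly from the cited RDoC guarantees.
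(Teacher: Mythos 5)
Your proposal is correct and follows essentially the same route as the paper, which proves this corollary simply by observing that the decompress-and-hash game instantiates RDoC with the fixed program \Ptt and appealing to RDoC's soundness and completeness; you merely spell out the two directions and the honest bisection strategies explicitly. Your closing remark about pinning down the equivalence between ``the data does not correspond to a compressed batch'' and ``\Ptt fails on the data'' is a useful clarification that the paper leaves implicit in its definition of \Ptt.
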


\subsubsection*{Honest Strategies}

After a batch tag \(t\) is proposed in L1, and before it consolidates,
any agent can initiate a data availability \<fp> game.
If the staking agents do not respond or only respond with data that
does not correspond with the compressed version of \(b\), \Att can use
the compress-and-hash \<fpm> to discard batch tag \(t\).
Otherwise, at least one staking agent posts the compressed version of
\(b\) correctly signed, and \Att (and anyone observing the L1) can
decompress the data offchain and obtain \(b\).
Hence, either \Att learns the transaction requests in batch tag \(t\)
or \(t\) is discarded.

\begin{restatable}{proposition}{DAC} \label{prop:data-availability}
  For each not yet consolidated signed batch tag posted in L1, any
  honest agent can force to either learn its transaction requests or
  get the batch tag removed.
\end{restatable}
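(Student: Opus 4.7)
The plan is to argue by exhaustive case analysis on how the staking agents respond after \Att initiates the data availability \<fpm> against a not-yet-consolidated batch tag \(t\). First, I would have honest agent \Att open the data availability \<fp> against \(t\). By construction, the game forces staking agents of \(t\) to either post, on the data availability contract, a compressed payload together with \(\s\) arranger-replica signatures matching those in \(t\), or to forfeit within the predetermined time bound.

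Next, I would split into three cases. In the first case, no staking agent posts any payload accepted by the data availability contract before the timeout (either because nothing is posted, or because posted payloads are rejected for missing or mismatched signatures). Then, by the rules of the contract, all stakes on \(t\) are lost and \(t\) is removed, so the second disjunct of the conclusion holds. In the second case, some staking agent posts a payload whose signatures are accepted, and the payload does decompress into a batch \(b'\) whose Merkle root equals the hash in \(t\). Then, because the posted payload is public on L1, \Att (or any external observer) can decompress it offchain and recover the transaction requests; so the first disjunct of the conclusion holds. In the third case, some payload is accepted by the contract but does not correspond to a compressed batch matching the hash of \(t\). Here I would invoke Corollary~\ref{corollary:rdoc}: \Att initiates the decompress-and-hash \<fp> and, since the payload is not a valid compression of the batch committed by \(t\), \Att has a winning strategy. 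Playing it, \Att wins; all staking agents lose their stakes; and \(t\) is discarded, so again the second disjunct holds.

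Finally, I would observe that \Att can actually drive this dichotomy to completion rather than merely having one of the disjuncts hold accidentally. Concretely, the strategy for \Att is: open the data availability game; monitor the data availability contract; for every accepted payload that does not decompress offchain to a batch whose Merkle root matches \(t\), open a decompress-and-hash \<fp> and play the winning strategy guaranteed by Corollary~\ref{corollary:rdoc}; and let the timeout expire on any unanswered slot. Since the set of staking agents is finite and each accepted payload either satisfies the Merkle-root check (giving \Att the batch) or is dismissed by a successful decompress-and-hash challenge (removing its stake), this strategy terminates in one of the two stated outcomes.

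The main obstacle will be the third case, where an adversary posts syntactically well-formed and correctly co-signed payloads that nevertheless do not hash to the batch committed in \(t\). The argument here rests entirely on Corollary~\ref{corollary:rdoc}, so the delicate point is to ensure that \Att's view of the decompression computation agrees with the honest semantics used by the arbitrating L1 contract (i.e., that \Att and the contract share the same fixed decompression program \Ptt and the same hashing convention). Once that is assumed---which is guaranteed by our model, since \Ptt and \(\hash\) are system-wide constants baked into the contracts---the rest is bookkeeping over the finitely many accepted payloads within the challenge window.
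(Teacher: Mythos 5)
Your proposal is correct and follows essentially the same route as the paper's own proof: a case analysis on how stakers respond (silence or rejected data, correct compressed batch, or accepted-but-wrong data), with Corollary~\ref{corollary:rdoc} handling the last case via the decompress-and-hash game. Your additional remarks on termination over finitely many payloads and on the shared decompression semantics are sensible elaborations but do not change the argument.
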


Conversely, honest arranger replicas only stake in batch tag for which
they know its batch of transaction requests and have collected at
least \(\s\) signatures of the compressed version of the batch.
Therefore, if an honest arranger replica \Rtt stake in batch tag
\(t\), it can always win any data availability \<fpm> against \(t\).
In particular, as response to the initiation of the data availability
\<fpm> against \(t\), \Rtt posts in L1 the compressed version of
the batch associated with \(t\).
The challenging agent can either not respond, in which case \Rtt (and
all agents staking on \(t\)) win the \<fp>, or proceed to the
compress-and-hash \<fpm> which, by corollary~\ref{corollary:rdoc},
\Rtt can win as the data posted is correct.

Even when the data is revealed, some staking agents still lose some
tokens due to the cost of publishing data in L1.
To encourage replicas to participate and avoid silent
replicas,\footnote{Free-rider replicas.} all replicas that stake in a
batch tag also put a ``communal stake''.
For each challenge, part of this communal stake is removed.
Paying ahead of time for the cost of posting the challenge response in
L1, ensures that the replicas responding to the challenge do not incur
in a higher cost than silent replicas.

\subsection{Membership Fraud-proof
  mechanism}\label{sec:fraudproofs:membership}

Here we explain the \emph{membership \<fpm>}, which is used in some of
the \<fps> designed to prevent the consolidation of illegal batch
tags, and prove violations of property \PrLegality.

In the membership \<fpm> an agent \Att claims that element \(e\) is
the \(i\)-th leaf in a Merkle tree \(mt\) whose root is \(h\)---where
$h$ is the hash known in L1---while another agent \Btt denies the
claim.
We assume that both \Att and \Btt know all nodes in the Merkle
tree \(mt\).
We describe here two versions of this \<fpm>, one that consists of
only one (expensive) step, and another that consists in number of
simple steps, with a total number of steps that is logarithmic on the
height of the Merkle tree \(mt\).
\hypertarget{multistep}{}\addtextrevision{The multi-step membership version can be used when the
  one-step version requires too much gas due to the size of the batch.
  That is, the multi-step membership mechanism allows to split the
  one-step version in multiple smaller steps, where each step involves
  a cheaper interaction with the L1.
  For simplicity, we only explain the multiple-step version as a
  bisection.
  This can be generalized to a k-dissection as long as each
  step does not exceed the maximum gas limit of a transaction.}

In the one-step version, \Att posts in a L1 contract, called
\emph{one-step membership contract}, an element \(e\), position \(i\),
and a sequence of hashes \(\pi\).
The one-step membership contract checks that \(\pi\) is the
membership-proof that encodes that element \(e\) is at position \(i\)
of the Merkle tree with root \(h\).
The check consists in computing an amount of hashes equal to the
height of Merkle tree \(mt\).
If the proof is valid, \Att wins the game, otherwise \Btt wins the
game.
It is easy to see that \Att has a winning strategy if and only if
\(e\) is the \(i\)-th leaf in Merkle tree \(mt\).

In the multiple step version, \Att posts in an L1 contract called
\emph{multi-step membership contract} an element \(e\), its hash
\(h_e\), position \(i\), and a hash \(h_m\), claiming not only that
\(e\) is the element in the \(i\)-th leaf of Merkle tree \(mt\) but
also that \(h_m\) is the hash in the node in the middle of the path
$\pi$ from the \(i\)-th leaf to the root in Merkle tree \(mt\).
The multi-step membership contract first checks that $e$ hashes to
\(h_e\), and is also used to arbitrate the bisection game over $\pi$.
Agents alternate performing the following operations: \Btt selects one
sub-path to challenge, and if the sub-path is longer than two, \Att
responds by providing the middle hash of the sub-path chosen by \Btt.
When a sub-path has only two nodes, where the parent has hash $h_p$ and
the the child has $h_c$, \Att must post a hash \(h_s\).
The multi-step membership contract then verifies that hashing the
corresponding (left or right) concatenation of \(h_s\) with \(h_c\)
results in \(h_p\).
If the hashes are correct, \Att wins the game, otherwise \Btt wins the
game.
The multi-step membership contract verifies hashes only during \Att's
first and last turn
and thus, each
invocation to the contract is cheap.
Since the path length is halved at each turn, the maximum number of
turns is logarithmic in the path length, which is already logarithmic
in the number of elements in the Merkle tree \(mt\).
%

\imarga{TODO: figure like in equilibrium}

\subsubsection*{Honest strategies for the multi-step membership \<fpm>}

If \Att's claim is correct, then \Att can win the multi-step
membership challenge, as it knows all nodes in the path from \(e\) to
the root, and all the neighbors of all nodes in the path.
Therefore, \Att can provide all the hashes requested when bisecting
the path and the hashes in the final step.

Conversely, if \Att's claims is incorrect, \Btt can always keep the
invariant that the top node in the challenged path is part of the path
from the \(i\)-th leaf to the root of \(mt\), but not the bottom node.
This can be achieved by choosing the top path when the hash proposed
by \Att in its previous turn does not match the hash of the corresponding
node in the Merkle tree \(mt\), and the bottom path otherwise.
Eventually, when the challenged path has only two elements, \(h_p\)
and \(h_c\), \Att cannot provide a hash \(h_s\) such that hashing the
corresponding (left or right) concatenation of \(h_s\) with \(h_c\)
results in \(h_p\).
If \Att were able to provide such a hash this would mean that \(h_c\)
is the child of \(h_p\) but since \(h_p\) is in the path \(h_c\)
would also be in the path.

\begin{proposition}\label{prop:membership}
  Let \(mt\) be a Merkle tree with root \(h\) known in L1, and let
  agents \Att and \Btt know all elements in \(mt\).
  Assume \Att has initiated a membership \<fp> game claiming that
  \(e\) is the \(i\)-th leaf in \(mt\).
  If \(e\) is the \(i\)-th leaf in \(mt\), then \Att can win the game.
  Otherwise, \Btt can win.
\end{proposition}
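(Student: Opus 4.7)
The plan is to handle the one-step and multi-step versions separately, each splitting into the two implications. For the one-step case the argument is essentially immediate: the contract deterministically recomputes a candidate root from the element, position, and posted sibling hashes, and declares \Att the winner exactly when the recomputation equals \(h\). If \(e\) truly is the \(i\)-th leaf, \Att (who knows all nodes of \(mt\)) posts the genuine sibling hashes along the path from leaf \(i\) to the root; by construction this recomputes \(h\). If \(e\) is not the \(i\)-th leaf, any accepted witness would give a second preimage of some intermediate hash, contradicting collision resistance of \(\hash\). So \Btt wins vacuously.

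For the multi-step case, the honest direction is straightforward. Assuming \(e\) is the \(i\)-th leaf, \Att knows the entire true path \(\pi\) from the \(i\)-th leaf to the root, so for every sub-path \Btt chooses to challenge, \Att can reply with the true middle hash, which \Btt cannot dispute via the contract's final check. When the sub-path collapses to two adjacent nodes with hashes \(h_p\) (parent) and \(h_c\) (child), \Att posts the true sibling \(h_s\); by definition of the Merkle tree, the concatenation check passes, so \Att wins.

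The dishonest direction is the core of the argument. Assume \(e\) is not the \(i\)-th leaf; I would have \Btt maintain the following invariant across rounds: at the start of each bisection round, the hash at the top of the currently challenged sub-path equals the true hash at that position in \(\pi\), while the hash at the bottom differs from the true one. The invariant holds initially because the top is \(h\) itself and the bottom is \(h_e = \hash(e)\), which differs from the true \(i\)-th leaf hash by collision resistance. Given the invariant, when \Att posts a middle hash \(h_m\), either \(h_m\) matches the true middle hash of \(\pi\) or it does not; in the first case \Btt selects the lower sub-path (whose top is now truthful and whose bottom is still the original false bottom), and in the second case \Btt selects the upper sub-path (truthful top, false middle as new bottom). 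Either way the invariant is preserved and the sub-path length halves. At the base case the two-node sub-path has a true parent \(h_p\) and a false child \(h_c\); for \Att to win it would need an \(h_s\) such that the appropriate concatenation with \(h_c\) hashes to \(h_p\), and the existence of such \(h_s\) would yield a collision with the true child/sibling pair of \(h_p\), so no valid move exists and \Btt wins.

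The main obstacle is the bookkeeping in the bisection invariant, in particular ensuring that the left/right orientation of each step is tracked consistently with the binary representation of the fixed position \(i\), and that collision resistance is invoked cleanly at both the seeding of the invariant (truth of \(h_e\) vs.\ the true leaf hash) and the base case (uniqueness of sibling given the true parent and child). Since the game length is logarithmic in the number of leaves, the invariant argument yields termination in the required number of rounds, and the formalization in LEAN4 mentioned earlier should make each of these steps a routine inductive step on sub-path length.
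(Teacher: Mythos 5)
Your proof is correct and follows essentially the same route as the paper's argument: the honest direction uses \Att's knowledge of the full authentication path, and the dishonest direction has \Btt maintain exactly the invariant the paper uses (true hash at the top of the challenged sub-path, false hash at the bottom), preserved by the same selection rule and closed out at the two-node base case via collision resistance. The only differences are cosmetic — you make the two invocations of collision resistance (seeding the invariant and the final sibling check) explicit where the paper leaves them implicit, and you additionally cover the one-step variant, which the paper dispatches separately with an ``easy to see'' remark.
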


\subsection{Fraud-proofs to Prevent Illegal  Batch Tags}\label{sec:incentives:challenges}
We now present \<fpms> to prevent the consolidation of illegal batch
tags.
We assume that agents know the original batch of transactions (see
Section~\ref{sec:fraudproofs:data-availability} or
Section~\ref{sec:incentives:translate}).
If a batch tag $t=(\<batchId>,h,\sigma)$ is illegal, depending on
which condition is violated agent \Att can play one of the following
\<fps>.

\textit{\textbf{Certifiability \<fpm>}}, violation of condition
\PrCertified (\(\sigma\) does not contain \(\s\) valid signatures).
A function in L1 smart contract that arbitrates certifiability checks
whether $\sigma$ has enough signatures, and another function in the
same contract computes the aggregated public key of the claimed
signers and checks that the signature \(\sigma\) is
correct.\footnote{Alternatively, a \<fpm>, arbitrated by an L1
  contract, can arbitrate that the algorithm that checks
  multi-signatures accepts the signature provided.}
If \Att invokes one of these functions and \(\sigma\) is found to be
incorrect, then the batch tag \(t\) is discarded.  \hypertarget{removed}{}
\removetextrevision{, and \Att is rewarded for exposing the fraud.}
  
The replica that posted \(t\), as well as all agents who explicitly
staked on $t$, lose their stakes.
However, individual signers do not lose their stakes, as it is the
poster responsibility to ensure that the batch tag has enough valid
signatures and are all correct.
As described, this is a one-step \<fp> game.

\textit{\textbf{Validity \<fpm>}}, violation of condition \PrValidity.
Agent \Att claims that there is an invalid transaction request $e$ in
the batch $b$ of tag \(t\).
This \<fpm> is similar to the membership \<fp> explained before
except that in the first step the L1 contract verifies that \(e\) is
an invalid transaction request.

\textit{\textbf{Integrity \<fpm> 1}}, violation of condition
\PrIntegrityOne.
Agent \Att claims that a transaction request \(e\) appears twice in
\(t\).
\addtextrevision{Stakers can challenge any of the occurrences of \(e\), by
  selecting a path to challenge and initiating a membership \<fp>.
\Att wins the Integrity \<fpm> 1 only if \Att wins a membership \<fpm>
against all of the stakers.}\marga{not referenced in table}

\textit{\textbf{Integrity \<fpm> 2}}, violation of condition
\PrIntegrityTwo.
Agent \Att claims that a transaction request \(e\) in \(t\) appears in
a previous legal batch tag \(t'\).
As in the previous case, \addtextrevision{stakers can challenge any of
  the occurrences of \(e\),
  initiating a membership \<fp>, and \Att is required to win all of the FPs in order
to win the  Integrity \<fpm> 2.}


\subsubsection*{Honest Strategies}

An agent \Att that knows the transaction requests in an illegal batch
tag \(t\) and all previous batch tags, can play the \<fp>
corresponding to condition violated, win it and force the batch tag to
be discarded.

\begin{restatable}{proposition}{legality}\label{prop:legality}
  Let \(t\) be an illegal batch tag not yet consolidated in L1, and
  let \Att be an agent that knows the batch of \(t\) and all previous
  legal batch tags.
  \Att can prevent the consolidation of \(t\).
\end{restatable}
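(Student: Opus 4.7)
The plan is to prove this by case analysis on which of the four legality conditions (\PrCertified, \PrValidity, \PrIntegrityOne, \PrIntegrityTwo) is violated by \(t\), and for each case to exhibit a concrete \<fp> that \Att can initiate and win. Since a batch tag consolidates only when at least one staker survives every challenge that can be initiated during its challenge period, it suffices to show that \Att can always force the loss of every stake on \(t\). I would open by fixing $t=(\<batchId>,h,\sigma)$ illegal, noting that illegality implies at least one of the four conditions is violated, and observing that since \(t\) is not yet consolidated \Att is still within the challenge period for each of the corresponding \<fpms>.

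For the \PrCertified case, \Att invokes the one-step certifiability contract, which deterministically checks whether \(\sigma\) aggregates \(\s\) valid signatures over \(t\); since the claim is false by hypothesis, the contract discards \(t\) and \Att wins without any further interaction. For the \PrValidity case, \Att locates an invalid transaction request \(e\) in \(b\) (which it can do because it knows the batch), starts the validity \<fp>, and then relies on the membership \<fpm> to prove that \(e\) sits at the relevant position in the Merkle tree rooted at \(h\). Because \Att knows the entire batch, it knows every node of the Merkle tree associated with \(t\), so Proposition~\ref{prop:membership} applies and \Att wins the membership subgame; the initial validity check on \(e\) against the invalidity predicate is then decided on-chain.

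The Integrity cases are structurally the most delicate and constitute the main obstacle. Here \Att claims that some element \(e\) either occurs twice in \(b\) (Integrity~1), or occurs both in \(b\) and in a previously consolidated legal tag \(t'\) (Integrity~2). The subtlety is that each staker on \(t\) can choose which occurrence to contest, so in general \Att must be prepared to win a membership \<fpm> against every staker who responds. The key observation is that, because \Att knows the full content of \(b\) and (by hypothesis) of all previous legal batch tags, it knows all Merkle-tree nodes corresponding both to the two occurrences in \(b\) and to the occurrence in \(t'\). Hence no matter which occurrence a staker points at, \Att can still apply Proposition~\ref{prop:membership} on its side of the membership subgame (asserting that \(e\) really does occupy that position), and thereby win. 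Iterating over the at most finitely many stakers, \Att removes every stake, so \(t\) ends its challenge period unstaked and is discarded.

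Combining the four cases, in every situation \Att has a winning strategy that forces \(t\) out of the set of consolidating tags, which is the conclusion. The only delicate step I anticipate is making precise, in the Integrity cases, that ``winning against all stakers'' is in fact achievable within the challenge period — this should follow from the fact that each membership subgame has a budget bounded by the clock mechanism described in Section~\ref{sec:fraudproofs} and that \Att never needs to take a turn on an unfavorable side of the game by Proposition~\ref{prop:membership}.
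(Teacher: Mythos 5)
Your proposal is correct and follows essentially the same route as the paper's proof: a case analysis on which legality condition is violated, with the certifiability case settled by the one-step on-chain check and the validity and integrity cases reduced to membership subgames that \Att wins via Proposition~\ref{prop:membership} because it knows the full contents of \(t\) (and, for Integrity~2, of the earlier tag \(t'\)). Your additional remarks on having to beat every staker in the integrity games and on staying within the challenge period are consistent with the paper's game descriptions but do not change the argument.
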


Conversely, an honest agent staking in a legal batch tag \(t\) can win
all Certifiability, Validity, Integrity 1 and Integrity 2 \<fpm>
initiated against \(t\).

\subsection{Fraud-proof Mechanism for Property Unique Batch}

The \<fpms> described so far can be used to enforce that all batches
that consolidate are legal and available, and to prove in L1 when the
arranger violates \PrAvailability and \PrLegality.
However, these mechanisms are not enough to prove in L1 violations of
the \PrUniqueBatch property, which would allow arrangers to propose
batch tags that fork the evolution of the L2.

To prevent this kind of attacks from malicious arrangers, we present a
simple \<fpm>, called \emph{unique batch} \<fpm>, which consists of
only one step: an agent \Att invokes the unique batch L1 contract
submitting two batch tags: \((\<batchId>_1,h_1,\sigma_1)\) and
\((\<batchId>_2,h_2,\sigma_2)\), such that at least one of them has
not consolidated yet.
The unique batch contract verifies that:
\begin{itemize}
  \item both batch tags have the same
    identifier, \(\<batchId>_1 = \<batchId>_2\),
  \item each batch tag is certified, and both \(\sigma_1\) and
    \(\sigma_2\) have at least \(\s{}\) arranger replicas signatures
    each, and
  \item the hashes do not match, \(h_1 \neq h_2\).
\end{itemize}
In this case all staking agents lose their stake, agent \Att is
rewarded, and it is established that all arranger replicas whose
signature appear in \(\sigma_1\) or \(\sigma_2\) should to be removed
from the arranger, as property \PrUniqueBatch is violated.
A protocol to replace replicas is outside the scope of this paper.
However, it is not possible to identify the specific replicas that are
faulty (the signers of $\sigma_1$, the signers of $\sigma_2$ or both).
This limitation follows from the inability to distinguish between the
batches agreed by consensus and batches created by faulty replicas.

\subsubsection*{Honest Strategies}

It is easy to see that if an arranger violates property
\PrUniqueBatch, any honest agent can use the unique batch \<fpm> to
prove the violation in L1 and claim a reward.

\begin{restatable}{proposition}{uniqueness}\label{prop:uniqueness}
  Violations to \PrUniqueBatch can be proven in L1.
\end{restatable}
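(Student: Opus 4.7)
The plan is to show that the unique batch \<fpm> described just above directly witnesses any violation of \PrUniqueBatch, so that an honest agent monitoring L1 can always trigger it successfully. First I would unfold the definitions: assuming \PrUniqueBatch fails, by Property~\ref{pr:uniqueBatch} there must exist two posted certified batch tags $(\<id>,h_1,\sigma_1)$ and $(\<id>,h_2,\sigma_2)$ with $h_1\neq h_2$, each carrying at least \s{} valid arranger-replica signatures. Because both tags are posted through the \<logger> contract and the L1 guarantees liveness and correct processing of transactions (Section~\ref{sec:prelim}), any honest agent watching the \<logger> sees both tags.

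Second, I would verify that the three conditions checked by the unique batch L1 contract coincide exactly with the hypotheses we just derived: identifier equality, certification of each tag with at least \s{} signatures, and hash inequality. Thus the call to the unique batch contract with these two tags as arguments must succeed. Since the contract is a one-step mechanism with no adversarial interaction, there is no game-theoretic reasoning required: the honest agent wins as soon as the invocation is accepted, which by construction registers the violation on L1.

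The only genuine subtlety is timing, since the contract requires that at least one of the two tags has not yet consolidated. I would handle this by invoking the challenge period: an honest agent that continuously monitors the \<logger> observes each new certified batch tag within its challenge window; in particular, the later of the two conflicting tags must be observable before it consolidates, so the honest agent can issue the unique batch invocation in time. This is where I expect the main (though still minor) obstacle to lie, since it relies on the standard monitoring assumption for honest L1 participants; I would state this hypothesis explicitly and then conclude that the honest agent's invocation is accepted, the violation is registered on L1, and by the mechanism's design the staking agents lose their stakes while \Att collects the reward, finishing the proof.
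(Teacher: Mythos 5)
Your proposal is correct and follows essentially the same route as the paper's proof: a violation of \PrUniqueBatch yields two posted certified batch tags with equal identifiers and distinct hashes, and submitting both to the unique batch contract before the later one consolidates passes its checks and registers the violation on L1. Your explicit treatment of the timing subtlety (that the later conflicting tag is still within its challenge window when observed) is a reasonable elaboration of a point the paper's proof passes over with the phrase ``before the second batch tag consolidates,'' but it does not change the argument.
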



Conversely, if an arranger does not violate property \PrUniqueBatch,
then all invocations to unique batch contract fail.

\subsection{Mechanization using Lean4}\label{sec:lean}
\newcommand{\eqdef}{\ensuremath{\mathop{\stackrel{\text{def}}{=}}}}

In general, in order to prove that a single honest
player can prevent the consolidation of illegal and unavailable batch
tags and can prove in L1 violations to safety properties, one has to
prove all \<fpms> presented in the previous sections.
However, in LEAN4, we focus on proving that only legal batches
consolidate, as the correctness of the data-availability \<fp> follows
from the correctness of RDoC, and the unique batch \<fpm> is a
one-step \<fp> where the unique batch L1 contract perform a simple
verification to decide the winner.
\removetextrevision{
Also, we do not model time and assume players can interact freely with
the L1, i.e. we only focus on the correctness of the arguments.
}\hypertarget{fp:time}{}\addtextrevision{We focus on the correctness arguments, so we do not
  model time in our proofs, assuming that players can interact freely
  with the L1.
%
%
  Similarly to the FPs for STFs in optimistic rollups, we assume that
  agents have enough time to perform actions.
}
We define claims about batch tags legality as disputable assertions
(DAs) composed of a tree where elements are at the leaves, and such
that the root of the Merkle tree is the hash of the tag.

From Section~\ref{sec:incentives:challenges}, we have four different
\<fpms>, one for each way of making a batch illegal.
Certifiability \<fpm> does not require any LEAN check as is a one-step
\<fp> which is directly checked by a L1 smart contract.
\<fpms> Validity and Integrity 1 are local to the batch tag, while
Integrity 2 depends on the past history of the L2 batch tags.
\removetextrevision{ We focus on the local notion of validity, we are
  working on proving the history dependent property Integrity 2 and we
  leave it as future work.
}\hypertarget{fp:lean}{}\addtextrevision{We define two notions of
  valid DA, one is local and depends only on the values in a single
  batch, while the other depends on the entire history.  }
\martin{Forgot nomenclature here. What's the proper way of calling the
  history?}

\begin{definition}[Local Valid DA]
  We say that a batch of transaction requests is local valid if and
  only if all its transaction requests are valid and there are no
  duplicated elements.
\end{definition}
\hypertarget{fp:leanDef}{}
\addtextrevision{
  \begin{definition}[History Valid DA]
    We say that a batch of transactions requests is history valid if and only if
    it is local valid and no transaction appears in the history.
  \end{definition}
}

In LEAN, we write the above definitions as, assuming that
\(\mathsf{validTr}\) is a predicate indicating when a transaction
request is valid:


  \begin{align*}
    \mathsf{localValid}(\mathsf{tree}, h)
    \eqdef &
    \begin{array}{rl}
      & \mathsf{merkleRoot}(\mathsf{tree}) = h \\
      \wedge & \forall e \in \mathsf{tree}, \mathsf{validTr}(e) \\
      \wedge & \mathsf{NoDup}(\mathsf{tree}) \\
    \end{array} \\
\hypertarget{fp:Lean:FormalDef}{}
    \mathsf{globalValid}(\mathsf{tree}, h, \mathsf{history})
    \eqdef &
    \begin{array}{rl}
      & \mathsf{localValid} \\
      \wedge & \forall e \in \mathsf{tree}, e \notin \mathsf{history} \\
    \end{array}
  \end{align*}
The predicate \(\mathsf{NoDup}\) simply flattens the tree and checks
that there are no duplicated elements in the resulting list.
Additionally to the previous valid conditions, we have that the
proposed Merkle root hash is in fact the Merkle root of the proposed
tree.

Without getting into implementation details, we define\removetextrevision{
  a simple protocol, called
  \(\mathsf{LinearL2Protocol}\),}\hypertarget{fp:lean:protocols}{}
\addtextrevision{two simple protocols,
called \(\mathsf{LinearL2Protocol}\) and \(\mathsf{LinearHistoryL2Protocol}\),}
where one player
proposes a DA, a tree of elements plus its (supposed) Merkle root, and
the other player decides what to do next.
\hypertarget{fp:lean:protocols-diff}{}\addtextrevision{The main
  difference between the protocols is that the protocol
\(\mathsf{LinearHistoryL2Protocol}\) takes into the account the history of the
L2 blockchain.}
The options of the second player are to accept the DA or challenge one
of the claims above.
Challenging a claim means that one \<fpm> is triggered.
We capture the honesty of the second player and prove that
only valid DAs are accepted and all faulty ones are challenged and
discarded.

The honest second player follows a simple algorithm: first, compute
the Merkle root and if it differs from the hash proposed, trigger the
decompress-and-hash \<fpm>.\footnote{The
  decompress-and-hash \<fp> takes a compressed batch and the (supposed) Merkle
root, but in our LEAN formalization, we abstract this away for simplicity. The
decompression step can be modeled as a partial function triggering an external
one-step check.}
%
%
If the proposed Merkle root is correct, the honest player checks if
all elements are valid.
In LEAN4, we find the first invalid element in the tree (if there is
one), in which case the honest second player triggers a membership
\<fpm>.
If all elements are valid, the honest player checks if there are
duplicated elements, again obtaining the first repeated element and
triggering the Integrity 1 \<fpm>, which consists of \emph{two}
membership \<fps> in sequence.
\hypertarget{fp:lean:honest}{}\addtextrevision{In the case of the honest player for the history protocol, we also check that
all elements are not present in the current history.
}

We have two main \<fps>: one for data availability and one for membership.
%
In LEAN, since we care about how computations are performed, we
defined several similar \<fpm>, in particular for membership.
The multi-step membership \<fp> can be played following the path in
the Merkle tree (which is \emph{linear}), one level at a time, either
from the root to the leave or from the leave to the root.
This game can be played by bisecting the path in a \emph{logarithmic}
game, as presented in Section~\ref{sec:fraudproofs:membership}.
All these variant are implemented in the library and we are working on proving
them equivalent for honest players.
%

Honest players in our LEAN implementation are defined as players that
know all information, in particular they know all transaction requests
in a batch.
Using this information, they can compute all correct answers to all
possible challenges when they follow the algorithm described above to
trigger a \<fp> game and then play the game following the
corresponding winning strategy.
%
In the library, we proved that honest players always win the
data-availability and membership \<fpms>.
When honest players challenge faulty proposers, they win, leading to the
proposed DA be rejected.
When honest players propose a new DA, they can defend their claim and win
against all possible challengers.

\hypertarget{fp:lean:theorem}{}
\begin{theorem}[Only Valid \removetextrevision{Local} DAs]
\removetextrevision{Let \(\mathsf{Honest}\) be an honest player following the previous algorithm,
 triggering \<fpm> when needed.
 Then, the protocol \(\mathsf{LinearL2Protocol}\) with one player
 being \(\mathsf{Honest}\) only accepts local valid DAs \(\mathsf{localValid}\).}
\addtextrevision{
  Let \(\mathsf{Honest}\) and \(\mathsf{HistHonest}\) be honest players
  following the previous algorithms, triggering FPs when needed.
  Then,  \(\mathsf{LinearL2Protocol}\) and
\(\mathsf{LinearHistoryL2Protocol}\) with one player being \(\mathsf{Honest}\)
only accepts local valid DAs \(\mathsf{localValid}\) and
history valid DAs \(\mathsf{globalValid}\) respectively.
}
 \end{theorem}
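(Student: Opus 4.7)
The plan is to prove the theorem by case analysis on the proposed DA, using the honest player's algorithm as a decision procedure that exhaustively enumerates the ways a DA can fail, and then appealing to the correctness of each underlying \<fpm> to show that every invalidity is detected and punished. Since the two protocols differ only in whether the history check is performed, I would prove both statements in parallel, treating the history branch as an additional case that fires only in \(\mathsf{LinearHistoryL2Protocol}\).

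First I would unfold the definition of \(\mathsf{localValid}\) (resp.\ \(\mathsf{globalValid}\)) as a conjunction of four (resp.\ five) clauses: (i) \(\mathsf{merkleRoot}(\mathsf{tree}) = h\); (ii) every leaf satisfies \(\mathsf{validTr}\); (iii) \(\mathsf{NoDup}(\mathsf{tree})\); and, for the history protocol, (iv) no leaf belongs to \(\mathsf{history}\). The honest player's algorithm checks these conjuncts in exactly this order, triggering a distinct \<fpm> for the first one that fails. If the proposer submits a DA for which all conjuncts hold, then the algorithm short-circuits past every check and the DA is accepted, so the accepted-only-if-valid direction is immediate from tracing the algorithm.

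For the harder direction, I would assume the proposed DA violates at least one conjunct and show that the honest player wins the triggered \<fpm>, causing the DA to be rejected. The cases are: if (i) fails the honest player triggers the decompress-and-hash \<fpm>; by Corollary~\ref{corollary:rdoc} the honest player wins, since the hash posted is not the Merkle root of the tree. If (i) holds but (ii) fails, the honest player locates the first invalid element \(e\) at some index \(i\), triggers the validity \<fpm>, and wins the resulting membership \<fpm> by Proposition~\ref{prop:membership}, since \(e\) is indeed the \(i\)-th leaf and the L1 contract's initial check confirms \(\neg\mathsf{validTr}(e)\). If (i) and (ii) hold but (iii) fails, the honest player finds the first pair of indices witnessing a duplicate element and initiates Integrity~\<fpm>~1, winning both of the constituent membership games by Proposition~\ref{prop:membership}. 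For the history protocol, if (i)--(iii) hold but (iv) fails, the honest player identifies an offending element together with its earlier occurrence in the history and wins the corresponding Integrity~\<fpm>~2 by the same proposition applied to both the current and the historical Merkle tree.

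The main obstacle is case (iv): Integrity~\<fpm>~2 is the only branch that depends on past batches, and the honest player's ability to identify the earlier occurrence relies on knowing all previously consolidated batches, which is exactly the information content of \(\mathsf{history}\) in the formal model. Concretely, I would need to show that the tree-and-history data accessible to \(\mathsf{HistHonest}\) suffices to produce, for any offending element, both a membership proof in the current tree and a membership proof in some historical tree whose root is already fixed on L1, and to confirm that both membership \<fps> are instances covered by Proposition~\ref{prop:membership}. Once this is in place, the theorem follows by combining the cases with the algorithm's ordering. The remaining steps, such as verifying that \(\mathsf{NoDup}\) is correctly detected when flattening the tree and that the first-invalid-leaf search terminates with an index usable by the membership \<fpm>, are routine in Lean4 and would be discharged by structural recursion on the tree.
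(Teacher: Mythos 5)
Your proposal is correct and takes essentially the same route as the paper's (LEAN4-mechanized) proof: a case analysis that follows the honest player's check order, maps each violated conjunct of \(\mathsf{localValid}\)/\(\mathsf{globalValid}\) to the fraud-proof it triggers, and discharges the win via Corollary~\ref{corollary:rdoc} and Proposition~\ref{prop:membership}, with the converse direction obtained by tracing that the honest player's algorithm accepts when every check passes. The only nit is a label slip---the direction you call ``accepted-only-if-valid'' is in fact the valid-implies-accepted direction---but both directions are actually covered by your argument.
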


In LEAN the previous result is expressed as follows, where
\((\mathsf{tree}, h)\) is a DA proposed by an arbitrary player:
\[
  \mathsf{LinearL2Protocol} \, (\mathsf{tree}, h) \, \mathsf{Honest}
  \iff \mathsf{LocalValid}(\mathsf{tree}, h)
\]
\hypertarget{fp:lean:th:formalization}{}
\addtextrevision{
and given history \(H\):
\[\begin{array}{l}
  \mathsf{LinearHistoryL2Protocol} \, (H,\mathsf{tree}, h) \, \mathsf{HistHonest}
  \iff\\ \hspace{15em} \mathsf{glovalValid}(\mathsf{tree}, h, H)
\end{array}\]
}
%
%
The proof relies on verifying each function precisely.
For example, when finding a duplicated element, the proof reduces to
building the proper evidence to guarantee winning the corresponding
\<fpm>.
We also need to prove that if there are duplicates the corresponding
game is triggered, ensured by the moves of our honest player.
In total, the library has approximately 5256 lines of commented LEAN
code.
The main results can be found in the file ``L2.lean'' of the companion
artifact.


\section{Incentives}
\label{sec:incentives}

In this section, we present a collection of incentives that provides
payments to replicas based on evidence of work, in particular for
generating and signing hashes, posting batches into L1, and
translating hashes into batches.
The \<fpms> from the previous section are deterrents against incorrect
behavior that result in losing the stake.
These are last-resort mechanisms to catch faulty behavior, like
illegal and unavailable batch tags posted in L1, showing evidence of
replicas faulty behavior.

We present now a framework where \emph{rational} behavior (maximizing
profit) aligns with honest behavior, i.e. rational behavior
corresponds to the the arranger protocol.
We also present a simple cost and reward analysis to determine the
minimum budget required to guarantee that no illegal or unavailable
batch tag consolidates, and to prove that an arranger violates a
safety property.
\hypertarget{treatment-incentives}{}\addtextrevision{We give here an initial
  understanding of incentives, and introduce the necessary parameters
  for instantiating a formal game-theoretic model such as~\cite{alba2025scaffino}.}
\subsection{Incentives to Generate and Post Batches}
\label{sec:incentives:generate}
Each consolidated batch generates the following rewards:

\begin{compactitem}
\item Every replica receives a constant amount $k_1$ of L2 tokens.
\item Replicas signing consolidated batches receive $k_2$ additional
  L2 tokens where $k_2$ is a monotonically increasing function on the
  number of signatures and transaction requests.
\item The replica posting the batch in L1 receives an additional
  payment of $k_3$ L2 tokens to cover the posting costs.
\end{compactitem}

These rewards create incentives to (1) participate in the arranger
protocol; (2) sign batches, and communicate and collect signatures;
(3) include as many transaction requests as possible in a batch; and
(4) post batch tags in L1.

These incentives combined with our \<fps> make deviating from the
arranger protocol irrational when there are no external payments.
As result, our system is \emph{incentive
  compatible}~\cite{Ledyard1989}. 
A more detailed quantitative study including expected utility and a
formal game-theoretic security analysis (e.g., using frameworks
like~\cite{brugger2023CheckMate}), is left as future work.

All payments are charged to L2 users submitting transaction requests
and made effective by STFs when computing the effects of batches that
consolidated, including the payments as part of new L2 blocks.
Optionally, L2 tokens can be minted by the L1 contracts that govern
the evolution of the L2.

\subsection{Incentive to Translate Hashes into Batches}
\label{sec:incentives:translate}
%

We present now an alternative cheaper protocol for replicas to
translate hashes and get paid using zero-knowledge contingent
payments~\cite{campanelli2017zkcontingent,fuchsbauer2019WIIsNotEnough,nguyen2020WIIsAlmostEnough,GomezMartinez25AZKCP}.
Unlike the data availability \<fpm> from
Section~\ref{sec:fraudproofs:data-availability}, this protocol
operates primarily off-chain although it does not provide evidence of
fraud if replicas remain silent.

Consider a client \Ctt, such as an STF, who wants to know the batch of
transaction requests associated with a batch tag with hash $h$ in
order to compute the next L2 block, or to check the validity of an L2
block posted.
When \Ctt requests the translation of $h$ it can contact directly
replica \Rtt, which knows the inverse translation $b$, but in this
protocol \Rtt does not respond immediately with $b$.
Instead, 
\removetextrevision{\Rtt encrypts $b$ using a secret key $k$, such that
$w = \texttt{Enc}_k(b)$ and computes $y$ such that
$\texttt{SHA256}(k) = y$.
Then, \Rtt sends $w$ and $y$ to \Ctt along with a zero-knowledge proof
that $w$ is an encryption of $b$ under key $k$ and that
$\texttt{SHA256}(k) = y$.}
\hypertarget{com:test}{}
\addtextrevision{\Rtt creates a fresh secret key $k$, and computes 
$y = \texttt{H}(k)$, where $\texttt{H}$ is an algebraic instance of the hash function $\texttt{H}$, such as $\texttt{H}(k) = g^k$, where $g$ is a generator of a cyclic group. 
Subsequently, \Rtt encrypts $b$ under public key $y$ using an algebraic encryption scheme (e.g., ElGamal~\cite{Elgamal85}). 
The observation in~\cite{GomezMartinez25AZKCP} is that it is possible to use an 
efficient Sigma protocol to create a zero-knowledge proof $\pi$ proving that $w$ is a valid encryption of $b$ under $y$ and that $y = \texttt{H}(k)$, without revealing the secrets $b$ or $k$. 
Finally, \Rtt sends $\pi$, $w$ and $y$ to \Ctt. }
Client \Ctt verifies the proof and, if it is correct, \Ctt generates a
contingent payment where \Rtt is the only beneficiary.
%
%
Replica \Rtt can collect the payment only by revealing the secret $k$.
When \Rtt reveals $k$ to collect the payment \Ctt learns $k$ so it can
decipher $w$ and obtain $b$.
Appendix~\ref{app:htlc} shows the pseudo-code implementing a payment
system based on hashed-timelock contracts.
All communication between \Rtt and \Ctt is offchain and one-to-one.
The only communication with L1 occurs when \Ctt creates the contingent
payment and when \Rtt reveals the key \(k\) in L1 to collect the
payment.
Moreover, batch \(b\) is never posted in L1.
Therefore, latency and bandwidth between replicas and clients are
likely to not represent a major bottleneck.

A faulty replica \Rtt can participate in the first part of the
protocol and then refuse to reveal $k$ causing client \Ctt to incur in
unnecessary costs.
To prevent this, \Rtt must also sign $y$ and send the signature to
\Ctt, so \Ctt has evidence.
If \Rtt fails to disclose $k$ after some time has passed, \Ctt can use
the signature as evidence to accuse \Rtt of remaining silent, have
\Rtt stake removed and receive some compensation.

In contrast to the data availability \<fpm>, this protocol does not
offer any guarantee to clients when replicas refuse to participate.
However, this protocol is cheaper for clients and more rewarding for
replicas than the data availability \<fp>.\hypertarget{reward}{}
\addtextrevision{Arranger replicas get a payment when translating
  batches using the offchain protocol but they do not receive any
  reward when the data availability challenge is used.  On the other
  hand, the offchain protocol is cheaper for clients than the data
  availability challenge.}
Moreover, if replicas reveal the data when forced to participate in a
data availability \<fp> they lose not only part of the communal stake
but also the reward that they would have obtained with this contingent
payment.
Hence, in practice, the data availability \<fp> is a deterrent
for replicas withholding data, and thus, this offchain protocol is always used
to translate hashes (see Section~\ref{sec:incentives:costs}), as both
clients and replicas benefit from it.

\subsection{Cost Analysis}
\label{sec:incentives:cost-analysis}
\label{sec:incentives:costs}
We perform a simple cost and reward analysis for each \<fp> mechanism
presented in Section~\ref{sec:fraudproofs} and the translation
protocol (see Section~\ref{sec:incentives:translate}) to determine the
minimum budget required by honest agents to guarantee that no illegal
or unavailable batch consolidates, and to prove violations of safety
properties.

First, we define the following variables where $x$ can be \KWD{data},
\KWD{certifiability}, \KWD{validity}, \KWD{integrity1},
\KWD{integrity2} or \KWD{uniqueness}, \addtextrevision{and $y$ can be
  a move in a \<fpm>:}
\begin{compactitem}
\item $s$ represents the tokens needed to stake in a batch tag.
\item
  $s_{\KWD{\addtextrevision{com\_}data}}$ are the tokens
  provided in a communal stake by each staker.
\item \addtextrevision{$s_{x}$ represents the stake needed by a client
    to start \<fp> $x$.}
\item $\CC{\KWD{translate}}$ is the client's costs for using the offchain
  translation protocol. \addtextrevision{This includes the gas used to
    create the contract and the payment done to the staker.}
\item
  $\SC{\KWD{translate}}$ and $\SR{\KWD{translate}}$ are
  stakers cost (resp. reward) for
  translating a hash
  using the offchain protocol.
\item $\CC{x}$ and $\SC{x}$ are client (resp. stakers) cost for
  participating in the \<fpm> of type $x$.
  \item
  $\CR{x}$ are client reward for winning the
  \<fp> of type $x$.
  \hypertarget{concrete-values:new-vars}{}
\item \addtextrevision{$C_{y}$ represents the cost of performing move
    $y$. The cost is covered by the player performing the move, which
    depending on move $y$ can be the staker or the client.}
\end{compactitem}
All of these variables are in L1 tokens.
\newcommand{\SZ}{\KWD{SZ}}

\addtextrevision{
The cost of performing a move represents the gas required, which
depends on the underlying L1 blockchain.}
The cost for participating in an \<fpm> \removetextrevision{represents
  the gas needed in the worst case}\addtextrevision{is upper bounded
  by the gas needed in the worst case, i.e., the sum of the cost of
  all moves in the worst case.}
\removetextrevision{which depends on the underlying L1 blockchain.
 The cost of staking should be adjusted taking in account the cost of
 playing the \<fpm> and also the value locked in the L2.}

Stakers do not receive any reward for winning an \<fp> game, as their
motivation for participating is to not lose their stake and eventually
collect their reward from consolidated batches.
However, they do receive a reward for correctly translating hashes
using the offchain protocol as there is no stake at risk there.

The following relations must hold between the parameters:
\begin{compactenum}
\item \label{client-cost-reward}
  The client costs to participate in \<fpm> $x$ is
  significantly less than the reward for winning it,
  $\CC{x} \ll \CR{x}$.
\item \label{staker-cost-stakes} Clients cover the costs for stakers in all \<fpms> except in the
  data availability \<fpm> where this cost is covered by the communal
  stake.
  In symbols, \(\sum s_{\KWD{\addtextrevision{com\_}data}} > \SC{\KWD{data}}\) and for all
  remaining \<fpm> \(x\), \removetextrevision{$\CC{x} > \SC{x}$} \addtextrevision{$s_{x} > \SC{x}$.}
  \removetextrevision{
In particular, no staker plays the certifiability and
        unique-batch \<fps>, therefore
        $\SC{\KWD{certifiability}} = \SC{\KWD{uniqueness}}= 0$.}
  \item \label{client-reward-staker-stake}
    The stakes taken from the losing player cover the rewards.
    For the validity and integrity \<fpms>, the client reward for
    winning the \<fp> is less than the challenged agent stake,
    $\CR{x} < s$.
    For the data, certifiability and unique batch \<fpms>, the client
    reward is bounded by the sum of all stakes in the batch,
    i.e. $\CR{\KWD{certifiability}} < \sum s$, $\CR{\KWD{data}} < \sum s$ and
    $\CR{\KWD{uniqueness}} < \sum s$.
  For the offchain translation protocol, clients cover the
  stakers reward, $\CC{\KWD{translate}} > \SR{\KWD{translate}}$.
\item \label{offchain-replica}The replicas must have a motivation to
  engage in the offchain translation protocol and gain a profit.
  That is, the staker costs to translate hashes in the offchain
  protocol is significantly less than its reward,
  $\SC{\KWD{translate}} \ll \SR{\KWD{translate}}$.
\item \label{offchain-client} Clients must have a motivation to use offchain translation.
  That is, the client cost for using the offchain translation protocol
  is significant less than participating in the data availability
  \<fpm>, $\CC{\KWD{translate}} \ll \CC{\KWD{data}} + s_{\KWD{data}}$.
  \hypertarget{concrete-values:new-relations}{}
  \addtextrevision{\item \label{client-cost-data} In the worst case,
    during the data-availability \<fp> the client must initiate the
    game and bisect the execution trace of the WASM compilation of
    program P until the traces is reduced to a single instruction.
    Therefore, the client cost for participating in the data-availability
    \<fp> is
    $\CC{\KWD{data}} = C_{\KWD{init\_data}} + l \times
    C_{\KWD{bisect\_subtrace}}$, where \(l\) is the length of the
    challenged trace.}
\addtextrevision{\item The client cost for
    participating in the certifiability \<fp> is the maximum cost
    between checking the number of signatures and checking
    the aggregated signature, that is,
    $\CC{\KWD{certifiability}} = \max(C_{\KWD{check\_size}},
    C_{\KWD{check\_agg}})$.}  \addtextrevision{\item Unique-batch
    \<fp> is a one-step game, then
    $\CC{\KWD{uniqueness}} = C_{\KWD{unique\_batch}}$}
  \addtextrevision{\item \label{client-cost-integrity}In the worst-case, during validity, integrity
    1 and integrity 2 \<fps> a client must initiate the game, bisect
    the path until it consists of only two nodes, and reveal the sibling
    of the lower node. In symbols, the client cost for participating in
    the validity \<fp> is
    $\CC{\KWD{validity}} = C_{\KWD{init\_validity}} + (\log\log(\SZ) -
    1) \times C_{\KWD{bisect\_subpath}} + C_{\KWD{reveal\_sibling}}$,
    $\CC{\KWD{integrity_1}} = C_{\KWD{init\_integrity1}} +
    (\log\log(\SZ) - 1) \times C_{\KWD{bisect\_subpath}} +
    C_{\KWD{reveal\_sibling}}$,
    $\CC{\KWD{integrity_2}} = C_{\KWD{init\_integrity2}} +
    (\log\log(\SZ) - 1) \times C_{\KWD{bisect\_subpath}} +
    C_{\KWD{reveal\_sibling}}$. }
  \addtextrevision{\item \label{staker-cost-data} In the
    data-availability \<fp> the staker must provide the compressed
    batch and then select which subtrace to challenge until the
    subtrace correspond to a single instruction.
    Therefore, stakers cost for participating in the data-availability
    \<fp> is
    $\SC{\KWD{data}} = C_{\KWD{post\_compressed}} + (l-1) \times
    C_{\KWD{select\_subtrace}}$.}  \addtextrevision{\item No staker
    plays the certifiability and unique-batch \<fps>, therefore
    $\SC{\KWD{certifiability}} = \SC{\KWD{uniqueness}}= 0$.}
  \addtextrevision{\item In the validity \<fp> the staker must select
    which subpath to challenge until the subpath has length 1. Then,
    stakers cost for participating in the validity \<fp> is
    $\SC{\KWD{validity}} = \log\log(\SZ) \times
    C_{\KWD{select\_subpath}}$.}
  \addtextrevision{\item \label{staker-cost-integrity}In the integrity
    1 and 2 \<fps> the staker must first select which path to
    challenge and then select which subpaths to challenge until the
    subpath has length 1. Therefore, stakers cost for participating in
    the integrity 1 and 2 \<fps> is
    $\SC{\KWD{integrity1}} = \SC{\KWD{integrity2}} =
    C_{\KWD{select\_path}} + \log\log(\SZ) \times
    C_{\KWD{select\_subpath}}$.}

\end{compactenum}

We now determine the minimum budget that an honest agent \Att needs to
guarantee that illegal or unavailable batch tags are discarded.
This amount is computed from the maximum requirement for playing the
corresponding mechanism.
In the case of \PrValidity, \PrIntegrityOne, and \PrIntegrityTwo, \Att
may need to get the data first, and thus, \Att needs to account tokens
for playing the data availability \<fp> game.
Agent \Att has to play these mechanisms against all stakers.
The budget proposed is enough if \Att plays all of them in sequence,
accumulating rewards along the way.

\begin{restatable}{proposition}{costs}\label{prop:cost}
  An agent \Att that knows all transaction requests in previous legal
  batch tags can discard unconfirmed illegal or unavailable batch tags,
  provided that \Att has at least the following
  \removetextrevision{tokens} \addtextrevision{budget
    \(B =
    \)}\(max(\addtextrevision{s_{\KWD{certifability}}}+\CC{\KWD{certifability}},
  \addtextrevision{s_{\KWD{data}}} + \CC{\KWD{data}} +
  max(\addtextrevision{s_{\KWD{validity}}} + \CC{\KWD{validity}},
  \addtextrevision{s_{\KWD{integrity1}}} + \CC{\KWD{integrity1}},
  \addtextrevision{s_{\KWD{integrity2}}} + \CC{\KWD{integrity2}}))\).
\end{restatable}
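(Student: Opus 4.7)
The plan is to proceed by case analysis on which arranger safety property batch tag \(t\) violates. Since \Att knows every transaction request in the previous legal batch tags and can inspect \(t\) and its signatures on L1, \Att can locally decide which of \PrCertified, \PrAvailability, \PrValidity, \PrIntegrityOne or \PrIntegrityTwo is violated, and on that basis commit to exactly one of two possible sequences of \<fps>. The two branches of the outer maximum defining \(B\) correspond to these two sequences: either a single certifiability \<fpm>, or a data availability \<fpm> followed by one legality \<fpm>. Because \Att commits to the sequence before staking, it only needs the larger of the two budgets, which justifies the outer \(\max\).

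First, I would dispatch the certifiability case: if the aggregate signature in \(t\) lacks \(\s\) valid signatures, \Att stakes \(s_{\KWD{certifability}}\) and plays the one-step certifiability \<fpm> at gas cost \(\CC{certifability}\); the L1 contract verifies the failure and discards \(t\). This matches the first branch of the outer max. If instead \(t\) violates only \PrAvailability, \Att plays the data availability \<fpm> with budget \(s_{\KWD{data}} + \CC{data}\); by Proposition~\ref{prop:data-availability} either the batch is revealed or \(t\) is discarded. Since the legality branch in \(B\) is strictly larger, this case is subsumed by the second argument of the outer max.

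The critical case is when \(t\) is illegal in a legality sense, with or without being unavailable. If \Att does not yet know the batch, it first plays the data availability \<fpm> to force disclosure: by Corollary~\ref{corollary:rdoc} the stakers cannot post bogus data without losing the decompress-and-hash game, so they either reveal the batch or \(t\) is discarded outright. In the worst case \Att loses \(s_{\KWD{data}} + \CC{data}\) here but now knows the batch. \Att then plays the appropriate validity, integrity 1 or integrity 2 \<fpm> according to the violated property, staking \(s_x\) and paying \(\CC{x}\); by Proposition~\ref{prop:legality}, \Att wins and \(t\) is discarded. The peak commitment in this chain is \(s_{\KWD{data}} + \CC{data} + \max(s_{\KWD{validity}}+\CC{validity}, s_{\KWD{integrity1}}+\CC{integrity1}, s_{\KWD{integrity2}}+\CC{integrity2})\), which is the second branch of \(B\).

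The main obstacle will be the integrity 1 and integrity 2 games when several agents have staked on \(t\), since \Att must then win a membership sub-\<fpm> against every staker. The argument uses relations~(1) and~(3) of Section~\ref{sec:incentives:costs}: from each defeated staker \Att collects a reward \(\CR{x}\) that strictly exceeds \(\CC{x}\), so after winning against the first staker the accumulated rewards alone finance every subsequent sub-game. Hence the upfront budget only needs to cover the first membership game, which is already absorbed in the \(\CC{x}\) term of \(B\). A minor subtlety to verify is that the data availability stake is resolved before the legality game begins, so the two upfront commitments coexist only at the moment captured by the second branch of the max; this holds because the legality \<fpm> against \(t\) cannot start unless \(t\) still exists at the end of the data availability game.
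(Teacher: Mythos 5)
Your proposal is correct and follows essentially the same route as the paper's own proof: a case analysis on the violated condition, with the certifiability case handled by a one-step game, the legality cases preceded by a data availability game to learn the batch, and the multi-staker issue resolved by the observation that the reward $\CR{x} > \CC{x}$ collected from each defeated staker finances the subsequent games so that only the first game must be covered upfront. The only difference is cosmetic: the paper also argues necessity of the budget in the validity case, while you (consistently with what the proposition actually claims) argue only sufficiency.
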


The complete proof can be found in
Appendix~\ref{app:proofs}. \imarga{TODO: update proof with stake}

\hypertarget{implications-prop-five}{}
\addtextrevision{
Since the required budget to discard illegal or unavailable  
batch tags, \(B\), depends on clients stake and the size of the batch,   
any L2 that decides to implement our proposal can set those values
to obtain an upper-bound for \(B\).
By limiting the number of transaction requests per batch and the stake
clients must put to play each \<fpm>, any implementation of our
proposal can guarantee that an adversary cannot construct an illegal
batch tag with arbitrary cost in such a way that no honest agent \Att
has sufficient tokens to successfully challenge its illegality.
More generally, our work provides a framework with its proof of
correctness, but leaves variables like \SZ and \(s_{x}\) open.
This allows for concrete implementations to set these parameters
according to their specific use cases and requirements.}

By discarding an illegal or unavailable batch tag that is certified,
an agent also proves that the arranger violates properties \PrLegality
and \PrAvailability.
Proving that an arranger violated \PrUniqueBatch requires playing the
unique batch \<fp>.
Therefore, proving that an arranger violates any safety property
requires the cost of playing the unique batch \<fp> and the cost of
discarding illegal and unavailable batches.

\begin{restatable}{corollary}{cost-safety}\label{prop:cost-safety}
  An agent \Att that knows all transaction requests in previous legal
  batch tags can prove that the arranger violated a safety property,
  provided that \Att has at least the following tokens
  \removetextrevision{$max(\CC{certifability}, \CC{data} +
    max(\CC{validity},\CC{integrity1},\CC{integrity2}))$}
  \addtextrevision{$max(B, s_{\KWD{uniqueness}}+\CC{\KWD{uniqueness}})$}.
\end{restatable}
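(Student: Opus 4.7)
The plan is to reduce the corollary to Proposition~\ref{prop:cost} together with the analysis of the unique-batch fraud-proof, by a case split on which safety property of the arranger is violated. Since the arranger has exactly three safety properties---\PrLegality, \PrAvailability, and \PrUniqueBatch---it suffices to exhibit, for each, a concrete strategy whose cost fits within the claimed budget and accumulate the costs by taking the worst case.

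First I would handle \PrLegality and \PrAvailability together. By definition, a violation of \PrLegality means that some certified batch tag posted in L1 is not legal, and a violation of \PrAvailability means that some posted legal batch tag cannot be translated to its batch. In either case, there is some certified but illegal-or-unavailable batch tag $t$. By Proposition~\ref{prop:cost}, agent \Att can discard $t$ provided \Att has budget at least $B$. The key additional observation is that discarding a \emph{certified} batch tag (i.e., one with $\s$ valid signatures) is itself a proof in L1 that the arranger violates the corresponding safety property, since by \PrLegality every certified batch tag posted by a correct arranger must be legal, and by \PrAvailability every legal certified batch tag must be translatable. Hence a cost of $B$ suffices for these two cases.

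Next I would handle \PrUniqueBatch. By definition of violation, there exist two posted certified batch tags $(\<id>,h_1,\sigma_1)$ and $(\<id>,h_2,\sigma_2)$ with $h_1 \neq h_2$. By Proposition~\ref{prop:uniqueness} and the description of the unique-batch \<fp>, agent \Att can invoke the unique-batch contract with these two tags to expose the violation in L1. The cost of this one-step game is the stake plus the L1 gas, which is exactly $s_{\KWD{uniqueness}}+\CC{\KWD{uniqueness}}$. Since \Att observes L1 it knows both tags, so no auxiliary information (such as the underlying batches) is required, meaning the data-availability game is not needed here.

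Taking the maximum over the three cases yields the stated budget $\max(B,\, s_{\KWD{uniqueness}}+\CC{\KWD{uniqueness}})$. The main subtlety---and the only step that is not purely bookkeeping---is to argue that \Att never needs to fund more than one game concurrently: whichever safety property is violated, \Att selects the single corresponding fraud-proof, so the budgets do not add. A secondary subtlety is that for the \PrLegality/\PrAvailability branch, \Att may first need to learn the batch via the data-availability \<fp> before triggering a validity/integrity game, but this is already absorbed into $B$ by Proposition~\ref{prop:cost}, where costs are sequenced and intermediate rewards are collected along the way. I expect the hardest part to be only this accounting clarification, since everything else is a direct appeal to already-proved propositions.
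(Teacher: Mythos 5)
Your proof is correct and follows essentially the same route as the paper, which argues (in the main text, just before the corollary, with no separate appendix proof) that discarding a certified illegal or unavailable batch tag via Proposition~\ref{prop:cost} already proves a violation of \PrLegality or \PrAvailability within budget $B$, while \PrUniqueBatch needs only the one-step unique-batch game costing $s_{\KWD{uniqueness}}+\CC{\KWD{uniqueness}}$, so the maximum of the two suffices. Your additional remarks---that only one game ever needs to be funded at a time and that the data-availability cost is already absorbed into $B$---are consistent with, and slightly more explicit than, the paper's accounting.
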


Finally, the offchain protocol is cheaper for clients and replicas, and
more rewarding for replicas than the data challenge.
Therefore, rational agents are inclined to use the offchain protocol.

Furthermore, by adjusting the allocation of the total cost of the data
availability \<fpm> between clients and stakers, the protocol can
influence the equilibrium price of the contingent payment.
\removetextrevision{This will be explored in more detail in a full
  version of the paper.}\marga{not referenced in table. we have an
  appendix the explore this a bit, but i do not if it is correct}

\newcommand{\GasUsed}{100.000\xspace}
\newcommand{\GasPrice}{3\xspace}
\newcommand{\MoveCost}{0.0003\xspace}
\newcommand{\stake}{10\xspace}
\newcommand{\budget}{20.6141\xspace} 

\hypertarget{concrete-values:analysis}{}
\addtextrevision{
\subsection{Analysis with Concrete Values}
Here we provide representative values for the variables introduced
earlier, under the assumption that the underlying L1 is Ethereum and
that the batch size is bounded by \(\SZ \leq 4096\) transaction
requests.
For simplicity, we assume that each move consumes at most \GasUsed
gas, which correspond to the the median gas used in Ethereum
transactions, and a gas price of \GasPrice Gwei.\footnote{Data
  obtained from
  https://studio.glassnode.com/charts/fees.GasUsedMean?a=ETH and
  https://studio.glassnode.com/charts/fees.GasPriceMean?a=ETH,
  respectively, on July 2025.}
Except for transaction posting data compressed, which before blobs,
was 0.06 ETH.~\footnote{See, for example, transaction 
   0x4276c81d7b51...X0693e0a6bbade5,
where the
  Arbitrum Sequencer post a batch in Ethereum.}
Under these assumptions, the cost of move \KWD{post\_compressed} is
\(C_{\KWD{post\_compressed}} = 0.6\) and the cost of all other moves
\(y\) is \(C_{y} = \MoveCost\) ETH.
Using relations \ref{client-cost-data}-\ref{client-cost-integrity},
we compute the client cost for each \<fpm> as follows:
\(\CC{\KWD{data}} = 0.6081\) ETH,
\(\CC{\KWD{certifability}} = \CC{\KWD{uniqueness}} = \MoveCost\) ETH,
\(\CC{\KWD{validity}} = \CC{\KWD{integrity1}} = \CC{\KWD{integrity2}}
= 0.0039\) ETH.
These values give a lower bound for clients rewards (see
relation~\ref{client-cost-reward}), which in turn give a lower bound
for staker stakes in the batch tag (see
relation~\ref{client-reward-staker-stake}).
Assuming clients rewards for each \<fp> \(x\) is 1 ETH plus the
client cost for \<fp> \(x\), we obtain: \(\CR{\KWD{data}} = 1.0084\) ETH, \(\CR{\KWD{certifability}} =
\CR{\KWD{uniqueness}} = 1.0003\) ETH, \(\CR{\KWD{validity}} =
\CR{\KWD{integrity1}} = \CR{\KWD{integrity2}} = 1.0039\) ETH.
Given these bounds, a stake of \stake ETH
is more than enough to be used as an stake for arranger replicas.}

\addtextrevision{Similarly, using relations \ref{staker-cost-data}-\ref{staker-cost-integrity} we compute the staker cost for each
  \<fp>: \(\CC{\KWD{data}} = 0.0084\) ETH,
  \(\CC{\KWD{certifability}} = \CC{\KWD{uniqueness}} = 0\) ETH,
  \(\CC{\KWD{validity}} = 0.0036\) ETH, and
  \(\CC{\KWD{integrity1}} = \CC{\KWD{integrity2}} = 0.0039\) ETH.
  By relation \ref{staker-cost-stakes}, these values give lower
  bounds for both the communal stake and clients stake in each \<fp>.
  In particular, the communal stake con be set to
  \(s_{\KWD{com\_stake}} = 0.009\) ETH, and client stake \(s_{x}\) can
  be set to \stake ETH for all \<fps>.
}
\addtextrevision{Therefore, in this scenario, by
  Proposition~\ref{prop:cost}, the minimum budget needed to discard
  illegal or unavailable batches is \(B = \budget\) }
\addtextrevision{Regarding the offchain translation protocol, replicas
  cost is the cost of a single transaction,
  \(\SC{\KWD{translate}} = \MoveCost\), while clients cost is the cost
  of single transaction plus the replica reward,
  \(\CC{\KWD{translate}} = \MoveCost + \SR{\KWD{translate}}\).
To satisfy relations \ref{offchain-replica} and
\ref{offchain-client}, replicas reward must satisfy: \(\MoveCost \ll \SR{\KWD{translate}} \ll \CC{\KWD{data}} +
s_{\KWD{data}} - \CC{\KWD{translate}} = \stake\) ETH.
Thus, a reasonable reward in this setting is 1 ETH. 
}

\newcommand{\kone}{1\xspace}
\newcommand{\ktwo}{200\xspace}
\newcommand{\kthree}{10\xspace}
\newcommand{\n}{31\xspace}
\newcommand{\fee}{0.5\xspace} 
\newcommand{\ratio}{\frac{1}{20}} 
\addtextrevision{Finally, when adding transactions requests L2 users
  must pay a fee enough to cover replicas rewards upon batch
  consolidation (see Section~\ref{sec:incentives:generate}).
  Suppose a batch tag \(b\) with \(S'\) signatures consolidates.
  The total reward distributed to replicas is given by:  $k_1\times n + k_2
  \times S' + k_3$.
  %
  Since the concrete value of \(S'\) and the batch size are unknown at
  the time of transaction submission, we propose that each L2 user
  pays a flat fee of $(k_1\times n + k_2 \times S + k_3)/\SZ$.
  If the total collected fees are insufficient---e.g., due to more
  signatures than expected---L2 tokens can be minted to make up the
  difference, introducing inflation.
  Consider an scenario of a fully decentralized arranger consisting of
  \(n = \n\) replicas where the cost of executing a transaction in L1 corresponds
  to \kthree L2 tokens. By setting parameters $k_3 = \kthree$,
  $k_2 = \ktwo$, and $k_1 = \kone$, the resulting fee for L2 users would be approximately 
\fee.
  This is roughly \(\ratio\)-th the cost of executing a transaction on L1, providing a significant cost advantage for L2 users.
}

\section{Threat Models}
\label{sec:threat-models}

We consider three types of adversarial models, all computationally
bounded, as in~\cite{Cachin2000Random,Cachin2001Secure}.
We analyze how two of these adversary can impact the evolution of the
L2 under the different implementations of the arranger described in
Section~\ref{sec:arranger:implementations} (centralized,
semi-decentralized, and fully-decentralized).
The third adversarial model is specific to the fully-decentralized
implementation, as it reasons about the batch tags consented by honest
replicas.

\paragraph*{\textbf{\advOne Adversary}}
In this adversarial model, all replicas are either honest or Byzantine, and
we assume that the trust assumptions of the arranger implementation are
not violated.

In particular, for the centralized implementation this means that the
only replica is honest.
In the semi-decentralized implementation the sequencer and at least
\(n-\s{} + 1\) DAC members are honest.
Finally, for the fully-decentralized implementation the adversary can
control less than one third of the replicas.

Therefore, under this adversary, the arranger implementation is
correct (see~\cite{capretto2025decentralized} for a proof for the
semi-decentralized and fully decentralized implementations).
However, Byzantine replicas can still post batch tags that are not
certified, so the certifiability \<fp> should be utilized to discard
such batch tags.

\paragraph*{\textbf{\advThree Adversary}}
This is the strongest adversary, capable of controlling all arranger
replicas.
This adversary can decide the content of all legal batch tags created
and posted by the arranger.
In particular, the adversary has the power to exclude valid
transaction requests from posted legal batch tags, potentially
violating property \PrTermination.
Unfortunately, detecting that the arranger is violating \PrTermination
is impossible as this is a liveness property that might be satisfied
later in the future.
Then, this adversary can censor transactions without any consequence.
To mitigate this risk, L2 users can post their transaction requests
directly in L1 although at a higher cost.

This adversary can also violate all safety properties, but this offers
no advantage to the adversary, as this violations can be detected
using our \<fps> (see Theorem~\ref{cor:safety}), and merely exposes
the corruption of the arranger.
Furthermore, even under this extremely powerful adversary,
Lemma~\ref{l:consolidate} holds and a single honest agent can
guarantee the legality and data availability of consolidated batch
tags.

\paragraph*{\textbf{\advTwo Adversary}}
The power of this adversary lies between the previous two and only
applies to the fully-decentralized arranger.
\hypertarget{corrupt}{}\addtextrevision{The previous two adversaries
  capture what happens with arranger implementations either when no
  trust assumption is violated or when the adversary has full control
  over the arranger.
  Here, we consider a more practical adversary where the trust
  assumptions of the arranger are violated but honest replicas still
  have the power to create certified batch tags.}

This adversarial model refines the Byzantine
failure model~\cite{Lamport1982Byzantine} by introducing a new type of
faulty replicas: \emph{corrupt} replicas.
Corrupt replicas behave arbitrarily without violating any property of
SBC, that is, they only misbehave in the DAC part of the arranger.
For example, corrupt replicas may sign incorrect batches or refuse to
translate batches, but corrupt replicas can neither prevent honest
replicas for agreeing on the content of batches nor prevent honest
replicas from including elements added by clients to honest replicas.
The portion of Byzantine replicas is less than one-third, the total
number of honest replicas is at least \(\s{} = \frac{n}{3}\), where
\(\s{}\) is the minimum number of signatures required for a batch to
be considered certified, \addtextrevision{and the remaining replicas
  are corrupt.}
As there are at least \(\s{}\) honest replicas, honest replicas can
create certified batches without requiring signatures from faulty
replicas.

Unlike in the previous adversarial model, property \PrTermination
still holds under this adversary.
Since the SBC works properly, honest replicas can agree on the
content of each batch, and all valid transaction requests added to
honest replicas eventually appear in an agreed batch.
As there are enough honest replicas to certify batch tags, each batch
agreed by them will be posted as a legal batch tag.
Consequently, all valid transaction requests added to honest replicas
eventually appear in a legal batch tag posted in L1.

As with the \advThree Adversary, if this adversary violates any safety
property, it can be detected using our \<fpms>.
However, detecting violations of property \PrUniqueBatch requires the
action of honest replicas.
Since this adversary controls more replicas than the certification
threshold $\s$, faulty replicas can create certified batch tags
without the signature of any honest replicas.
Therefore, faulty replicas can post a certified batch tags whose batch is
not the one agreed by consensus.
In this case, honest replicas can post a legal batch tag with the same
identifier but different batch content, violating
property~\PrUniqueBatch and exposing that the arranger has been
corrupted.
This violation can be detected simply by observing L1 and the
uniqueness \<fpm> can be used to indicate that arranger replicas must
be replaced.
As a consequence, if the arranger under this adversary does not behave
as under adversary \advOne then its corruption can be exposed.


\section{Related Work}
\label{sec:related-work}

To the best of our knowledge, our work is the first to provide
economic incentives for arrangers replicas to behave honestly, and
fraud-proof mechanisms usable on L1 to demonstrate safety violations
by arrangers, and thus, generate evidence of fraud from faulty
replicas.

Current L2s do not provide any mechanism to guarantee that batches
posted by their arranger are legal or available, because their
arranger is \emph{assumed} to be correct.
Arrangers in Optimistic Rollups and ZK-Rollups can post data that does
not correspond to a compressed batch of valid transaction requests.
Similarly, arrangers of Optimiums and Validiums can post a hash
whose inverse does not represent a batch of new valid transactions.
The implications of these actions are significant, e.g., malicious STFs
could post illegal new L2 blocks that cannot be disputed by other STFs
as they do not know the transaction requests to play the fraud-proof
over the L2 block.
Hence, malicious arrangers in current L2s can block the L2 or, even
worse, create indisputable incorrect L2 blocks.
By contrast, our \<fpms> prevent such issues under all adversarial
scenarios outlined in Section~\ref{sec:threat-models}, which includes
an adversary that controls all arranger replicas.

One of the \<fpm> we provide is to translate hashes, which has also
been studied in~\cite{boneh}.
However, they assume that the translation can be directly verified in
an L1 smart contract using SNARK proofs, while we provide an
interactive game played over the Merkle tree nodes that only requires
to compute and compare hashes.
%
%
%
In other DAC protocols, such as Celestia~\cite{celestia}, clients can
accuse replicas outside of L1 of withholding data, and thus suffer
from the following problem.
They cannot distinguish between the following two situations: (1)
malicious clients falsely accuse replicas of withholding data; and (2)
malicious replicas reveal data only after being accused.
In protocols like~\cite{celestia}, this problem has security
implications depending on the reward received by the accusing client:
\begin{compactenum}[(a)]
\item If the net reward is positive even when the data is revealed,
  rational clients can take advantage by falsely accusing replicas
  just to collect the reward.
\item If the net reward is zero (independently of whether the data is
  revealed or not), denial-of-service attacks are possible because
  malicious clients can issue several accusations without penalty.
\item If the net reward is negative, only altruistic clients would
  accuse malicious stakers.
\end{compactenum}

In our protocol, there is no detection and accusation of withholding
data.
Instead, the data availability \<fp> is used by clients to guarantee
access to the data (or remove stakes).
Furthermore, clients incur on a cost when playing the data
availability \<fp> game, ensuring that it is only used when necessary,
so points (a) and (b) do not apply.
Regarding the third scenario, in our protocol clients always pay to
access the data, either by the offchain protocol or through the more
expensive data availability \<fpm>.
However, if the data is malicious or unavailable, clients receive
significant compensation, and if the data is revealed and it is
correct, clients can generate L2 blocks executing the transaction
requests and be rewarded for it.
Replicas, on the other hand, lose their stake if the data is malicious
and earn nothing if the data is correct and revealed only in a data
availability \<fp> game.

\hypertarget{relatedwork}{}
\addtextrevision{
  Other works that deal with the detection of malicious behavior include
  accountability in consensus~\cite{civit2023easy}, and MPC with identifiable abort~\cite{ishai2014secure}.
  The main difference between our work and MPC with identifiable abort
  is that we not only identify faulty replicas but also
  generate evidence of their faulty behavior that anyone can verify,
  like in accountability in consensus.
  However, an advantage of our work over accountability in consensus
  is that anyone observing the posts done by arranger replicas can
  generate evidence of faulty behavior (when safety properties are
  violated), rather than being restricted to the replicas
  running the protocol. }


\section{Conclusions and Future Work}
\label{sec:conclusion}

Layer 2 blockchains aim to improve the scalability of current smart
contract blockchains, offering a much higher throughput without
modifying the programming logic and interaction of blockchains.
An important component of L2s is the arranger, which batches
transaction requests, commits these batches hashed, and translates
hashes into batches.
Current implementations of arrangers in most L2s assume an upper bound
in the number of faulty replicas in order to guarantee correctness.
However, these implementations lack mechanisms to detect when this
trust assumption is violated, and they provide no guarantees in cases
where the assumption does not hold.
As consequence, arrangers have the power to influence the evolution of
the L2.

To address this, we described a collection of incentives for replicas
to behave honestly 
and penalties for posting incorrect or unavailable batches based on
fraud-proof mechanisms.
These fraud-proof mechanisms are over properties of concrete
algorithms rather than the execution trace of arbitrary algorithms, as
is the case in RDoC and in the fraud-proofs for used for the
correctness of L2 blocks in Optimistic Rollups and Optimiums.
Using these mechanisms a single honest agent can guarantee that only
valid and available batches of transaction requests are executed in
L2, regardless of the number of faulty replicas in the arranger and
the correctness of the arranger, limiting the power of adversaries.
Furthermore, these fraud-proof games can also be used to prove in L1
violations of safety properties from the arranger, therefore also
acting as deterrent for faulty behavior.
Our incentives and fraud-proof mechanisms are independent from the
concrete implementation of the arranger.

\textit{Artifacts}.  This paper is accompanied by
a library of mechanized formal proofs of all our fraud-proof
mechanisms from Section~\ref{sec:incentives:challenges}. developed in
the proof assistant Lean4~\cite{Moura.2021.Lean4}. This includes a
proof that there is a winning strategy for the honest player.

\subsection{Future Work}

One promising direction for future work is to extend our fraud-proof
mechanisms to other properties, such as order
fairness~\cite{kelkar2020orderfairness}.

Another interesting line of research is enhancing the L1 to support
more efficient fraud-proof mechanisms.
Most L1 contracts arbitrating fraud-proof mechanisms first perform
some kind of validation (e.g. verify signatures or compute hashes of
input values and compare them with stored hashes), and then update
their state accordingly.
Interestingly, the validation step executed by these L1 contracts is
typically a \emph{pure computation}: its result depends only on the
inputs and not on the blockchain state.
As consequence, these pure computations could be executed in parallel
with all other transactions and other pure computations as they do not
interfere with their outcome.
To leverage this, we propose extending the semantics of L1 contracts
in blockchains like Ethereum to support function parameters of a
special type, representing the result of pure computations.
Functions with these kind of parameters should be only invoked by
external users initiating the transaction, to ensure that the input to
the pure computation do not depend on the blockchain state.
As these pure computations are part of transactions, their results is
consented just like regular transaction.
Unlike regular transactions, which must be executed sequentially, pure
computations could be executed concurrently with other pure
computations and regular transactions, improving the throughput.
Therefore, the gas consumed by pure computations could be lower than
the gas for transactions, similarly as the cost of posting data in
Ethereum is cheaper when using blobs~\cite{eip4844}.
This approach would not only reduce the cost of existing fraud-proof
games but also allow to simplify more complex ones.
For example, consider the decompress-and-hash \<fpm>, that consists in
bisecting the execution trace of a program that decompresses the data
posted, obtains a list of elements, creates a Merkle tree with the
list of elements as leaves, and finish successfully if the Merkle root
is the hash in a given batch tag.
This \<fp> game could be implement as a one-step \<fpm> that consists
of executing the whole program as a pure computation, because its
computation only depends on the string provided.

\hypertarget{fp:lean:conclusion}{}
Finally,\removetextrevision{the Lean4 library is still under development.
The main claims were proved, but still some transformations between
games need to be formally verified.
In particular, we would like to add the notion of history of committed
DAs to prove Integrity 2, and model the evolution of the system, as
well as the interaction between players.
Additionally,} to completely characterize the system
\addtextrevision{in the Lean4 library} we need to add the notion of
time, possibly as bounded interaction
trees~\cite{Guarded.Interaction.Trees} modeling the effects of agents
interacting.


\begin{acks}
  Funded by project PID2022-140560OB-I00 (DRONAC) funded by MICIU/AEI
  /10.13039/501100011033 and ERDF, EU; grant CEX2024-001471-M/ funded
  by MICIU/AEI/10.13039/501100011033, grant PID2022-138072OB-I00 and
  grant PID2022-142290OB-I00, funded by
  MCIN/AEI/10.13039/501100011033/ FEDER, UE.
\end{acks}


\bibliographystyle{plain}
\bibliography{bibfile}

\begin{thebibliography}{10}

\bibitem{linea}
{A Consensys Formation}.
\newblock Linea.
\newblock \url{https://linea.build/}, Accessed: 2025-07-13.

\bibitem{abstract}
{Abstract Foundation}.
\newblock {Abstract}.
\newblock \url{https://www.abs.xyz/}, Accessed: 2025-07-13.

\bibitem{Alakuijala18brotli}
Jyrki Alakuijala, Andrea Farruggia, Paolo Ferragina, Eugene Kliuchnikov, Robert
  Obryk, Zoltan Szabadka, and Lode Vandevenne.
\newblock Brotli: A general-purpose data compressor.
\newblock {\em ACM Transactions on Information Systems}, 37(1):1--30, 2018.

\bibitem{Boneh2001Short}
Dan Boneh, Ben Lynn, and Hovav Shacham.
\newblock Short signatures from the weil pairing.
\newblock {\em Journal of Cryptology}, 17:297--319, 2001.

\bibitem{ArbitrumNitro}
L.~Bousfield, R.~Bousfield, C.~Buckland, B.~Burgess, J.~Colvin, E.~Felten,
  S.~Goldfeder, D.~Goldman, B.~Huddleston, H.~Kalonder, F.~Lacs, H.~Ng,
  A.~Sanghi, T.~Wilson, V.~Yermakova, and T.~Zidenberg.
\newblock Arbitrum nitro: A second-generation optimistic rollup, 2022.
\newblock
  \url{https://github.com/OffchainLabs/nitro/blob/master/docs/Nitro-whitepaper.pdf}.

\bibitem{brugger2023CheckMate}
Lea~Salome Brugger, Laura Kov\'{a}cs, Anja Petkovic~Komel, Sophie Rain, and
  Michael Rawson.
\newblock Checkmate: Automated game-theoretic security reasoning.
\newblock In {\em Proceedings of the 2023 ACM SIGSAC Conference on Computer and
  Communications Security}, CCS '23, page 1407–1421, New York, NY, USA, 2023.
  Association for Computing Machinery.

\bibitem{eip4844}
Vitalik Buterin, Dankrad~Feist abd Diederik~Loerakker, George Kadianakis, Matt
  Garnett, Mofi Taiwo, and Ansgar Dietrichs.
\newblock Eip-4844: Shard blob transactions.
\newblock \url{https://eips.ethereum.org/EIPS/eip-4844}, Accessed: 2025-04-11.

\bibitem{Cachin2001Secure}
Christian Cachin, Klaus Kursawe, Frank Petzold, and Victor Shoup.
\newblock Secure and efficient asynchronous broadcast protocols.
\newblock In Joe Kilian, editor, {\em Advances in Cryptology --- CRYPTO 2001},
  pages 524--541, Berlin, Heidelberg, 2001. Springer Berlin Heidelberg.

\bibitem{Cachin2000Random}
Christian Cachin, Klaus Kursawe, and Victor Shoup.
\newblock Random oracles in constantipole: practical asynchronous byzantine
  agreement using cryptography (extended abstract).
\newblock In {\em Proceedings of the Nineteenth Annual ACM Symposium on
  Principles of Distributed Computing}, PODC '00, page 123–132, New York, NY,
  USA, 2000. Association for Computing Machinery.

\bibitem{campanelli2017zkcontingent}
Matteo Campanelli, Rosario Gennaro, Steven Goldfeder, and Luca Nizzardo.
\newblock Zero-knowledge contingent payments revisited: Attacks and payments
  for services.
\newblock In {\em Proceedings of the 2017 ACM SIGSAC Conference on Computer and
  Communications Security}, CCS '17, page 229–243, New York, NY, USA, 2017.
  Association for Computing Machinery.

\bibitem{canetti2011practical}
Ran Canetti, Ben Riva, and Guy~N Rothblum.
\newblock Practical delegation of computation using multiple servers.
\newblock In {\em Proceedings of the 18th ACM conference on Computer and
  communications security}, pages 445--454, 2011.

\bibitem{canetti2013refereed}
Ran Canetti, Ben Riva, and Guy~N. Rothblum.
\newblock Refereed delegation of computation.
\newblock {\em Information and Computation}, 226:16--36, 2013.

\bibitem{capretto2025decentralized}
Margarita Capretto, Martín Ceresa, Antonio~Fernández Anta, Pedro
  Moreno-Sánchez, and César Sánchez.
\newblock A decentralized sequencer and data availability committee for rollups
  using set consensus, 2025.

\bibitem{honeypot}
{Cartesi PRT Honeypot}.
\newblock \url{https://cartesi.io/}, Accessed: 2025-07-13.

\bibitem{civit2023easy}
Pierre Civit, Seth Gilbert, Vincent Gramoli, Rachid Guerraoui, and Jovan
  Komatovic.
\newblock As easy as abc: Optimal (a)ccountable (b)yzantine (c)onsensus is
  easy!
\newblock {\em Journal of Parallel and Distributed Computing}, 181:104743,
  2023.

\bibitem{base}
{Coinbase}.
\newblock {base}.
\newblock \url{https://base.org/}, Accessed: 2025-07-13.

\bibitem{ethBlockSize}
Ethereum Community.
\newblock Ethereum development documentation - blocks.
\newblock \url{https://ethereum.org/en/developers/docs/blocks/#block-size},
  Accessed: 2025-07-09.

\bibitem{Crain2021RedBelly}
Tyler Crain, Christopher Natoli, and Vincent Gramoli.
\newblock Red belly: A secure, fair and scalable open blockchain.
\newblock In {\em Proc. of S\&P'21}, pages 466--483, 2021.

\bibitem{Croman2016ScalingDecentralizedBlockchain}
Kyle Croman, Christian Decker, Ittay Eyal, Adem~Efe Gencer, Ari Juels, Ahmed
  Kosba, Andrew Miller, Prateek Saxena, Elaine Shi, Emin G{\"u}n~Sirer, Dawn
  Song, and Roger Wattenhofer.
\newblock On scaling decentralized blockchains.
\newblock In {\em Financial Crypto. and Data Security}, pages 106--125.
  Springer, 2016.

\bibitem{ArbitrumDecentralization}
Arbitrum Foundation.
\newblock The state of arbitrum's progressive decentralization.
\newblock
  \url{https://docs.arbitrum.foundation/state-of-progressive-decentralization#allowlisted-validators},
  Accessed: 2024-11-25.

\bibitem{Guarded.Interaction.Trees}
Dan Frumin, Amin Timany, and Lars Birkedal.
\newblock Modular denotational semantics for effects with guarded interaction
  trees.
\newblock {\em Proc. ACM Program. Lang.}, 8(POPL), January 2024.

\bibitem{Elgamal85}
Taher~El Gamal.
\newblock A public key cryptosystem and a signature scheme based on discrete
  logarithms.
\newblock {\em {IEEE} Trans. Inf. Theory}, 31(4):469--472, 1985.

\bibitem{gilad2017algorand}
Yossi Gilad, Rotem Hemo, Silvio Micali, Georgios Vlachos, and Nickolai
  Zeldovich.
\newblock Algorand: Scaling byzantine agreements for cryptocurrencies.
\newblock In {\em Proceedings of the 26th Symposium on Operating Systems
  Principles}, SOSP '17, page 51–68, New York, NY, USA, 2017. Association for
  Computing Machinery.

\bibitem{GomezMartinez25AZKCP}
Javier Gomez{-}Martinez, Dimitrios Vasilopoulos, Pedro Moreno{-}Sanchez, and
  Dario Fiore.
\newblock Algebraic zero knowledge contingent payment.
\newblock In Marc Fischlin and Veelasha Moonsamy, editors, {\em Applied
  Cryptography and Network Security - 23rd International Conference, {ACNS}
  2025, Munich, Germany, June 23-26, 2025, Proceedings, Part {II}}, volume
  15826 of {\em Lecture Notes in Computer Science}, pages 224--254. Springer,
  2025.

\bibitem{ishai2014secure}
Yuval Ishai, Rafail Ostrovsky, and Vassilis Zikas.
\newblock Secure multi-party computation with identifiable abort.
\newblock In Juan~A. Garay and Rosario Gennaro, editors, {\em Advances in
  Cryptology - {CRYPTO} 2014 - 34th Annual Cryptology Conference, Santa
  Barbara, CA, USA, August 17-21, 2014, Proceedings, Part {II}}, volume 8617 of
  {\em Lecture Notes in Computer Science}, pages 369--386. Springer, 2014.

\bibitem{Kalodner2018Arbitrum}
Harry Kalodner, Steven Goldfeder, Xiaoqi Chen, S.~Matthew Weinberg, and
  Edward~W. Felten.
\newblock Arbitrum: Scalable, private smart contracts.
\newblock In {\em 27th {USENIX} Security Symposium}, pages 1353--1370. {USENIX}
  Assoc., 2018.

\bibitem{katana}
{Katana Services}.
\newblock Katana.
\newblock \url{https://katana.network/}, Accessed: 2025-07-13.

\bibitem{kelkar2020orderfairness}
Mahimna Kelkar, Fan Zhang, Steven Goldfeder, and Ari Juels.
\newblock Order-fairness for byzantine consensus.
\newblock Cryptology ePrint Archive, Paper 2020/269, 2020.
\newblock \url{https://eprint.iacr.org/2020/269}.

\bibitem{kinto}
{Kinto}.
\newblock \url{https://www.kinto.xyz/}, Accessed: 2025-07-13.

\bibitem{ink}
{Kraken}.
\newblock {Ink}.
\newblock \url{https://inkonchain.com/}, Accessed: 2025-07-13.

\bibitem{celestia}
Celestia Labs.
\newblock Celestia.
\newblock \url{https://celestia.org}, Accessed: 2024-05-21.

\bibitem{lachain}
{LaChain}.
\newblock \url{https://www.lachain.network/}, Accessed: 2025-07-13.

\bibitem{Lamport1982Byzantine}
Leslie Lamport, Robert Shostak, and Marshall Pease.
\newblock The byzantine generals problem.
\newblock {\em ACM Trans. Program. Lang. Syst.}, 4(3):382–401, July 1982.

\bibitem{Ledyard1989}
John~O. Ledyard.
\newblock {\em Incentive Compatibility}, pages 141--151.
\newblock Palgrave Macmillan UK, London, 1989.

\bibitem{lens}
{Lens Labs}.
\newblock {Lens}.
\newblock \url{https://lens.xyz/}, Accessed: 2025-07-13.

\bibitem{loopring}
{Loopring Project Ltd}.
\newblock {Loopring}.
\newblock \url{https://loopring.org}, Accessed: 2025-07-13.

\bibitem{zksyncera}
{Matter Labs}.
\newblock {zkSync}.
\newblock \url{https://zksync.io/}, Accessed: 2025-07-13.

\bibitem{zksynclite}
{Matter Labs}.
\newblock {zkSync Lite}.
\newblock \url{https://zksync.io/}, Accessed: 2025-07-13.

\bibitem{Merkle88}
Ralph~C. Merkle.
\newblock A digital signature based on a conventional encryption function.
\newblock In Carl Pomerance, editor, {\em Advances in Cryptology --- CRYPTO
  '87}, pages 369--378, Berlin, Heidelberg, 1988. Springer Berlin Heidelberg.

\bibitem{Moura.2021.Lean4}
Leonardo~de Moura and Sebastian Ullrich.
\newblock The lean 4 theorem prover and programming language.
\newblock In {\em Automated Deduction – CADE 28: 28th International
  Conference on Automated Deduction, Virtual Event, July 12–15, 2021,
  Proceedings}, page 625–635, Berlin, Heidelberg, 2021. Springer-Verlag.

\bibitem{nakamoto06bitcoin}
Satoshi Nakamoto.
\newblock {B}itcoin: a peer-to-peer electronic cash system, 2009.

\bibitem{nguyen2020WIIsAlmostEnough}
Ky~Nguyen, Miguel Ambrona, and Masayuki Abe.
\newblock Wi is almost enough: Contingent payment all over again.
\newblock In {\em Proceedings of the 2020 ACM SIGSAC Conference on Computer and
  Communications Security}, CCS '20, page 641–656, New York, NY, USA, 2020.
  Association for Computing Machinery.

\bibitem{optimism}
{Optimism Foundation}.
\newblock {Optimism}.
\newblock \url{https://www.optimism.io/}, Accessed: 2025-07-13.

\bibitem{paradex}
{Paradex}.
\newblock \url{https://www.paradex.trade/}, Accessed: 2025-07-13.

\bibitem{phala}
{Phala}.
\newblock \url{https://phala.network/}, Accessed: 2025-07-13.

\bibitem{polygonzkEVM}
{Polygon Labs UI}.
\newblock {Polygon zkEVM}.
\newblock \url{https://polygon.technology/polygon-zkevm}, Accessed: 2025-07-13.

\bibitem{l2beat}
L2BEAT research team.
\newblock L2beat.
\newblock Accessed: 2024-05-21.

\bibitem{alba2025scaffino}
Giulia Scaffino, Lukas Aumayr, Mahsa Bastankhah, Zeta Avarikioti, and Matteo
  Maffei.
\newblock Alba: The dawn of scalable bridges for blockchains.
\newblock 01 2025.

\bibitem{scroll}
{Scroll Ltd}.
\newblock {Scroll}.
\newblock \url{https://scroll.io/}, Accessed: 2025-07-13.

\bibitem{sophon}
{Sophon}.
\newblock {Sophon}.
\newblock \url{https://sophon.xyz/}, Accessed: 2025-07-13.

\bibitem{spaceandtime}
{Space And Time}.
\newblock \url{https://www.spaceandtime.io/}, Accessed: 2025-07-13.

\bibitem{starknet}
{Starknet}.
\newblock \url{https://www.starknet.io/en}, Accessed: 2025-07-13.

\bibitem{boneh}
Ertem~Nusret Tas and Dan Boneh.
\newblock Cryptoeconomic security for data availability committees.
\newblock In Foteini Baldimtsi and Christian Cachin, editors, {\em Financial
  Cryptography and Data Security}, pages 310--326, Cham, 2024. Springer Nature
  Switzerland.

\bibitem{termstructure}
{Term Structure}.
\newblock \url{https://ts.finance/}, Accessed: 2025-07-13.

\bibitem{Tyagi@BlockchainScalabilitySol}
Shobha Tyagi and Madhumita Kathuria.
\newblock {\em Study on Blockchain Scalability Solutions}, page 394–401.
\newblock ACM, 2021.

\bibitem{unichain}
{Uniswap Labs}.
\newblock {Unichain}.
\newblock \url{https://www.unichain.org/}, Accessed: 2025-07-13.

\bibitem{fuchsbauer2019WIIsNotEnough}
Georg~Fuchsbauer \url{https://www.fuel.network/}.
\newblock Wi is not enough: Zero-knowledge contingent (service) payments
  revisited.
\newblock Cryptology ePrint Archive, Paper 2019/964, 2019.
\newblock \url{https://eprint.iacr.org/2019/964}.

\bibitem{wasm}
{WebAssembly Working Group}.
\newblock {WebAssembly}.
\newblock \url{https://webassembly.org/}.

\bibitem{wood2014ethereum}
Gavin Wood.
\newblock Ethereum: A secure decentralised generalised transaction ledger.
\newblock {\em Ethereum project yellow paper}, 151:1--32, 2014.

\bibitem{zeronetwork}
{Zero Network foundation}.
\newblock {Zero Network}.
\newblock \url{https://zero.network/}, Accessed: 2025-07-13.

\bibitem{zkbase}
{ZKBase}.
\newblock \url{https://zkbase.org/}, Accessed: 2025-07-13.

\end{thebibliography}

\vfill
\clearpage
\pagebreak

\appendix
\vfill
\pagebreak
\hypertarget{app:comparison}{}
\addtextrevision{\section{Comparison of L2s on Ethereum With Our Work}}
\label{app:decentralizationL2Eth}

The table below summarizes the types of block and batch validation, the
level of decentralization in the arranger (sequencer) for Layer 2
projects based on Ethereum and the approach proposed in this work.

 \begin{table}[h]
     \centering
       \begin{tabular}{|l|l|l|l|}
        \hline
        \textbf{Project} & \textbf{Block Val.} & \textbf{Arranger}
         & \textbf{Batch Val.}
        \\
         \hline
         Arbitrum One~\cite{ArbitrumNitro} & Fraud-proof & Centralized
         & None\\
         Base~\cite{base} & Fraud-proof & Centralized & None\\
         OP mainnet~\cite{optimism} & Fraud-proof & Centralized & None\\
         Unichain~\cite{unichain} & Fraud-proof & Centralized & None\\
         zkSync Era~\cite{zksyncera} & ZK-proof & Centralized & None \\
         Starknet~\cite{starknet} & ZK-proof & Centralized & None \\
         Linea~\cite{linea} & ZK-proof & Centralized & None \\
         Katana~\cite{katana} & ZK-proof & Centralized & None \\
         Scroll~\cite{scroll} & ZK-proof & Centralized & None \\
         Sophon~\cite{sophon} & ZK-prood & Semi-dec. & None\\
         Ink~\cite{ink} & Fraud-proof & Centralized & None \\
         Abstract~\cite{abstract} & ZK-proof & Centralized & None \\
         Paradex~\cite{paradex} & ZK-proof & Centralized & None \\
         zkSync Lite~\cite{zksynclite} & ZK-proof & Centralized & None\\
         Arbitrum Nova~\cite{ArbitrumNitro} & Fraud-proof &
                                                            Semi-dec.
         & None \\
         Loopring~\cite{loopring} & ZK-proof & Centralized & None \\
         Polygon zkEVM~\cite{polygonzkEVM} & ZK-proof & Centralized & None
         \\
         Zero Network~\cite{zeronetwork} & ZK-proof & Centralized & None
         \\
         Kinto~\cite{kinto} & Fraud-proof & Centralized & None \\
         Lens~\cite{lens} & ZK-proof & Semi-dec. & None \\
         ZKBase~\cite{zkbase} & ZK-proof & Centralized & None \\
         Term Structure~\cite{termstructure} & ZK-proof & Centralized
         & None
         \\
         Phala~\cite{phala} & ZK-proof & Centralized & None \\
         Cartesi~\cite{honeypot} & Fraud-proof & Decentralized & None \\
         LaChain~\cite{lachain} & ZK-proof & Centralized & None \\
         Space And Time~\cite{spaceandtime} & ZK-proof & Centralized & None
         \\
         \textbf{This Work} & Any & Any & Fraud-proof \\
         \hline
    \end{tabular}
 \label{tab:decentralization-layer2eth}
 \caption{Comparison of block and batch validation types, and the
   level of decentralization in the arranger for L2 projects on
   Ethereum, and the approach proposed in this work. Data
   from~\cite{l2beat}}
\end{table}



\newpage
\vspace{20em}
\section{Payment system}\label{app:htlc}

The following code shows the pseudo-code that implements a payment
system based on a hashed-timelock contract for clients described in
Section~\ref{sec:incentives:translate}.
\begin{lstlisting}[language=Solidity,numbers=none]
  contract Payment {
    address owner, beneficiary;
    bytes32 secret;
    uint deadline;

    constructor(address _beneficiary, bytes32 _secret, uint _deadline){
      beneficiary = _beneficiary;
      secret = _secret;
      deadline = _deadline;
    }
    
    function claim(bytes k) public {
      require(msg.sender == beneficiary);
      require(H(k) == secret);
      beneficiary.transfer(balance); 
    }

    function withdraw() public {
      require(msg.sender == owner);
      require(now > deadline);
      owner.transfer(balance);
    }
  }
\end{lstlisting}


\newpage
\hypertarget{app:pseudocodes}{}
\addtextrevision{\section{Fraud-Proofs: Pseudocodes}}
\label{app:pseudocodes}

\subsection{Contracts Arbitrating Fraud-Proofs}

We present now the pseudo-code of the L1 smart contract that
arbitrates the \<fpm> described in Section~\ref{sec:fraudproofs}.
These smart contracts are used by the proposer and the challenger
during a challenge game that implements the fraud-proof mechanisms
described in Section~\ref{sec:fraudproofs}.
An honest agent \Att uses the methods of this contract to play
the game described in Section~\ref{sec:fraudproofs}, with a
guarantee that if \Att follows the honest strategy described below \Att will win the game.
Note that none of this contracts has a loop or invokes remotes
contracts, so termination arguments (and gas calculations) can be
easily performed.

For simplicity and clarity we made the following simplification:

\begin{itemize}
\item Contracts are only used for one game. To manage multiple games
  concurrently, all game-specific variables (e.g., player addresses,
  hashes) should be stored in maps from a unique game ID to their
  values, and each function should take a game\_id parameter to access
  the correct game context.

\item Time management was omitted but it works similarly to Arbitrum:
  \begin{itemize}
  \item Each player has an individual clock
  \item The time of the last move is recorded
  \item The active player's clock decreases while it's their turn.
  \item There is a timeout function that anyone can invoke, if the current player runs out of time, it loses the game.
  \end{itemize}
\end{itemize}

\subsection*{Contract Arbitrating Multi-step Membership Fraud-proof
  mechanism}
\begin{lstlisting}[language=Solidity,numbers=none]
contract multi_step_membership {

  address A, B, activePlayer; //  A is the address of the player that initiate the game (proposer), B is the address of the challenger player; activePlayer tracks whose turn it is
  t_hash top, bottom, middle; // Hashes representing the boundaries and midpoint of the current sub-path
  int initial_position, path_length, bottom_level; // initial_position is the leaf position; path_length is the length of the current sub-path, bottom_level is the Merkle level of the bottom hash
  
  // Initialize the game
  function init (element e, int i, t_hash h_e, t_hash h_m, t_hash merkleRoot_hash, int merkleRoot_levels, address _A, address _B) {
    assert(hasStake(_A))
    assert(hasStake(_B))
    if (hash(e) != h_e) {
      setWinner(_B); // B wins if A lied about e's hash
    } else {
      A = _A;
      B = _B;
      top = merkleRoot_hash;
      middle = h_m
      bottom = h_e;
      initial_position = i;
      path_length = merkleRoot_levels - 1;
      bottom_level = 0; // Leaf level starts at 0 (root is at merkleRoot_levels - 1)
      activePlayer = B;
      emit(initMultistepMembership(top,middle,bottom,initial_position, path_length));  
  }

  // B selects the sub-path to challenge (top or bottom)
  function selectSubpath(bool select_bottom) {
    assert(sender == B && activePlayer == B);

    if (select_bottom) {
      // Select bottom half: move top to middle
      top = middle;
      path_length = floor(path_length / 2);
    } else {
      // Select top half: move bottom to middle
      bottom = middle;
      bottom_level += floor(path_length / 2);
      path_length = ceil(path_length / 2);
    }

    activePlayer = A;
    emit(challengedSubpath(top,bottom,path_lenght, bottom_level));
  }

  // A provides new middle hash for the selected sub-path
  function bisectSubpath(t_hash h_m) {
    assert(sender == A && activePlayer == A && path_length >= 2);
    middle = h_m;
    activePlayer = B;
    
    emit(bisectedSubpath(top,bottom,middle,path_lenght, bottom_level));
  }

  // A reveals the sibling hash when sub-path has length 1 (i.e., only two nodes)
  revealSibling(t_hash h_s) {
    assert(sender == A && activePlayer == A && path_length == 1);

    // Determine if bottom is the left or right child of top
    if (bit_at_level(initial_position, bottom_level) == 1) {
      // Sibling is left of bottom
      c = concatenate(h_s, bottom);
    } else {
      // Sibling is right of bottom
      c = concatenate(bottom, h_s);
    }

    // Verify if the parent hash (top) matches the hash of the concatenation
    if (hash(c) == top) {
      setWinner(A);
    } else {
      setWinner(B);
    }
  }
}  
\end{lstlisting}


\vfill

\subsection*{Contract Arbitrating One-step Membership Fraud-proof
  mechanism}
\begin{lstlisting}[language=Solidity,numbers=none]
contract one_step_membership {
  function oneStep(element e, int initial_position, t_hash h_e, t_hash merkleRoot_hash, int merkleRoot_levels, t_hash[] proof, address _A, address _B) {
    assert(hasStake(_A))
    assert(hasStake(_B))
    if (hash(e) != h_e) {
      setWinner(_B); // B wins if A lied about e's hash
    } else {
      // Recompute the Merkle root from the proof
      t_hash h = h_e;
      for(level=0; level<merkleRoot_levels; level++) {
        // Determine the order of concatenation based on the bit at the current level
        if (bit_at_level(i, level) == 1) {
          c = concatenate(proof[level], h);
        } else {
          c = concatenate(h, proof[level]);
        }
        // Hash the concatenated pair to move up one level in the Merkle tree
        h = hash(c)
      }
      // Check if the reconstructed root match the claimed Merkle root to decide the winner
      if (h == merkleRoot_hash){
        setWinner(A);
      } else {
        SetWinner(B);
      }
  }
}  
\end{lstlisting}


\vfill

\subsection*{Contract Arbitrating Certifiability Fraud-proof
  mechanism}
\begin{lstlisting}[language=Solidity,numbers=none]
contract certifiability {
  t_public_key[] arranger_replicas_pk;// Public keys belonging to the arranger replicas

  int S;// Minimum number of signatures required for certification
  
  // Verifies the aggregated batch signature is correct
  function verifyBatchSignature(t_batch_tag bt) {
    assert(hasStake(caller));
    t_signature sig = bt.sig.aggregater_sig;
    int signers_mask = bt.sig.signers_mask;
    t_public_key agg_pk = empty_pk;

    // Aggregate the public keys of all signers according to the mask
    for (i in signers_mask) {
      agg_pk = aggregate_pks(agg_pk, arranger_replicas_pk[i]);
    }

    // Verify the aggregated signature against the aggregated public key and the batch hash
    if (verify_signature(sig, acc_pk, bt.hash)) {
      removeStake(caller);
    } else {
      setWinner(caller);
      removeStake(bt, 'invalid_sig');
    }
  }

  // Check if enough signers participated in the batch signature
  function checkBatchSignatureLength(t_batch_tag bt) {
    assert(hasStake(caller));
    int signers_mask = bt.sig.signers_mask;

    // Verify that the number of signers is at least the required threshold S
    if (maskSize(signers_mask) >= S) {
      removeStake(caller);
    } else {
      setWinner(caller);
      removeStake(bt, 'invalid_sig');
    }
  }
}  
\end{lstlisting}


\vfill
\subsection{Contract Arbitrating Validity Fraud-proof
  mechanism}
\begin{lstlisting}[language=Solidity,numbers=none]
contract validity {
    // Initializes the validity game using a multi-step membership game
  function initMultiStep(t_batch_tag bt, element e, int i, t_hash h_e, t_hash h_m, address A, address B) {
    assert(isStaker(B,bt));
    assert(!valid(e)); //Ensure that the element is not valid transaction request
    multi_step_membership.init(e,i,h_e,h_m,bt.root.hash,
    bt.root.levels,A,B);   
  }

  // Initializes the validity game using a one-step membership game
  function initOneStep(t_batch_tag bt, element e, int i, t_hash h_e, t_hash[] proof, address A, address B) {
    assert(isStaker(B,bt));
    assert(!valid(e))//Ensure that the element is not valid transaction request
    one_step_membership.oneStep(e,i,h_e,bt.root.hash, bt.root.levels, proof);   
  }
  
}   

\end{lstlisting}


\vfill
\subsection*{Contract Arbitrating Integrity1 Fraud-proof
  mechanism}
\begin{lstlisting}[language=Solidity,numbers=none]
contract integrity1 {
  address A;                // Address of the player that initializes the game
  t_batch_tag bt;           // Challenged batch tag
  element e;                // Repeated element as claimed by A
  int[2] i;                 // Indices of element e as claimed by A 

  // Initialize the integrity1
  function init(t_batch_tag _bt, element _e, int[2] _i) {
    bt = _bt;              
    e = _e;                
    i = _i;                
    A = caller;
    emit(initIntegrity1(bt,e,i));
  }

  // Select a path to challenge using one-step membership game by claiming that at one of the positions there is a different element e2.
  function selectPathMembershipOneStep(int pos, element e2, t_hash h_e2, t_hash[] proof) {
    assert(isStaker(caller, bt));
    assert(pos == 0 || pos == 1);
    assert(e2 != e);// Only allow challenge if e2 is different from e
    // Initiate a one-step Merkle membership game
    one_step_membership.init(e2,i[pos],h_e2, proof,bt.root.hash,bt.root.levels,caller,A);
  }

  // Select a path to challenge using multistep-step membership game by claiming that at one of the positions there is a different element e2.
  function selectPathMembershipMultiStep(int pos, element e2, t_hash h_e2,t_hash h_m) {
    assert(isStaker(caller, bt));
    assert(pos == 0 || pos == 1);
    assert(e2 != e);// Only allow challenge if e2 is different from e
    // Initiate a multi-step membership game
    multi_step_membership.init(e2,i[pos],h_e2,h_m,bt.root.hash,bt.root.levels,caller,A);
  }
}

\end{lstlisting}


\vfill
\subsection*{Contract Arbitrating Integrity2 Fraud-proof
  mechanism}
\begin{lstlisting}[language=Solidity,numbers=none]
contract integrity2 {
  address A;                // Address of the player that initializes the game
  t_batch_tag[2] bt;        // Challenged batch tags
  element e;                // Repeated element as claimed by A
  int[2] i;                 // Indices of element e in each batch tag as claimed by A 

  // Initialize the integrity2
  function init(t_batch_tag[2] _bt, element _e, int[2] _i) {
    bt = _bt;              
    e = _e;                
    i = _i;                
    A = caller;
    emit(initIntegrity2(bt,e,i));
  }

  // Select a path to challenge using one-step membership game by claiming that at one of the positions there is a different element e2.
  function selectPathMembershipOneStep(int pos, element e2, t_hash h_e2, t_hash[] proof) {
    assert(isStaker(caller, bt));
    assert(pos == 0 || pos == 1);
    assert(e2 != e);// Only allow challenge if e2 is different from e
    // Initiate a one-step Merkle membership game
    one_step_membership.init(e2,i[pos],h_e2, proof,bt[pos].root.hash,bt[pos].root.levels,caller,A);
  }

  // Select a path to challenge using multistep-step membership game by claiming that at one of the positions there is a different element e2.
  function selectPathMembershipMultiStep(int pos, element e2, t_hash h_e2,t_hash h_m) {
    assert(isStaker(caller, bt));
    assert(pos == 0 || pos == 1);
    assert(e2 != e);// Only allow challenge if e2 is different from e
    // Initiate a multi-step membership game
    multi_step_membership.init(e2,i[pos],h_e2,h_m,bt[pos].root.hash,bt[pos].root.levels,caller,A);
  }
}

\end{lstlisting}


\vfill
\subsection*{Contract Arbitrating Unique Batch Fraud-proof
  mechanism}
\begin{lstlisting}[language=Solidity,numbers=none]
contract unique_batch {
  t_public_key[] arranger_replicas_pk; // Public keys belonging to the arranger replicas
  int S; // Minimum number of signatures required for certification

  // Verifies the two batch tags are not unique
  function uniqueness(t_batch_tag[2] bt) {
    assert(hasStake(caller));
    // Ensure the two batch tags refer to different batches 
    assert(bt[0].id != bt[1].id);
    // Ensure the two batch tags claim the hash
    assert(bt[0].root == bt[1].root);

    //Ensure that both batch tags are certified
    for (i = 0; i < 2; i++) {
      t_signature sig = bt[i].sig.aggregated_sig;
      int signers_mask = bt[i].sig.signers_mask;
      t_public_key agg_pk = empty_pk;
      for (i in signers_mask) {
        agg_pk = aggregate_pks(agg_pk, arranger_replicas_pk[i]);
      }
      assert(verify_signature(sig, agg_pk, bt[i].hash));
      assert(maskSize(signers_mask) >= S);
    }

    removeStake(bt[0], 'unique_batch');
    removeStake(bt[1], 'unique_batch');
    reward(caller, 'unique_batch');
    emit(replaceReplicas(bt[0].sig, bt[1].sig));
  }
}

\end{lstlisting}


\vfill
\subsection*{Contract Arbitrating Data Availability Fraud-proof
  mechanism}
\begin{lstlisting}[language=Solidity,numbers=none]
contract data-availability {
  address A; // A is the address of the player requesting the data
  t_batch_tag bt; //bt is the batch tag for which player A is
  requesting the data
  t_compress data; //claimed compressed version of the batch corresponding to bt

  int S;// Minimum number of signatures required for certification

  
  //Initialize a data-availability game requesting the batch of transaction requests corresponding to batch tag _bt
  function init(t_batch_tag _bt) {
    assert(hasStake(caller));
    bt = _bt;
    A = caller;
    emit(initDataAvailability(bt));
  }

  //Provide claimed compressed version of the batch corresponding to bt
  function postCompressed(t_compress _data) {
    assert(certified(data,S)); //Verifies that the compressed data is certified
    data = _data;
    emit(compressedBatch(bt.root, data));
  }

  // Initiate game that bisects the execution trace of program decompress_and_hash 
  function initDecompressAndHash(t_state init, t_state final, int n,
  t_state middle) {
    decompress_and_hash_game.init(bt.root, data, init,final,n,middle);
  }
  
}  
\end{lstlisting}


\subsection{Honest Strategies}

We present now the strategies of the honest player for games that are
not one-step.
In one-step games, the strategy is trivial: the honest player simply
reveals the required data in the single interaction step, with no
further decisions and the L1 arbitrating contract declares immediate
victory.

For multi-step games, we specify for each player their
\emph{knowledge}---the information available to them--- and the
decision that the player must take at each stage of the game.

\subsection*{Honest strategies for players in the multistep membership
  \<fpm>}

The following pseudocode describes the honest strategies of players \Att
and \Btt in the multistep membership \<fp>
(see Section~\ref{sec:fraudproofs:membership}).

Both players \Att and \Btt know all elements in the Merkle tree
$\<mt>$.

Player \Att initiates the game by claiming that an element $e$ is part
of the Merkle tree $\<mt>$ at a specific position $i$.
During the game, \Att must respond to challenges by revealing nodes
along the Merkle proof of the membership of $e$.

If the claim is incorrect (i.e. \Att is dishonest), an honest player \Btt
challenges the claim by asserting that the element $e$ does not appear
at the claimed position.
Player \Btt examines the hashes provided by \Att and chooses the part
of the path to continue the game: the top subpath is selected when
\Att’s proposed hash does not match the expected hash from the Merkle
tree root; otherwise, the bottom subpath is chosen.
In this way, when \Att claim is incorrect, \Btt keeps the invariant
that the top node in the challenged path is part of the path from the
\(i\)-th leaf to the root of \(mt\), but not the bottom node.

 \begin{algorithm}[H]
  \caption*{\small Player \Att honest strategy}
   \small
   \begin{algorithmic}[1]
     \State  \textbf{Knowledge:}  Merkle Tree $\<mt>$
     \State \textbf{Initial claim:} $(e,i,\<mt>.\<leaves>[i],h_m,\<mt>.\<root>.\<hash>,
     mt.root.levels$) such that\\
     \hspace{2em} $\hash(e) = \<mt>.\<leaves>[i]$\\
     \hspace{2em} $h_m$ is the hash in the node in the middle of the
     path from the $i$-th leaf to the root in $\<mt>$
     \Upon{$\<SubpathChallenged>(\<top>, \<bottom>, \<middle>, \<pathlength>, \<bottomlevel>)$} 
    \If{$\<pathlength> >= 2$} 
      \State $\<middlelevel> = \<bottomlevel> + floor(\<pathlength> / 2);$
      \State $\<indexatlevel>d = \<bitatpos>(i,\<middlelevel>)$
      \State $h\_m = \<mt>.\<getHashAt>(\<middlelevel>, \<indexatlevel>);$
      \State \textbf{invoke} $multi\_step\_membership.\<bisectSubpath>(h_m);$
    \Else 
       \State $\<siblingindex> = \<bitatpos>(i,\<bottomlevel> \oplus 1);$
       \State  $h_s = \<mt>.\<getHashAt>(\<bottomlevel>, \<siblingindex>);$
             \State \textbf{invoke} $multi\_step\_membership.\<revealSibling>(h_s);$
      \EndIf
      \EndUpon
    \end{algorithmic}
  \end{algorithm}

 \begin{algorithm}[H]
  \caption*{\small Player \Btt honest strategy}
   \small
   \begin{algorithmic}[1]
     \State  \textbf{Knowledge:}  Merkle Tree $\<mt>$
     \State \textbf{Disputed claim:} $(e,i,\hash(e),h_m,\<mt>.\<root>.\<hash>,
     mt.\<root>.\<levels>)$\\
     such that\\
     \hspace{2em} $\hash(e) \neq \<mt>.\<leaves>[i]$
     \Function{\<SelectSubpath>}{$\bottomlevel,\pathlength, h_m$}
        \State $\<middlelevel> = \<bottomlevel> + floor(\<pathlength> / 2)$
     \State $\<indexatlevel> = \<bitatpos>(i, \<middlelevel>)$
    \State $\<expectedmiddlehash> = \<mt>.\<getHashAt>(\<middlelevel>, \<indexatlevel>)$
    \If{$\<expectedmiddlehash> ==  h_m$}
     \State \textbf{invoke} $multi\_step\_membership.\<selectSubpath>(\<false>)$
     \Comment{$h_m$ is correct, go up}
    \Else
     \State \textbf{invoke} $multi\_step\_membership.\<selectSubpath>(\<true>)$
     \Comment{$h_m$ is wrong, go down}
   \EndIf
   \EndFunction
   \Upon{$\<initMultistepMembership>(\<mt>.\<root>.\<hash>,h_e,h_m,i,
     mt.\<root>.\<levels>)$} 
    \State $\<selectSubpath>(0, \<pathlength>, h_m)$
  \EndUpon
  \Upon{$\<SubpathBisected>(\<top>, \<bottom>, \<middle>, \<pathlength>, \<bottomlevel>)$}
   \State $\<selectSubpath>(\<bottomlevel>, \<pathlength>, \<middle>)$
  \EndUpon

    \end{algorithmic}
  \end{algorithm}
  

\subsection*{Honest strategies for players in the validity \<fpm>}
The validity \<fpm> is the same as the multistep membership \<fp>,
except that at the beginning, it checks whether the element is a valid
transaction request.
Therefore, the honest strategy for each player is the same as in the
multistep membership \<fp>.

\subsection*{Honest strategies for players in the integrity \<fpms>}

The following pseudocode defines the honest strategies for players
\Att and \Btt in the integrity 1 and integrity 2 \<fps>
(see Section~\ref{sec:incentives:challenges}).

In the integrity 1 \<fp>, both players know all transaction requests
corresponding to a given batch tag $\<bt>$.
Player \Att initiates the game by claiming that an element $e$ appears
in two different positions in $\<bt>$.
If the claim is challenged, the game continues as a multistep
membership game.
In this case, player \Att follows the strategy of \Btt from the
multistep membership game.
If \Att’s claim is incorrect (that is, \Att is dishonest), an honest
player \Btt disputes the claim, asserting that at least one of the
positions contains an element different from $e$.
Then, the game continues as a multistep membership game, with \Btt now
following the strategy of \Att from that game.
 \begin{algorithm}[H]
  \caption*{\small Player \Att honest strategy}
   \small
   \begin{algorithmic}[1]
     \State \textbf{Knowledge:}  Merkle Tree $\<mt>$ such that
     $\<bt>.\<root> = \<mt>.\<root>$
     \State \textbf{Initial claim:} $(\<bt>,e,i)$ such that\\
     \hspace{2em} $\hash(e) = \<mt>.\<leaves>[i[0]]$\\
     \hspace{2em} $\hash(e) = \<mt>.\<leaves>[i[1]]$
     \Upon{$\<initMultistepMembership>(\<mt>.\<root>.\<hash>,h_{e2},h_m,i2,
     mt.\<root>.\<levels>)$ with $e2 \neq e$ and $i2 \in i$}
     \State\textbf{Follow strategy of Player \Btt in membership
       multistep \<fp>.}
     \EndUpon
    \end{algorithmic}
  \end{algorithm}

 \begin{algorithm}[H]
  \caption*{\small Player \Btt honest strategy}
   \small
   \begin{algorithmic}[1]
     \State \textbf{Knowledge:}  Merkle Tree $\<mt>$ such that
     $\<bt>.\<root> = \<mt>.\<root>$
     \State \textbf{Disputed claim:} $(\<bt>,e,i)$ such that\\
     \hspace{2em} $\hash(e) \neq \<mt>.\<leaves>[i[0]]$ or\\
     \hspace{2em} $\hash(e) \neq \<mt>.\<leaves>[i[1]]$
  \Upon{$\<initIntegrity1>(\<bt>,e,i)$}
     \If{$\hash(e) \neq \<mt>.\<leaves>[i[0]]$}  
     \State $\<selectPath>(0,\<mt>.\<leaves>[i[0]],\<mt>.\<getHashInMiddlePath>(i))$
   \Else
     \State $\<selectPath>(1,\<mt>.\<leaves>[i[1]],\<mt>.\<getHashInMiddlePath>(i))$
     \EndIf
  \EndUpon
  \Upon{$\<initMultistepMembership>(...)$}
     \State\textbf{Follow strategy of Player \Att in membership
       multistep \<fp>.}
  \EndUpon

    \end{algorithmic}
  \end{algorithm}
  

The main difference between the integrity 1 \<fp> and integrity 2
\<fp>, is that in later the duplicated element appear in two different
batch tags.
Therefore, the difference for the players is that they must known the
content of both batch tags, but the strategies are essentially the same.

 \begin{algorithm}[H]
  \caption*{\small Player \Att honest strategy}
   \small
   \begin{algorithmic}[1]
     \State \textbf{Knowledge:} Merkle Trees $\<mt>_0$ and $\<mt>_1$ such that
     $\<bt>_0.\<root> = \<mt>_0.\<root>$ $\<bt>_1.\<root> = \<mt>_1.\<root>$
     \State \textbf{Initial claim:} $(\<bt>,e,i)$ such that\\
     \hspace{2em} $\hash(e) = \<mt>_{i[0]}.\<leaves>[i[0]]$\\
     \hspace{2em} $\hash(e) = \<mt>_{i[1]}.\<leaves>[i[1]]$
     \Upon{$\<initMultistepMembership>(\<mt>_j.\<root>.\<hash>,h_{e2},h_m,i2,
     mt.\<root>.\<levels>)$ with $e2 \neq e$ and $i2 = i[j]$}
     \State\textbf{Follow strategy of Player \Btt in membership
       multistep \<fp>.}
     \EndUpon
    \end{algorithmic}
  \end{algorithm}

 \begin{algorithm}[H]
  \caption*{\small Player \Btt honest strategy}
   \small
   \begin{algorithmic}[1]
     \State  \textbf{Knowledge:} Merkle Trees $\<mt>_0$ and $\<mt>_1$ such that
     $\<bt>_0.\<root> = \<mt>_0.\<root>$ $\<bt>_1.\<root> =
     \<mt>_1.\<root>$
     
     \State \textbf{Disputed claim:} $(\<bt>,e,i)$ such that\\
     \hspace{2em} $\hash(e) \neq \<mt>_{i[0]}.\<leaves>[i[0]]$ or\\
     \hspace{2em} $\hash(e) \neq \<mt>_{i[1]}.\<leaves>[i[1]]$
  \Upon{$\<initIntegrity1>(\<bt>,e,i)$}
     \If{$\hash(e) \neq \<mt>_{i[0]}.\<leaves>[i[0]]$}  
     \State $\<selectPath>(0,\<mt>_{i[0]}.\<leaves>[i[0]],\<mt>.\<getHashInMiddlePath>(i))$
   \Else
     \State $\<selectPath>(1,\<mt>_{i[1]}.\<leaves>[i[1]],\<mt>.\<getHashInMiddlePath>(i))$
     \EndIf
  \EndUpon
  \Upon{$\<initMultistepMembership>(...)$}
     \State\textbf{Follow strategy of Player \Att in membership
       multistep \<fp>.}
  \EndUpon

    \end{algorithmic}
  \end{algorithm}
  

\subsection*{Honest strategies for players in the data availability \<fpm>}

The following pseudocode defines the honest strategies of players \Att
and \Btt in the data availability \<fp>
(see Section~\ref{sec:fraudproofs:data-availability}).

Player \Att initiates the data availability \<fp> to learn the batch
of transaction requests associated with a given batch tag $\<bt>$.
If the compressed batch posted by \Btt does not correspond to $\<bt>$,
then \Att continues the interaction through the
\emph{decompress-and-hash \<fpm>}.

An honest player \Btt knows a certified compressed batch $\<data>$
corresponding to the batch tag $\<bt>$.
\Btt’s strategy consists of publishing $\<data>$ when the game is
initiated and following the \emph{decompress-and-hash \<fpm>} strategy
if challenged.

 \begin{algorithm}[H]
  \caption*{\small Player \Att honest strategy}
   \small
   \begin{algorithmic}[1]
     \State \textbf{Knowledge:} $\<bt>$
     \State \textbf{Initial claim:} $\<initDataAvailability>(bt)$
     \Upon{$\<compressedBatch>(\<bt>.\<root>, \<data>)$}
       \If{$\<decompressAndHash>(\<data>) \neq \<bt>.\<root>$}
         \State
         $\<initDecompressAndHash>(P(\<data>,\<bt>.\<root>).\<initState>,$
         \State \hspace{3em} $P(\<data>,\<bt>.\<root>).\<finalState>,P(\<data>,\<bt>.\<root>).\<midState>)$
       \EndIf
      \EndUpon
     \end{algorithmic}
  \end{algorithm}

 \begin{algorithm}[H]
  \caption*{\small Player \Btt honest strategy}
   \small
   \begin{algorithmic}[1]
     \State \textbf{Knowledge:}  Certified compressed batch
     $\<data>$ such that $\<decompressAndHash>(\<data>) = \<bt>.\<root>$
     \State \textbf{Disputed claim:} $(\<bt>)$
  \Upon{$\<initDataAvailability>(\<bt>)$}
     \State $\<postCompressed>(\<data>)$
  \EndUpon
    \end{algorithmic}
  \end{algorithm}


Finally, we do not present the strategy of the
\emph{decompress-and-hash \<fpm>} as it consists of bisecting the
execution trace of a program and it is equivalent to the one presented
in RDoC~\cite{canetti2011practical, canetti2013refereed}


\newpage
\hypertarget{app:figures}{}
\addtextrevision{\section{Fraud Proofs: Figures}}
\label{app:figs}

This appendix includes figures that represent graphically the \<fp>
games introduced in
Section~\ref{sec:fraudproofs}.
We only include figures for \<fps> that involve more than one step.

Positions are represented as ovals and sub-games with squares.
The initial position of the game is \inlinefrugal{init}, marked with a
double oval.
Arrows represent moves, whose destination can be other positions in
the game or other games in which case the initial claim and the role
of challengers and defender is provided.
Arrows with no destination represent disputes that are resolved
immediately by the L1 contract.
Orange arrows represent enabled moves for the \inlinefrugal{proposer}
(player that initiates the game), while red dashed arrows indicate
enabled moves for the \inlinefrugal{challenger}.

\subsection*{Multistep Membership \<fpm>}
\begin{figure}[h!]
  \centering
  \includegraphics[width=0.6\textwidth]{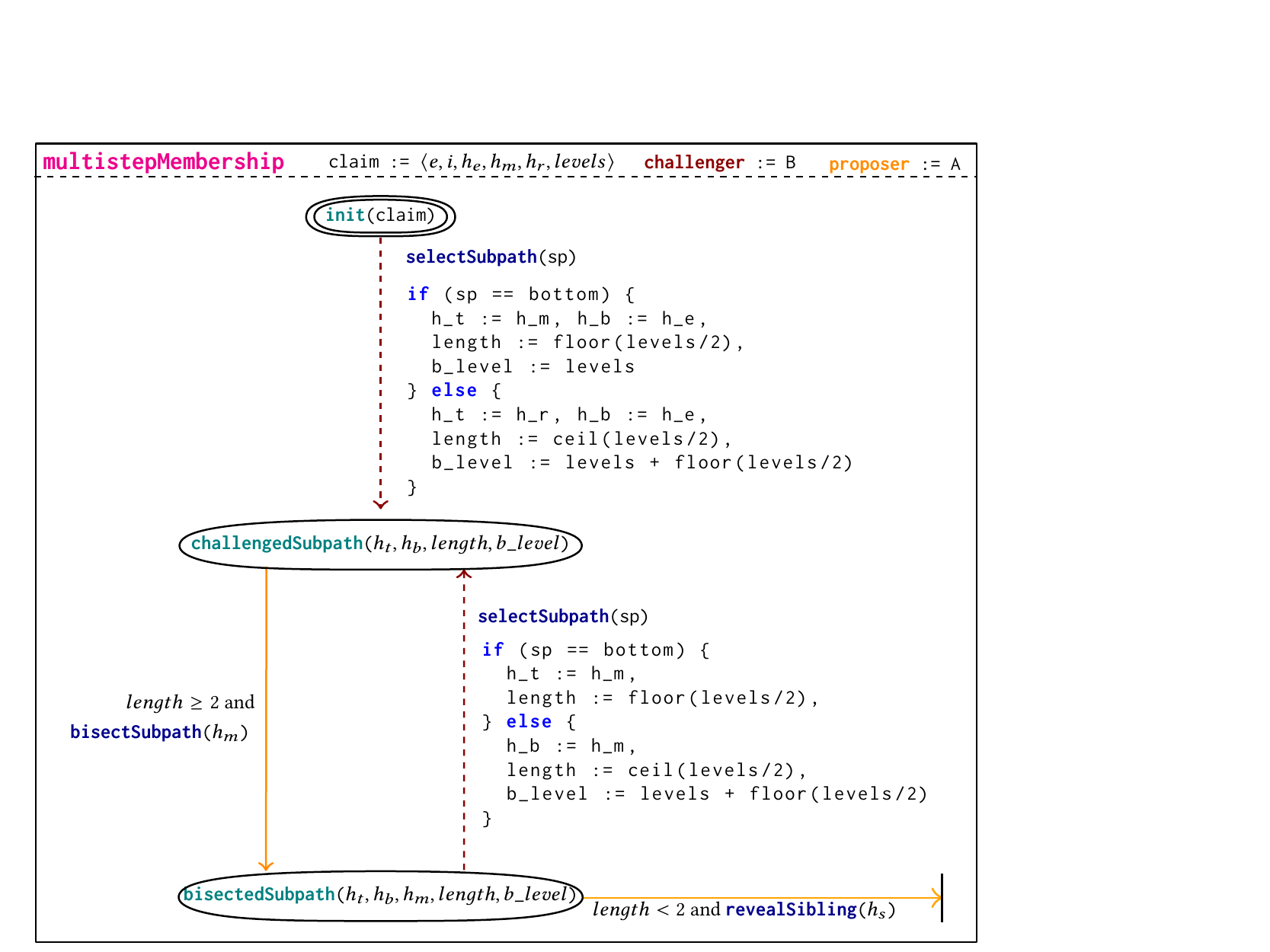}
    \caption{Positions and moves allowed in multistep membership FP.}
    \label{fig:multistepMembership}
\end{figure}

\subsection*{Validity \<fpm>}
The validity \<fpm> is the same as the multistep membership \<fp>,
except that at the beginning, it checks whether the element is a valid
transaction request.
Therefore, its graphical representation is the same as that of the
multistep membership game, as shown in
Fig.~\ref{fig:multistepMembership}.

\subsection*{Integrity 1 \<fpm>}
\begin{figure}[h!]
  \centering
  \includegraphics[width=0.5\textwidth]{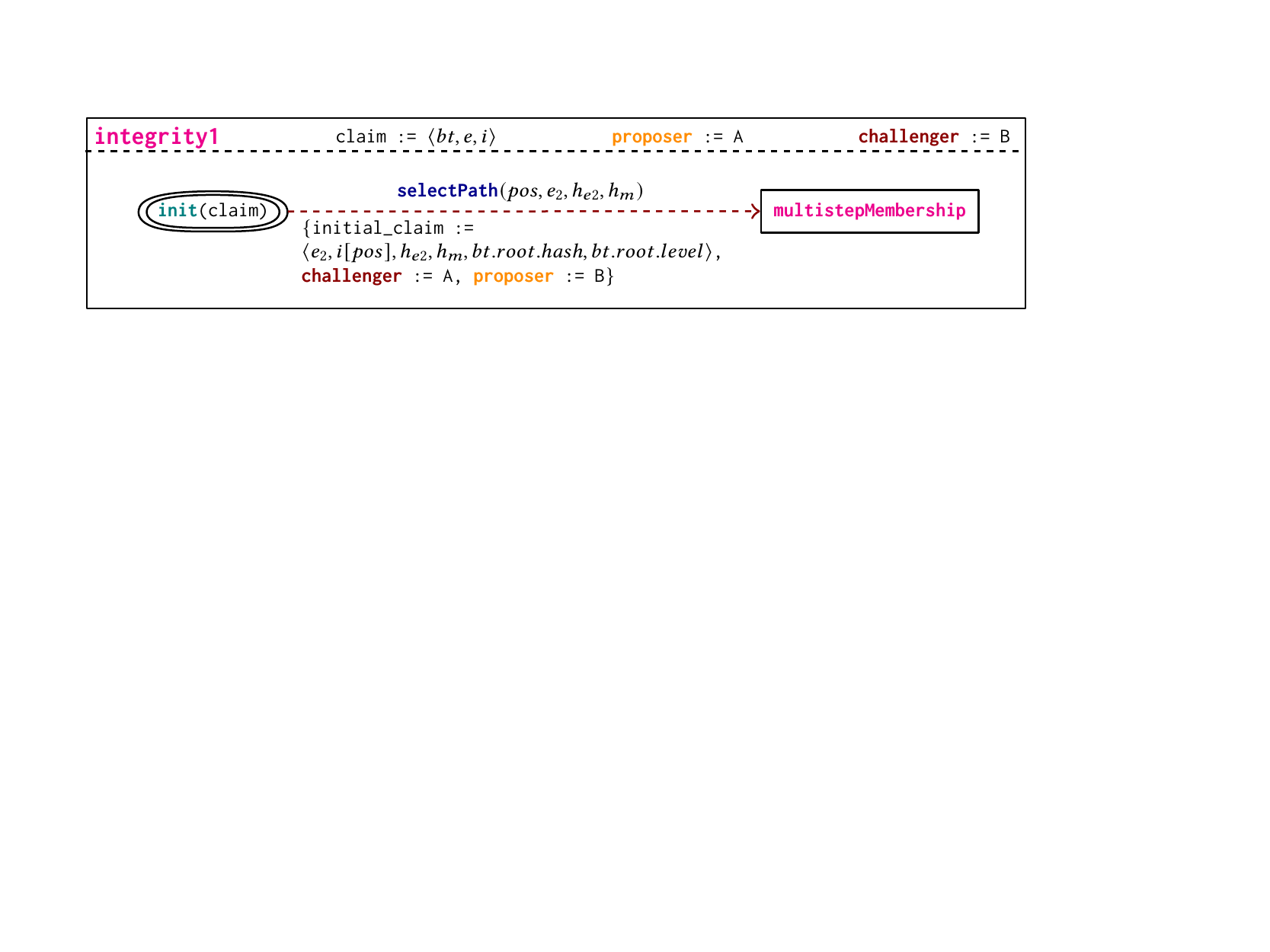}
    \caption{Positions and moves allowed in the integrity 1 FP.}
    \label{fig:integrity1}
\end{figure}

\subsection*{Integrity 2 \<fpm>}
\begin{figure}[h!]
  \centering
  \includegraphics[width=0.5\textwidth]{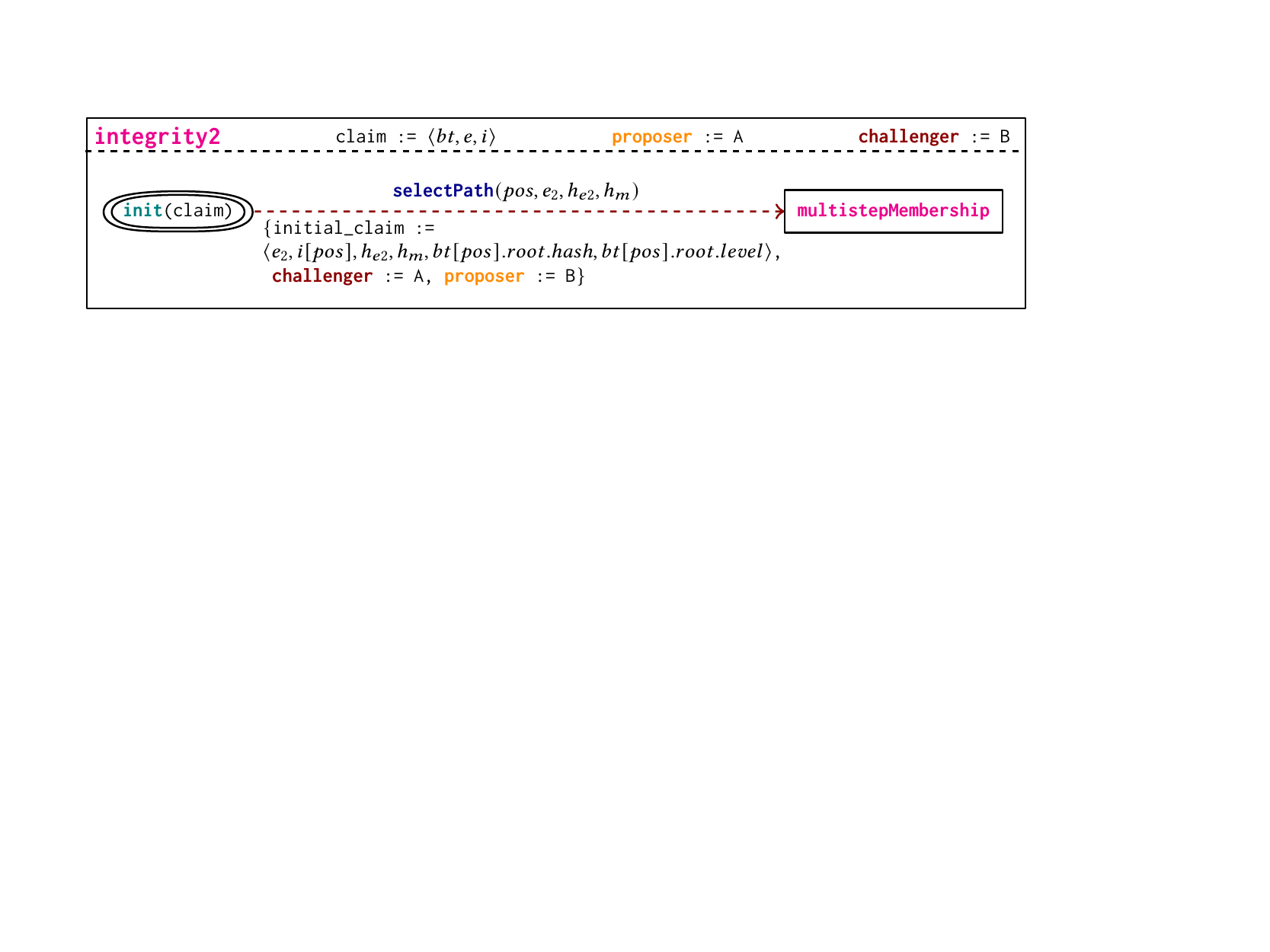}
    \caption{Positions and moves allowed in the integrity 2 FP.}
    \label{fig:integrity2}
\end{figure}

\subsection*{Data Availability \<fpm>}
\begin{figure}[h!]
  \centering
  \includegraphics[width=0.5\textwidth]{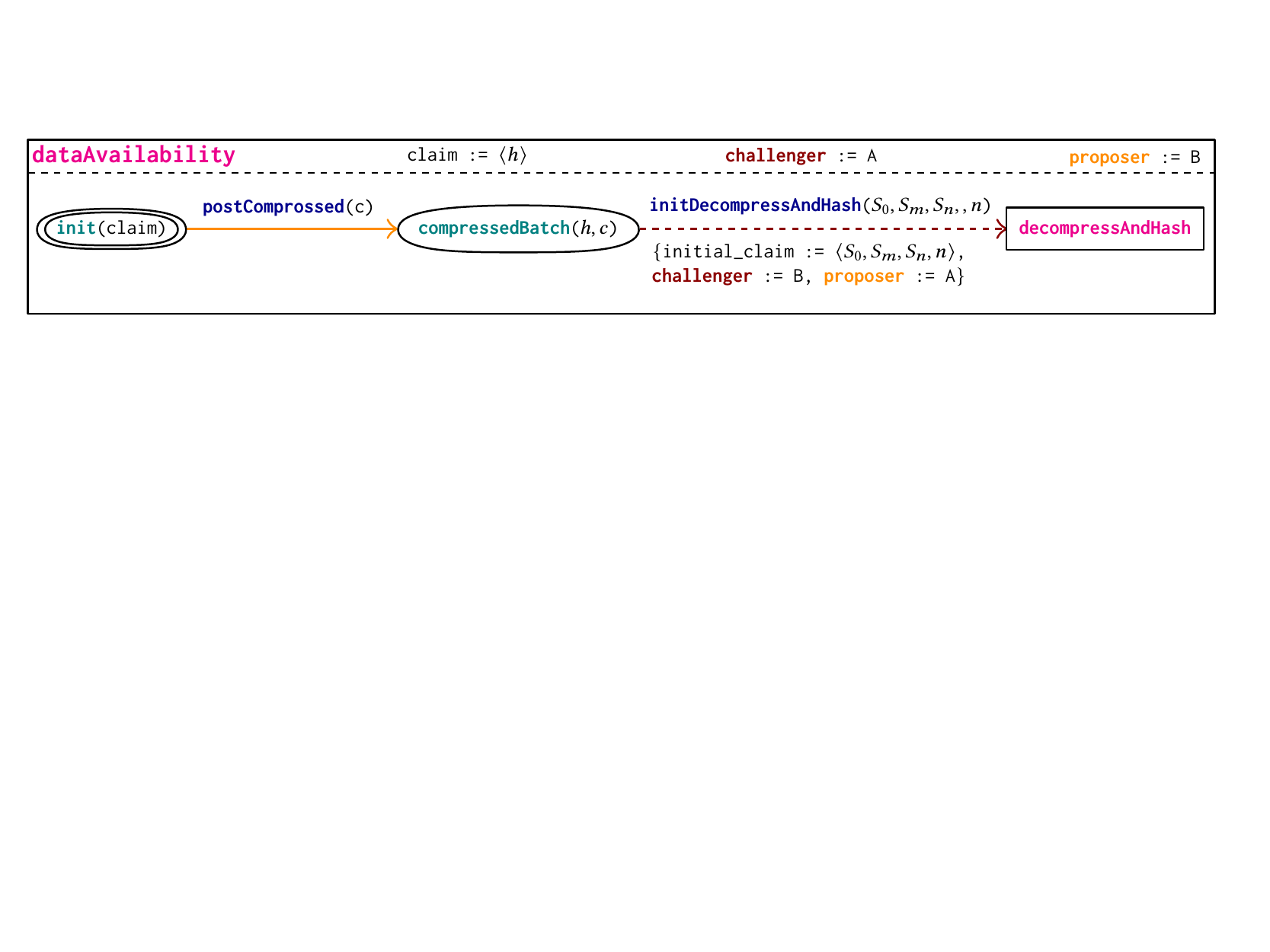}
    \caption{Positions and moves allowed in the data availability FP.}
    \label{fig:da}
\end{figure}


\newpage
\section{Fraud-proof Mechanisms and Incentives - Proofs}%
\label{app:proofs}

We now provide the proof of the propositions presented in
Sections~\ref{sec:fraudproofs} and~\ref{sec:incentives}.

\DAC*
\begin{proof}
Let \(t\) a batch tag posted in L1 that has not consolidate yet, and
\Att an agent that initiates a data availability against \(t\).

There are three possible responses from the agent staking in \(t\):
\begin{enumerate}
\item all staking agents remain silent or try to post data that is
  rejected by the L1 data availability contract. In this case, when
  their time run up, they all lose their stake and batch tag \(t\) is
  discarded,
\item an staking agent post data \(d\) correctly signed but that does not
  correspond to the compression of the batch of transaction requests
  associated with \(t\).
  In this case, when agent \Att try to validate the data will obtain
  one of the following outcomes: (1) the decompression fails, (2) the
  decompressed data does not correspond to a list of elements, or (3)
  the Merkle root does not match the hash in \(t\).
  \Att proceed to the decompress-and-hash \<fpm>, claiming that the
  the arbitrated program with data \(d\) as input fails with the
  outcome obtained will ruining it offchain.
  By corollary~\ref{corollary:rdoc}, agent \Att can win the \<fpm>.
\item an agent posts the compressed version of the batch of transaction
  requests associated to \(t\) correctly signed. In this case, agent
  \Att performs the decompression and validation of the batch
  offchain, and does not proceed to the next \<fpm>. Agent \Att loses
  its stake, but learn the transaction requests corresponding to
  \(t\).
\end{enumerate}

In the first two cases, batch tag \(t\) is discarded.
In the last case, agent \Att (and everyone observing the L1) learn the
transaction requests associated with \(t\).

\end{proof}

\legality*
\begin{proof}
  The proof proceed by cases, depending on the condition violated by
  batch tag \(t\):

  \begin{itemize}
  \item Batch tag \(t\) is not certified. In this case \Att can play
    the certifiability \<fp>. This \<fpm> consists in checking the
    conditions necessary for a batch tag to be considered certified,
    thus \Att wins it.
  \item Batch tag \(t\) contains an invalid element \(e\). \Att knows
    element \(e\) and can initiate the validity \<fp> submitting
    \(e\), which will pass the check, and then move to the membership
    \<fp>.  Since \Att knows all elements in \(t\), by
    proposition~\ref{prop:membership}, \Att can win the membership
    \<fp>.
  \item Batch tag \(t\) contains a duplicated element \(e\). In this
    case, \Att know both positions in which element \(e\) appears in
    the batch corresponding to \(t\), so it can initiate the two
    membership games corresponding to the integrity \<fpm> 1. 
    Again, since, \Att knows all elements in \(t\), by
    proposition~\ref{prop:membership}, \Att can win both membership
    \<fpms>.
  \item Batch tag \(t\) contains an element \(e\) that appears in a
    previous batch tag \(t'\).  In this case, \Att know the positions
    in which element \(e\) appears in the batches corresponding to
    \(t\) and \(t'\), so it can initiate the two membership games
    corresponding to the integrity \<fpm> 2.
    Since \Att knows all elements in \(t\) and \(t'\), by
    proposition~\ref{prop:membership}, \Att can win both membership
    \<fp>.
  \end{itemize}
In all cases, \Att has winning strategy and can discard batch tag
\(t\).
\end{proof}

\uniqueness*
\begin{proof}
  An arranger violates property \PrUniqueBatch by posting two
  certified batch tags, \(t_1\) and \(t_2\), with the same identifier
  but different hash.
  In this case, before the second batch tag consolidates, any agent
  can use the unique batch \<fpm> by submitting \(t_1\) and \(t_2\) to
  the unique batch contract.
  The batches passes the validation check performed by the L1
  contract.
  Consequently, the violation of property \PrUniqueBatch is proved in
  L1, and it is established that the replicas involved should be
  replaced.
\end{proof}

\costs*
\begin{proof}
  We analyze the minimum budget \(B\) based on the conditions violated by
  \(t\):
\begin{compactitem}
\item If \(t\) only violates the certifiability condition, \Att needs
  to play the certifiability \<fp>, which requires a
  $b \geq s_{\KWD{certifiability}} + \CC{\KWD{certifiability}}$.
\item If \(t\) only violates condition validity, then
  \(s_{\KWD{data}} + \CC{\KWD{data}} + s_{\KWD{validity}} +
  \CC{\KWD{validity}}\) tokens are necessary and sufficient.
  \Att needs to know the invalid element in \(t\).
  As all stakers in \(t\) can refuse to participate in the translation
  protocol, \Att might need to play the data \<fp>, which requires
  \(s_{\KWD{data}} + \CC{\KWD{data}}\) tokens.
  Once \Att learned all transactions, \Att needs at least
  \(\s_{\KWD{validity}}+ \CC{\KWD{validity}}\) tokens to play the
  validity \<fp> and make stakers lose their stake.
  Then, in this case, \Att needs at least
  \(s_{\KWD{data}} + \CC{\KWD{data}} + s_{\KWD{validity}} +
  \CC{\KWD{validity}}\) tokens.
  To see that
  \(s_{\KWD{data}} + \CC{\KWD{data}} + s_{\KWD{validity}} +
  \CC{\KWD{validity}}\) tokens are sufficient, \Att can first (1)
  challenge all \(t\) stakers to the data \<fp> until either \Att
  learns all transactions in \(t\) or there no more stakers in \(t\)
  and it is discarded, and then (2) challenge remaining stakers one by
  one to the validity \<fpm>.
  For the first phase, \(s_{\KWD{data}} + \CC{\KWD{data}}\) tokens are
  enough.
  For the second phase, \(s_{\KWD{validity}}+ \CC{\KWD{validity}}\)
  tokens are enough.
  Every time \Att wins a validity \<fp> its budget increases because
  \Att spends \(s_{\KWD{validity}} + \CC{\KWD{validity}}\) tokens but
  receives its stake back and a reward of \(\CR{\KWD{validity}}\)
  tokens, with \(\CR{\KWD{validity}} > \CC{\KWD{validity}}\).
  Then, if \Att starts with at least
  \(s_{\KWD{validity}} + \CC{\KWD{validity}}\) and \Att wins the
  validity \<fp>, \Att will have enough token to continue playing.
  Then, to complete both phases
  \(s_{\KWD{data}} + \CC{\KWD{data}} + s_{\KWD{validity}} +
  \CC{\KWD{validity}}\) tokens are sufficient.
\item If \(t\) only violates \PrIntegrityOne or \PrIntegrityTwo
  condition, \Att must play the integrity \<fp> 1 or 2 but needs to
  learn all transactions in \(t\) before.
  With a reasoning analogous to the previous case, the minimum budget
  needed in this case is
  $
  s_{\KWD{data}} + \CC{\KWD{data}} + s_{\KWD{integrity}} +
  \CC{\KWD{integrity}}$
\item If a batch violates multiple properties, \Att chooses the
  cheapest way to remove it.
\end{compactitem}

Consequently, \Att can remove any illegal or unavailable batch tag
with a budget of at least
\(max(s_{\KWD{certifability}}+\CC{\KWD{certifability}},
 s_{\KWD{data}} + \CC{\KWD{data}} +
  max(s_{\KWD{validity}} + \CC{\KWD{validity}},
  s_{\KWD{integrity1}} + \CC{\KWD{integrity1}},
  s_{\KWD{integrity2}} + \CC{\KWD{integrity2}}))\) L1 tokens.
\end{proof}

\end{document}
